\documentclass[11pt]{article}
\usepackage[margin=1.25in]{geometry}
\usepackage{amssymb,amsfonts,amsmath,amsthm,amscd,dsfont,mathrsfs,bbold,pifont}
\usepackage{blkarray}
\usepackage{graphicx,float,psfrag,epsfig,color}
\usepackage{comment}
\usepackage{subcaption}
\usepackage{multicol}

\usepackage{algorithm}
\usepackage[noend]{algpseudocode}

\makeatletter
\def\BState{\State\hskip-\ALG@thistlm}
\makeatother

	\usepackage{microtype}
	\usepackage[pdftex,pagebackref=true,colorlinks]{hyperref}
	\hypersetup{linkcolor=[rgb]{.7,0,0}}
	\hypersetup{citecolor=[rgb]{0,.7,0}}
	\hypersetup{urlcolor=[rgb]{.7,0,.7}}

\footnotesep 14pt
\floatsep 27pt plus 2pt minus 4pt      
\textfloatsep 40pt plus 2pt minus 4pt
\intextsep 27pt plus 4pt minus 4pt


\newcommand{\Cnote}[1]{\begin{center}\fbox{\begin{minipage}{35em}
                        {{\bf Colin Note:} {#1}} \end{minipage}}\end{center}}

\newcommand\independent{\protect\mathpalette{\protect\independent}{\perp}} 
\def\independent#1#2{\mathrel{\rlap{$#1#2$}\mkern2mu{#1#2}}} 

\newcommand{\diam}{\mathrm{diam}}
\newcommand{\hbm}{\mathrm{HBM}}
\newcommand{\gbm}{\mathrm{GBM}}
\newcommand{\ssbm}{\mathrm{SSBM}}

\newcommand{\mR}{\mathbb{R}}

\newcommand{\pp}{\mathbb{P}}
\newcommand{\E}{\mathbb{E}}

\newcommand{\e}{\varepsilon}

\newcommand{\1}{\mathbb{1}}
\newcommand{\snr}{\mathrm{SNR}}
\newcommand{\sbm}{\mathrm{SBM}}

\usepackage{braket}

\newcommand{\abs}[1]{\left\vert #1 \right\vert}

\newcommand{\parens}[1]{\left( #1 \right)}

\newtheorem{proposition}{Proposition}
\newtheorem{theorem}{Theorem}

\newtheorem{lemma}{Lemma}
\newtheorem{corollary}{Corollary}
\newtheorem{definition}{Definition}
\newtheorem{remark}{Remark}

\newtheorem{conjecture}{Conjecture}


\hyphenation{op-tical net-works semi-conduc-tor}


\begin{document}

\title{Graph powering and spectral robustness}

\author{Emmanuel Abbe\thanks{This research was partly supported by the NSF CAREER Award CCF-1552131 and NSF CSOI CCF-0939370} $^{1,3}$, Enric Boix$^{1,2}$, Peter Ralli$^{1}$, Colin Sandon$^{1,2}$\\
\vspace{-.3cm}\\
$^1$Princeton University \qquad $^2$MIT \qquad $^3$EPFL}


\date{}
\maketitle

\begin{abstract}
Spectral algorithms, such as principal component analysis and spectral clustering, typically require careful data transformations to be effective: upon observing a matrix $A$, one may look at the spectrum of $\psi(A)$ for a properly chosen $\psi$.
The issue is that the spectrum of $A$ might be contaminated by non-informational top eigenvalues, e.g., due to scale` variations in the data, and the application of $\psi$ aims to remove these. 

Designing a good functional $\psi$ (and establishing what good means) is often challenging and model dependent. 
This paper proposes a simple and generic construction for sparse graphs, $$\psi(A) = \1((I+A)^r \ge1),$$ where $A$ denotes the adjacency matrix and $r$ is an integer (less than the graph diameter). 
This produces a graph connecting vertices from the original graph that are within distance $r$, and is referred to as graph powering. It is shown that graph powering regularizes the graph and decontaminates its spectrum in the following sense:
(i) If the graph is drawn from the sparse Erd\H{o}s-R\'enyi ensemble, which has no spectral gap, it is shown that graph powering produces a ``maximal'' spectral gap, with the latter justified by establishing an Alon-Boppana result for powered graphs; (ii) If the graph is drawn from the sparse SBM, graph powering is shown to achieve the fundamental limit for weak recovery (the KS threshold) similarly to \cite{massoulie-STOC}, settling an open problem therein. Further, graph powering is shown to be significantly more robust to tangles and cliques than previous spectral algorithms based on self-avoiding or nonbacktracking walk counts \cite{massoulie-STOC,Mossel_SBM2,bordenave,colin3}. This is illustrated on a geometric block model that is dense in cliques. 


\end{abstract}

\newpage

\tableofcontents

\newpage

\section{Introduction}
\subsection{Spectral data analysis and robustness}
A large variety of algorithms exploit the spectrum of graph\footnote{Possibly graphs with edge weights.} operators.
This concerns most methods of unsupervised learning that relies on spectral decomposition, e.g., principal component analysis, clustering or linear embeddings. 
The common base of spectral algorithms is to first obtain an Euclidean embedding of the data (which may a priori have no relation to a metric space), and then use this embedding for further tasks. Namely, given a $n$-vertex graph $G$ with adjacency matrix $A_G$,
\begin{enumerate}
\item Construct an operator $M_G=\psi(A_G)$;
\item Take the top $k$ eigenvectors of $M_G$ to create the $n \times k$ matrix $\Psi_k$, and use $\Psi_k(i)$ as the $k$-dimensional embedding for the data point $i \in [n]$. 
\end{enumerate} 
In clusterings, one typically looks for $k$ much smaller than $n$, cutting off potentially significant matrix norm, and running the $k$-means algorithm on the embedded points to obtain clusters \cite{ulrike}. In word embeddings, one may preserve almost all the matrix norm in order to approximate each word co-occurrence \cite{word_survey}.

Popular choices for $M_G$ (depending on applications) are the adjacency matrix $A$, the Laplacian $D - A$, the normalized Laplacian $I-D^{-1/2}AD^{-1/2}$ (or the random walk Laplacian $D^{-1}A$), and various regularized versions of the above using trimming/thresholding operations, or smoothing/completion operations. More recently, operators based on self-avoiding or nonbacktracking walks have become popular in the context of block models \cite{massoulie-STOC,Mossel_SBM2,bordenave,colin3}; see further discussion of these below. 
The function $\psi$ can also take specific forms such as the log-PMI function in word embeddings that uses both normalization and the application of the logarithm entrywise \cite{church,word_survey,pramod_word}, or more sophisticated (non-positive) forms such as in phase retrieval \cite{marco_phase}.  
A long list of other forms is omitted here. 

Why is it important to apply a transformation $\psi$? Consider graph clustering; if one takes $M_G=A$ directly, the top eigenvector is likely to localize on the largest degree vertex of the graph, which is not the type of macroscopic structure that clustering aims to extract.
This is a well-known issue, illustrated in Figure \ref{high-degree} using the spectral algorithm on $A$ to cluster the SBM in the sparse regime. Moreover, pruning (also called trimming or thresholding) the largest degree nodes as done in \cite{joseph,levina-reg,coja-sbm,Vu-arxiv,sbm-groth,new-vu} does not allow to circumvent this issue in the sparse SBM \cite{kawamoto2015limitations}`; either outliers would persist or too much of the graph would be destroyed.  
Similarly, in word embeddings, without mitigating the most popular words such as ``the", the embedding would assign full dimension on these. On the flip side, if one takes normalized Laplacians for clustering, one may overcorrect the large degree nodes and output clusters that are now at the periphery of the graph, such as ``tails'' of the graph --- see Figure \ref{lap-cut} for an example on the SBM. These are discussed in more detail in Section \ref{derivation}. In particular, neither the pruned adjacency matrix nor the normalized Laplacian achieve the threshold for weak recovery in the SBM. In general, transformations $\psi$ crucially serve to ``regularize'' the graph in order to obtain useful embeddings that capture ``macroscopic'' structures.

\begin{figure}[H]
\begin{center}
\includegraphics[scale=.3]{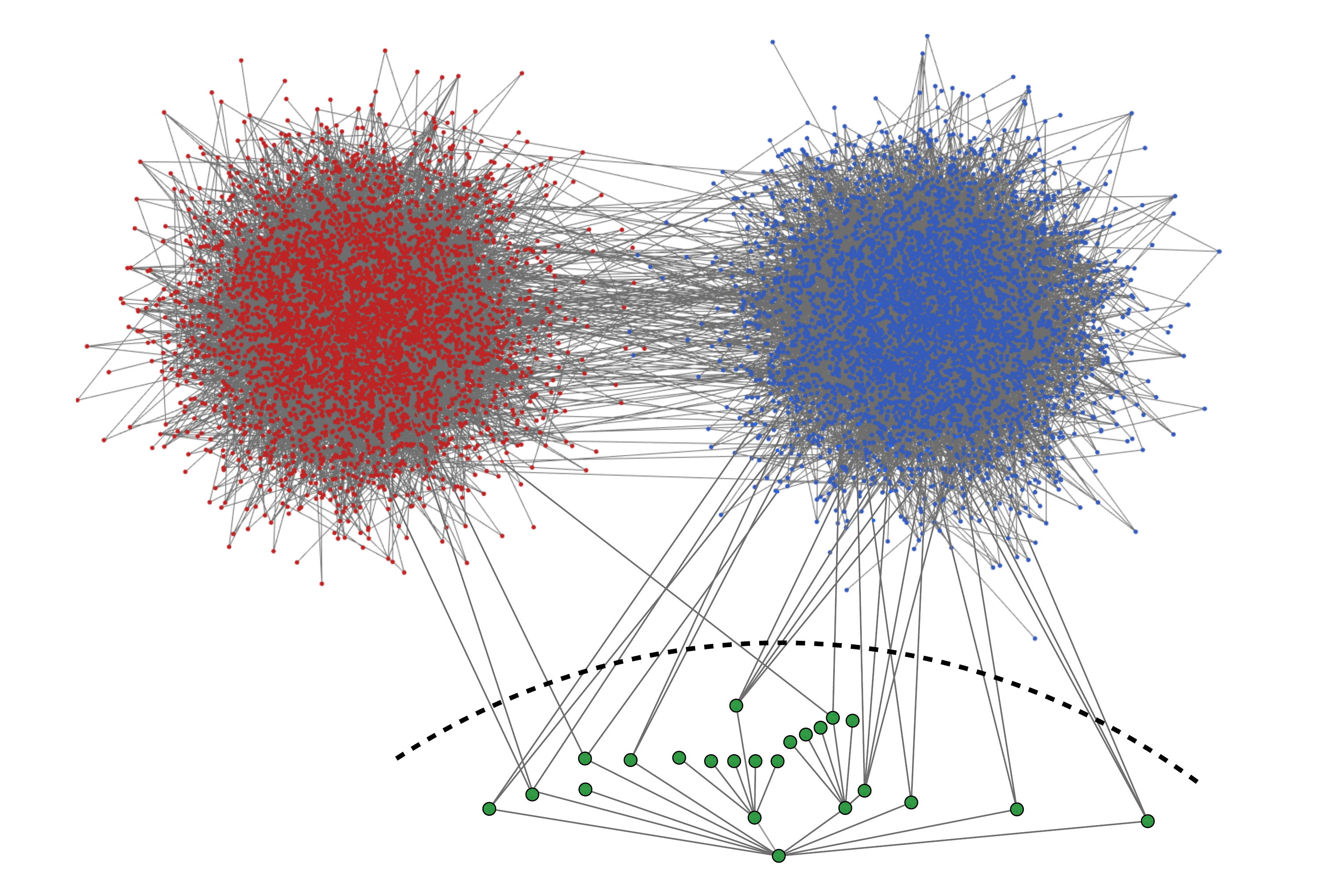}
\caption{The communities obtained with the spectral algorithm on the adjacency matrix $A$ in a sparse symmetric SBM above the KS threshold ($n=100000, a=2.2, b=0.06$): one community corresponds to the neighborhood of a high-degree vertex, and all other vertices are put in the second community.}
\label{high-degree}
\end{center}
\end{figure}
\begin{figure}[H]
\begin{center}
\includegraphics[scale=.3]{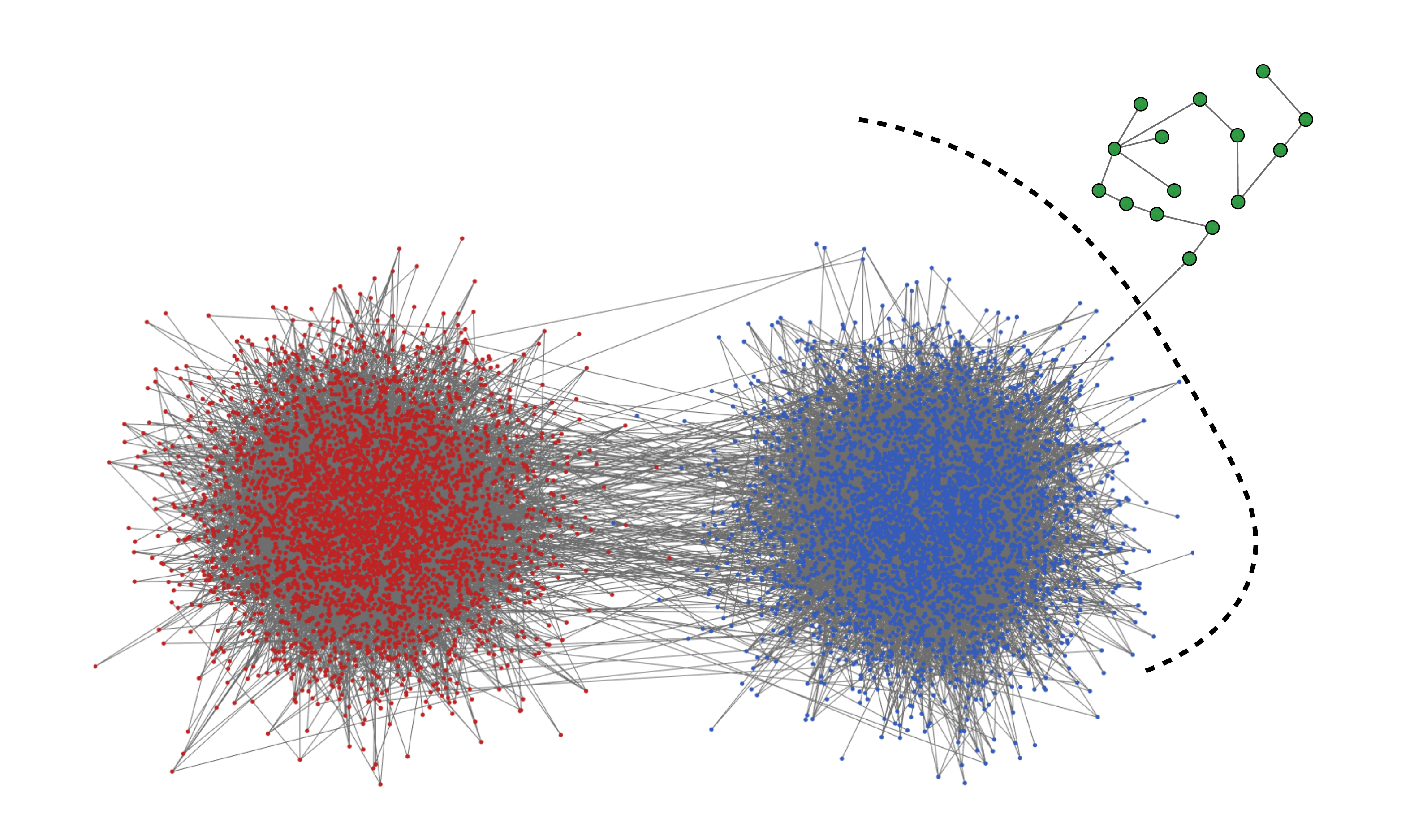}
\caption{The communities obtained with the spectral algorithm on the Laplacian matrix $L$ in a sparse symmetric SBM above the KS threshold ($n=100000, a=2.2, b=0.06$): one community corresponds to a ``tail'' of the graph (i.e., a small region connected by a single edge to the giant component), and all other vertices are put in the second community. The same outcome takes place for the normalized Laplacian.}
\label{lap-cut}
\end{center}
\end{figure}

The robustness of spectral methods has been studied in various contexts, mainly for objective-based clustering, using adversarial corruptions \cite{kane_robust,pravesh_2,pravesh_robust} and studying noise perturbations \cite{ulrike,li,karrer_robust,bala_nips}, among others. The goal of this paper is to formalize the notion of robustness in the context of community detection/clustering for probabilistic graph models, and to obtain a transformation $\psi$ that is as simple and as robust as possible. The derived transformation can also be viewed as a method to extract spectral gaps in graphs that have a spectral gap hiding under local irregularities, such as the Erd\H{o}s-R\'enyi random graph. 

\subsection{Our benchmarks}
We refer to next section for the formal statements. Our goal is to find an ``optimal'' $\psi$ that satisfies the following:
\begin{enumerate}
\item[(1)] Achieving a ``maximal'' spectral gap on Erd\H{o}s-R\'enyi (ER) random graphs, e.g., comparable to that of random regular graphs, as evidence that $\psi$ is a good graph regularizer; 
\item[(2)]  Achieving the Kesten-Stigum threshold for weak recovery in the SBM, as evidence that $\psi$ is good for community detection;
\item[(3)]  Achieving the threshold for weak recovery in a geometric block model, as evidence that $\psi$ is robust to cliques and tangles common in clustering problems. 
\end{enumerate}
We provide further explanation on these points:\\

\noindent
(1) The first point refers to the following: a random regular graph of degree $d$ has with high probability top two eigenvalues $\lambda_1 =d$ and $\lambda_2 \sim 2 \sqrt{d-1}$ \cite{friedman}. This is a ``maximal'' gap in the sense that Alon-Boppana shows that any regular graph must have a second eigenvalue larger than $(1-o_{\diam(G)}(1)) 2 \sqrt{d-1}$. Ramanujan families of graphs are defined as families achieving the lower-bound of $2 \sqrt{d-1}$, and so random regular graphs are almost Ramanujan with high probability.

In contrast, an ER random graph of expected degree $d$ has top two eigenvalues both of order $\sqrt{\log(n)/\log\log(n)}$ with high probability, and these correspond to eigenvectors localized on the vertices of largest degree. So ER graphs are not almost Ramanujan (if one puts aside the fact that they are not regular and applies the above Ramanujan definition in a loose way.) This is however restrictive.  

Are there more general definitions of ``maximal gaps'' that are more suitable to irregular graphs? Some known proposals are based on the universal cover \cite{hoory} and the nonbacktracking operator. 

In particular, it is shown in 
\cite{terras} that a regular graph is Ramanujan if and only if its nonbacktracking matrix has no eigenvalue of magnitude within $(\sqrt{\rho}, \rho)$, where $\rho$ is the Perron-Frobenius eigenvalue of the nonbacktracking matrix. 
This gives a possibility to extend the definition of Ramanujan to irregular graphs, requiring no eigenvalue of magnitude within $(\sqrt{\rho}, \rho)$ for its nonbacktracking matrix, which we refer to as NB-Ramanujan.
In this context, it was proved in \cite{bordenave} that ER$(n,c)$ is NB-Ramanujan with high probability. 
Further, \cite{bordenave} also shows that NB achieves our objective (2). However, the nonbacktracking matrix has some downsides:
(i) it is asymmetric, and some of the intuition from spectral graph theory is lost;
(ii) it is not robust to cliques and tangles, and fails to achieve point (3) as discussed below.

Thus we will look for other solutions than the NB operator that achieve (2) and (3), but this will require us to revisit what a ``maximal'' spectral gap means for objective (1). An intuitive notion is that $\psi(A)$ is a good graph regularizer if it gives the same type of spectral gap whether $A$ is random regular (hence already regularized) or from the ER ensemble (subject to local density variations).

(2) The stochastic block model is a random graph with planted clusters, and the weak recovery problem consists of recovering the clusters with a positive correlation. In the case of two symmetric clusters, as considered here, the fundamental limit for efficient algorithms is the so-called Kesten-Stigum threshold. Recently, a few algorithms have been proved to achieve this threshold \cite{massoulie-STOC,Mossel_SBM2,bordenave,colin3}. However, these are not robust to cliques and tangles as mentioned earlier and discussed next. 

(3) Diverse versions of geometric block models have recently been studied to integrate the presence of short loops in the graph \cite{our_grid,abishek2,arya,powering1} (see \cite{abbe_fnt} for furthe references). We use here a Gaussian-mixture block model (see Section \ref{gbm_section}). This model has the advantage of having a simple expression for the weak recovery threshold (conjectured here), allowing us to evaluate whether the proposed methods are optimal or not. In particular, this simple model breaks the algorithms of \cite{massoulie-STOC,Mossel_SBM2,bordenave,colin3,colin3cpam}, making eigenvectors concentrate possibly on tangles (i.e., dense regions with cliques intertwined) --- see Figure \ref{gbm-comp}.

\begin{figure}[h]
\begin{center}
\includegraphics[scale=.4]{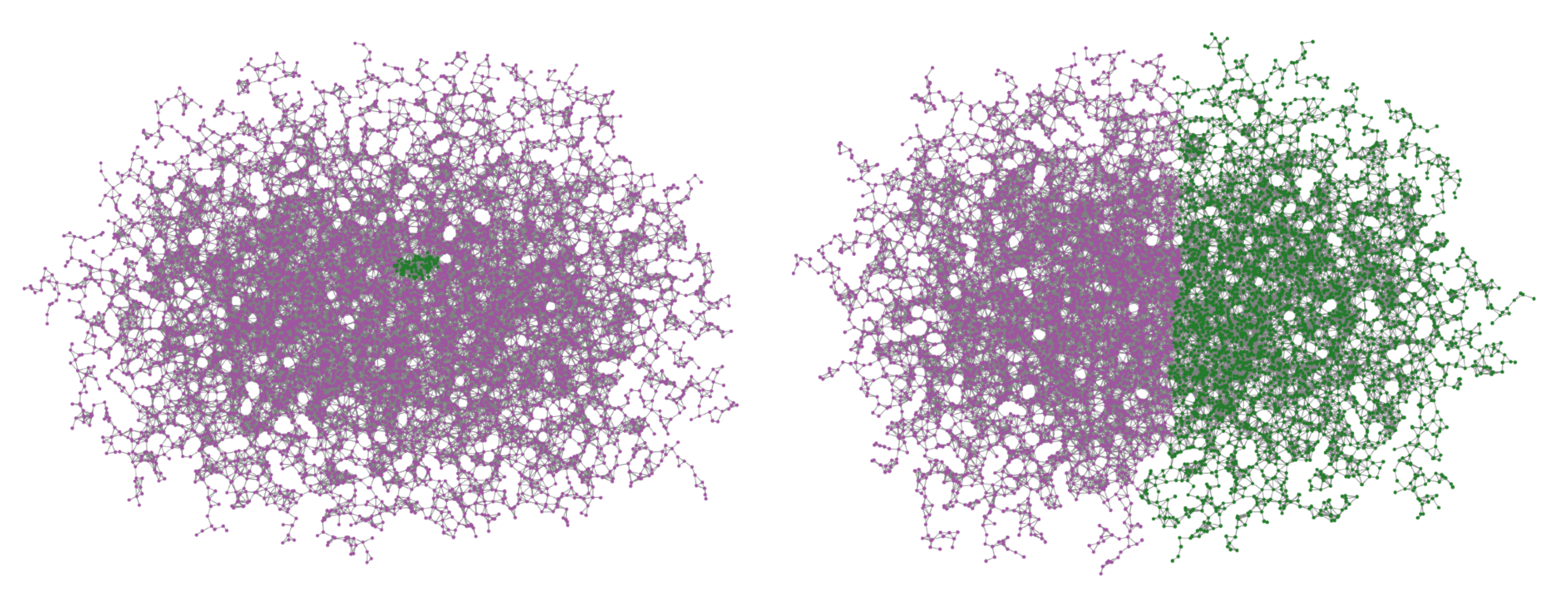}
\caption{A graph drawn from the GBM$(n,t,s)$ (see definition in Section \ref{gbm_section}), where $n/2$ points are sampled i.i.d.\ from an isotropic Gaussian in dimension 2 centered at $(-s/2,0)$ and $n/2$ points are sampled i.i.d.\ from an isotropic Gaussian in dimension 2 centered at $(s/2,0)$, and any points at distance less than $t/\sqrt{n}$ are connected (here $n=10000$, $s=2$ and $t=10$).  In this case the spectral algorithm on the NB matrix gives the left plot, which puts a small fraction of densely connected vertices (a tangle) in a community, and all other vertices in the second community. Graph powering gives the right plot in this case.}
\label{gbm-comp}
\end{center}
\end{figure}
 
\subsection{Graph powering}
The solution proposed in this paper is as follows. 

\begin{definition}[Graph powering]
We give two equivalent definitions: 
\begin{itemize}
\item Given a graph $G$ and a positive integer $r$, define the $r$-th graph power of $G$ as the graph $G^{(r)}$ with the same vertex set as $G$ and where two vertices are connected if there exists a path of length $\le r$ between them in $G$. 

Note: $G^{(r)}$ contains the edges of $G$ and adds edges between any two vertices at distance $\le r$ in $G$. Note also that one can equivalently ask for a walk of length $\le r$, rather than a path. 

\item If $A$ denotes the adjacency matrix of $G$ with 1 on the diagonal (i.e., put self-loops to every vertex of $G$), then the adjacency matrix $A^{(r)}$ of $G^{(r)}$ is defined by 
\begin{align}
A^{(r)} = \mathbb{1} (A^r \ge 1).
\end{align}
Note: $A^r$ has the same spectrum as $A$ (up to rescaling), but the action of the non-linearity $\mathbb{1} (\cdot \ge 1)$ gives the key modification to the spectrum. 
\end{itemize}
\end{definition}

\begin{definition} For a graph $G$, an $r$-power-cut in $G$ corresponds to a cut in $G^{(r)}$, i.e., 
\begin{align}
\partial_r(S) = \{ u \in S, v \in S^c: (A^{(r)})_{u,v}=1\}, \quad S \subseteq V(G).
\end{align}
\end{definition}

Note that powering is mainly targeted for sparse graphs, and is useful only if the power $r$ is not too small and not too large. If it is too small, the powered graph may not be sufficiently different from the underlying graph. If it is too large, say $r\ge \mathrm{diameter}(G)$, then powering turns any graph to a complete graph, which destroys all the useful information. However, powering with $r$ less than the diameter but large enough will be effective on the benchmarks (1)-(2)-(3). As a rule of thumb, one may take $r=\sqrt{\diam(G)}$.
We now discuss the main insight behind graph powering.\\

{\it (1) Power-cuts as Bayes-like cuts.}
The spectral algorithms based on $A$, $L$ or $L_{\mathrm{norm}}$ can be viewed as relaxations of the MAP estimator, i.e., the min-bisection:
\begin{align}
\max_{x \in \{-1,+1\}^n,\, x^t 1^n = 0} x^t A x .
\end{align}
However, the MAP estimator is the right benchmark only when aiming to maximize the probability of recovering the entire communities. It is not the right objective in the regime where one can only partially recover the communities, which is the sparse regime of interest to us in this paper. We illustrate this distinction on the following example --- see also Figure \ref{bad-tree}.

\begin{figure}[h]
\begin{center}
\includegraphics[scale=.38]{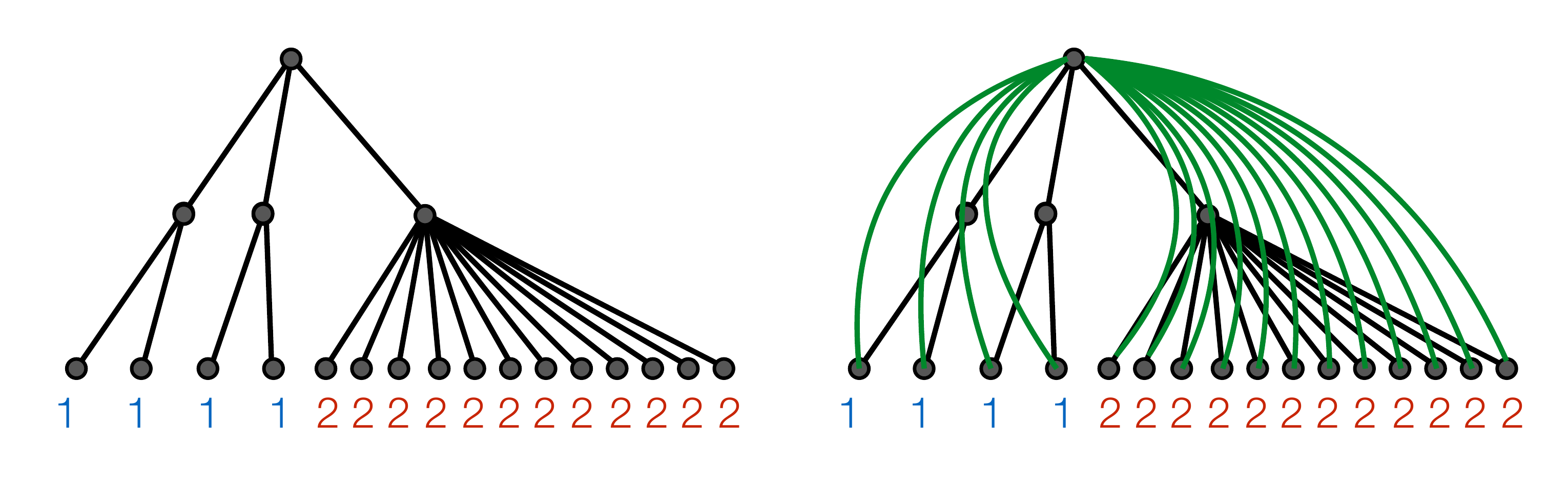}
\caption{In the left graph, assumed to come from $\ssbm(n,2,3/n,2/n)$, the root vertex is labelled community 1 from the ML estimator given the leaf labels, which corresponds to the min-cut around that vertex. 
In contrast, the Bayes optimal estimator puts the root vertex in community 2, as the belief of its right descendent towards community 2 is much stronger than the belief of its two left descendents towards community 1. This corresponds in fact to the min-power-cut obtained from the right graph, where 2-power edges are added by graph powering (note that only a subset of relevant edges are added in the Figure).}
\label{bad-tree}
\end{center}
\end{figure}

Imagine that a graph drawn from $\ssbm(n,2,3/n,2/n)$ contained the following induced subgraph: $v_0$ is adjacent to $v_1$, $v_2$, and $v_3$; $v_1$ and $v_2$ are each adjacent to two outside vertices that are known to be in community $1$, and $v_3$ is adjacent to a large number of vertices that are known to be in community $2$. Then $v_1$ and $v_2$ are more likely to be in community $1$ than they are to be in community $2$, and $v_3$ is more likely to be in community $2$ than it is to be in community $1$. So, the single most likely scenario is that $v_0$, $v_1$, and $v_2$ are in community $1$ while $v_3$ is in community $2$. In particular, this puts $v_0$ in the community that produces the sparsest cut (1 edge in the cut vs 2 edges in the other case). However, $v_3$ is almost certain to be in community $2$, while if we disregard any evidence provided by their adjacency to $v_0$, we would conclude that $v_1$ and $v_2$ are each only about $69\%$ likely to be in community $1$. As a result, $v_0$ is actually slightly more likely to be in community $2$ than it is to be in community $1$.

Power-cuts precisely help with getting feedback from vertices that are further away, making the cuts more ``Bayes-like'' and less ``MAP-like,'' as seen in the previous example where $v_1$ is now assigned to community $2$ using $2$-power-cuts rather than community $1$ using standard cuts. Note that this is also the case when using self-avoiding or nonbacktracking walk counts, however these tend to overcount in the presence of loops. For example, in the graph of Figure \ref{counts}, the count of NB walks is doubled around the 4-cycle, and this gets amplified in models like the GBM as discussed earlier (recall Figure \ref{counts}); in contrast, graph powering projects the count back to 1 thanks to the non-linearity $\mathbb{1} (\cdot \ge 1)$.\\

\begin{figure}[h]
\begin{center}
\includegraphics[scale=.38]{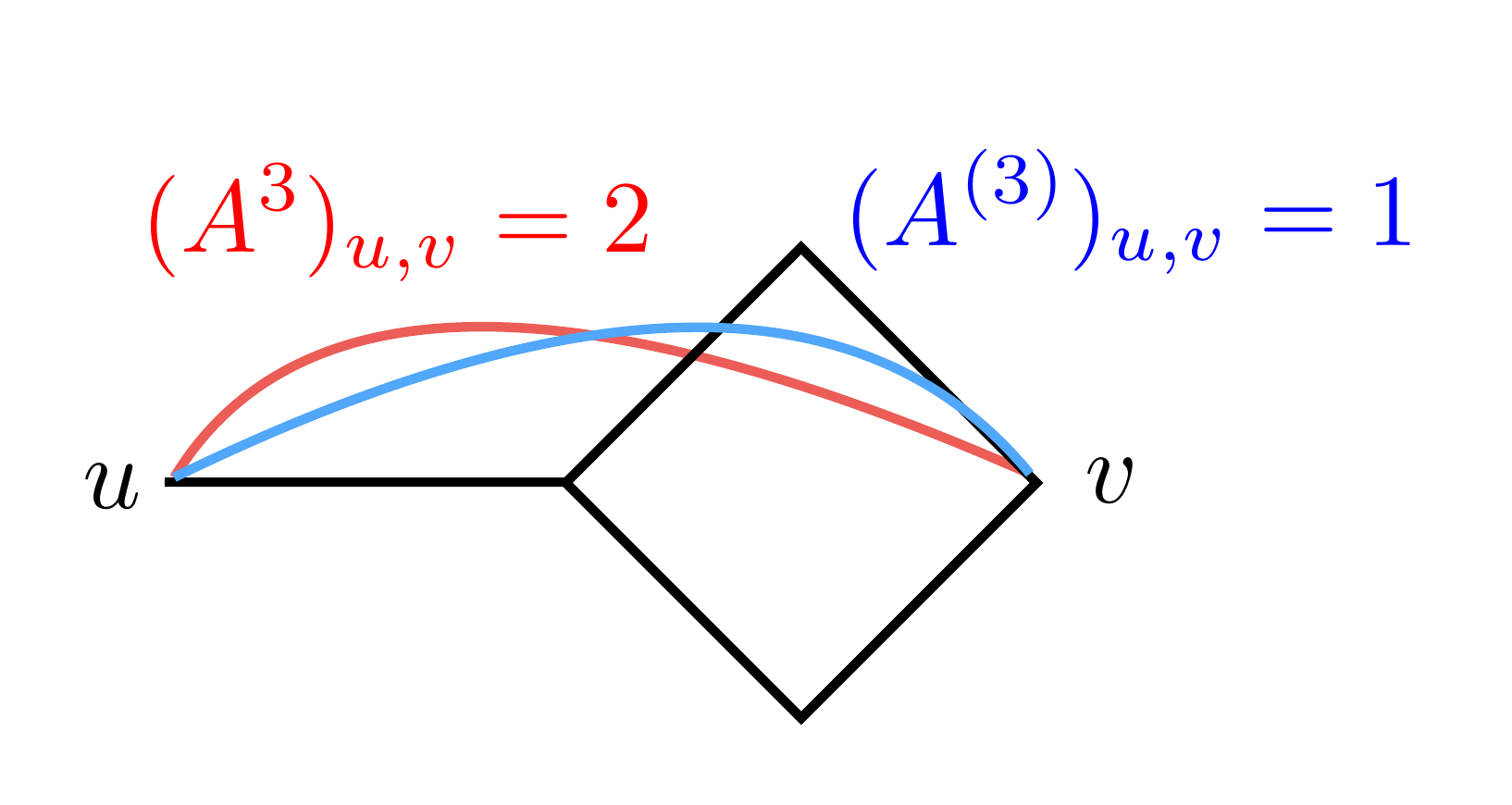}
\caption{Classical powering $A^3$ v.s.\ graph powering $A^{(3)}$. In this example, the number of nonbacktracking walks of length $3$ between $u$ and $v$ is 2 and not 1 as for graph powering.}
\label{counts}
\end{center}
\end{figure}

{\it (2) Powering to homogenize the graph.} Powering helps to mitigate degree variations, and more generally density variations in the graph, both with respect to high and low densities. Since the degree of all vertices is raised with powering, density variations in the regions of the graph do not contrast as much. Large degree vertices (as in Figure \ref{high-degree}) do not stick out as much and tails (as in Figure \ref{lap-cut}) are thickened, and the more macroscopic properties of the graph can prevail.\\

One could probably look for a non-linearity function that is `optimal' (say for the agreement in the SBM) rather than $\mathbb{1} (\cdot \ge 1)$; however this is likely to be model dependent, while this choice seems both natural and generic. A downside of powering is that it densifies the graph, so one would ideally combine graph powering with degree normalizations to reduce the power (since powering raises the density of the graph, normalization may no longer have the issues mentioned previously with tails) or some graph sparsification (such as \cite{sparsify}). 

Note that powering and sparsifying do not counter each other: powering adds edges to ``complete the graph'' in sparse regimes where edges should be present, while sparsifying prunes down the graph by adding weights on representative edges. Finally, one may also peel out leaves and paths, and use powered Laplacians to further reduce the powering order; see Section \ref{implementations} for further discussions on the implementations.


\section{Main results}\label{results_section}
We believe that the recent algorithms \cite{massoulie-STOC,Mossel_SBM2,bordenave,colin3} proved to achieve objective (2) fail on objective (3); 
this is backed in Section \ref{derivation}.
SDPs are likely to succeed in (3); they are already shown to be robust to a vanishing fraction of edge perturbations in \cite{montanari_sen}, but they do not achieve (2) \cite{montanari_sen,ankur_SBM,adel_SBM}, and are more costly computationally.
It is also unclear how weak the notion of ``weak Ramanujan'' from \cite{massoulie-STOC} for the matrix of self-avoiding walk counts is, i.e., could the spectral gap be larger for such operators? 

Our first contribution in this paper is to provide a principled derivation on how one can achieve (2) and (3) simultaneously. This is covered in Section \ref{derivation}, leading to graph powering. We then make more precise the notion of ``achieving a maximal spectral gap on Erd\H{o}s-R\'enyi graphs'', by establishing an Alon-Boppana result for graph powering. This will allow us to conclude that graph powering achieves (1), up to a potential exponent offset on the logarithmic factor. 
Finally, we show that graph powering achieves objective (2) and give evidence that graph powering achieves objective (3) in Section \ref{gbm_section}, without providing formal proofs for the latter\footnote{Formal proofs would require in particular percolation estimates that depart from the current focus of the paper.}.

We now state the results formally.

\subsection{Weak recovery in stochastic block models}
We consider different models of random graphs with planted clusters. 
In each case, an ensemble $M(n)$ provides a distribution on an ordered pair of random variables $(X,G)$,
where $X$ is an $n$-dimensional random vector with i.i.d.\ components,
corresponding to the community labels of the vertices,
and $G$ is an $n$-vertex random graph,
connecting the vertices in $V=[n]$ depending on their community labels. 
The goal is always to recover $X$ from $G$, i.e., to reconstruct the communities by observing the connections. 
The focus of this paper is on the sparse regime and the weak recovery problem, defined below. 

\begin{definition}
In the case of $k$ communities, an algorithm $\hat{X}:2^{[n] \choose 2} \to [k]^n$ recovers communities with accuracy $f(n)$ in $M(n)$ if, for $(X,G) \sim M(n)$ and $\Omega_i:=\{v \in [n]: X_v= i\}$, $i \in [k]$,
\begin{align}
\pp\left\{\max_{\pi\in S_k} \frac{1}{k}\sum_{i=1}^k \frac{|\{v\in\Omega_i: \pi(\hat{X}_v)=i\}|}{|\Omega_i|}\geq f(n)\right\} = 1-o(1),
\end{align} 
We say that an algorithm solves weak recovery if it recovers communities with accuracy $1/k+\Omega(1)$.
\end{definition}

\begin{definition}
Let $n$ be a positive integer (the number of vertices), $k$ be a positive integer (the number of communities), $p=(p_1,\dots,p_k)$ be a probability vector on $[k]:=\{1,\dots,k\}$ (the prior on the $k$ communities) and $W$ be a $k\times k$ symmetric matrix with  entries in $[0,1]$ (the connectivity probabilities). The pair $(X,G)$ is drawn under $\mathrm{SBM}(n,p,W)$ if $X$ is an $n$-dimensional random vector with i.i.d.\ components distributed under $p$, and $G$ is an $n$-vertex simple graph where vertices $i$ and $j$ are connected with probability $W_{X_i,X_j}$, independently of other pairs of vertices. We also define the community sets by $\Omega_i=\Omega_i(X) : = \{v \in [n] : X_v = i\}, i \in [k].$
\end{definition}

In this paper, we refer to the above as the ``general stochastic block model'' and use ``stochastic block model'' for the version with two symmetric and sparse communities, as follows.

\begin{definition}
$(X,G)$ is drawn under $\mathrm{SBM}(n,a,b)$ if $k=2$, $W$ takes value $a/n$ on the diagonal and $b/n$ off the diagonal, and if the community prior is $p=\{1/2,1/2\}$. That is, $G$ has two balanced clusters, with edge probability $a/n$ inside the clusters and $b/n$ across the clusters.  
\end{definition}
Note that in this case, if defining $X, \hat{X}$ to take values in $\{-1,+1\}^n$, we have that weak recovery is solvable if and only if 
\begin{align}
|\langle X,\hat{X}(G) \rangle | =  \Omega(1)
\end{align}
with high probability.
\begin{definition}
We denote by $\mathrm{ER}(n,d)$ the distribution of Erdos-Renyi random graphs with $n$ vertices and edge probability $d/n$. 
\end{definition}

The following was conjectured in \cite{massoulie-STOC}; we prove it here using results from \cite{massoulie-STOC} itself. 

\begin{theorem}[Spectral separation for the distance matrix; conjectured in \cite{massoulie-STOC}]\label{ks1}
Let $(X,G)$ be drawn from $\sbm(n,a,b)$ with $(a+b)/2 > 1$.
Let $A^{[r]}$ be the $r$-distance matrix of $G$ ($A^{[r]}_{ij}=1$ if and only if $d_G(i,j) = r$), and $r= \e \log(n)$ such that $\e > 0$, $\e \log (a+b)/2 <1/4$. 
Then, with high probability, 

\begin{multicols}{2}
\begin{enumerate}
\item[A.] If $\left(\frac{a+b}{2}\right) < \left(\frac{a-b}{2}\right)^2$, then
\begin{enumerate}
\item[1.]  $\lambda_1(A^{[r]}) \asymp \left(\frac{a+b}{2} \right)^{r}$,
\item[2.] $\lambda_2(A^{[r]}) \asymp \left(\frac{a-b}{2} \right)^{r}$,
\item[3.] $|\lambda_3(A^{[r]})| \leq \left(\frac{a+b}{2} \right)^{r/2} \log(n)^{O(1)}$.
\end{enumerate}
\item[B.]If $\left(\frac{a+b}{2}\right) > \left(\frac{a-b}{2}\right)^2$, then
\begin{enumerate}
    \item[1.] $\lambda_1(A^{[r]}) \asymp \left(\frac{a+b}{2} \right)^{r}$,
    \item[2.] $|\lambda_2(A^{[r]})| \leq \left(\frac{a+b}{2} \right)^{r/2} \log(n)^{O(1)}$.
    \item[~]
\end{enumerate}
\end{enumerate}
\end{multicols}

Furthermore, $\phi_2(A^{[r]})$ with the rounding procedure of \cite{massoulie-STOC} achieves weak recovery whenever $\left(\frac{a+b}{2}\right) < \left(\frac{a-b}{2}\right)^2$, i.e., down to the KS threshold.
\end{theorem}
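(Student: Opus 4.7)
The plan is to reduce the spectral analysis of $A^{[r]}$ to that of the self-avoiding walk matrix $\Gamma^{(r)}$, whose $(i,j)$-entry counts self-avoiding walks of length exactly $r$ from $i$ to $j$ and whose spectrum in $\sbm(n,a,b)$ was established in \cite{massoulie-STOC}. The crucial observation is an exact identity $A^{[r]}_{ij} = \Gamma^{(r)}_{ij}$ whenever the joint $r$-neighborhood of $(i,j)$ is a tree: on a tree every self-avoiding walk is a geodesic, so a SAW of length exactly $r$ from $i$ to $j$ exists uniquely when $d_G(i,j)=r$, and never otherwise. Hence $E := A^{[r]} - \Gamma^{(r)}$ is supported on pairs $(i,j)$ where at least one endpoint lies in the ``tangled'' set $T$ of vertices whose $r$-ball contains a cycle.

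Next, I would quantify the scarcity of $T$. A standard first-moment estimate bounds the expected number of cycles of length $\le 2r$ in $\sbm(n,a,b)$ by $O(((a+b)/2)^{2r})$; a union bound over $r$-neighborhoods of these cycles then yields $|T| \le ((a+b)/2)^{3r}\,\mathrm{polylog}(n)$ with high probability. Under the hypothesis $\e \log((a+b)/2) < 1/4$ this is $n^{3/4 - \Omega(1)}$. Combining this with the high-probability entrywise bounds on $\Gamma^{(r)}$ from \cite{massoulie-STOC} (which control the maximal number of self-avoiding walks between any two vertices), one obtains $\|E\|_{\mathrm{op}} \le \|E\|_F \le ((a+b)/2)^{r/2}\,\mathrm{polylog}(n)$, matching the bulk scale of $\Gamma^{(r)}$.

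Given this perturbation bound, Weyl's inequality $|\lambda_i(A^{[r]}) - \lambda_i(\Gamma^{(r)})| \le \|E\|_{\mathrm{op}}$ combined with the eigenvalue estimates for $\Gamma^{(r)}$ in \cite{massoulie-STOC} (i.e., $\lambda_1 \asymp ((a+b)/2)^r$, $\lambda_2 \asymp ((a-b)/2)^r$ in regime A with $\lambda_2$ absorbed into the bulk in regime B, and $|\lambda_i| \le ((a+b)/2)^{r/2}\mathrm{polylog}(n)$ for the remaining eigenvalues) transfers A1--A3 and B1--B2 directly. For weak recovery, I would invoke the Davis--Kahan $\sin\Theta$ theorem to transfer the closeness of $\phi_2(\Gamma^{(r)})$ to the community indicator proved in \cite{massoulie-STOC} to a closeness of $\phi_2(A^{[r]})$, using that the gap between $\lambda_2(\Gamma^{(r)})$ and the bulk is polynomial in $n$ in the KS regime $(a+b)/2 < ((a-b)/2)^2$. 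The rounding procedure of \cite{massoulie-STOC} is stable under such an $\ell_2$ perturbation and therefore still yields positive correlation with the true labels.

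The main obstacle is controlling $\|E\|_{\mathrm{op}}$: although the support of $E$ is small, $\Gamma^{(r)}$ can have rows with very large entries localized near tangles, which would otherwise create spurious top eigenvectors concentrated on $T$. The role of the assumption $\e \log((a+b)/2) < 1/4$ is precisely to keep the total tangle contribution below the Massoulié bulk $((a+b)/2)^{r/2}\mathrm{polylog}(n)$; once this quantitative bound is in place, the eigenvalue/eigenvector transfer and hence the weak recovery conclusion become essentially automatic consequences of the analogous statements for $\Gamma^{(r)}$.
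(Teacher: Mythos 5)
Your overall strategy is the same as the paper's: write $E=A^{[r]}-A^{\{r\}}$ (the SAW matrix), show $\|E\|_2\le\alpha^{r/2}(\log n)^{O(1)}$ where $\alpha=(a+b)/2$, and then transfer Massouli\'e's eigenvalue and eigenvector estimates via Weyl and Davis--Kahan. The second half of your argument is fine. The gap is in the norm bound on $E$, and it is quantitative rather than cosmetic: bounding $\|E\|_2$ by the \emph{global} Frobenius norm does not give $\alpha^{r/2}(\log n)^{O(1)}$. Even granting that every entry of $E$ is $O(1)$ (which requires conditioning on the event that no $r$-neighborhood contains two cycles, not just a first-moment count of cycles), the support of $E$ consists of pairs joined by two short paths; there are up to $\Theta(\alpha^{2r})$ cycles of length $\le 2r$, and each contributes up to $\alpha^{r}(\log n)^{O(1)}$ such pairs, so the best you can say globally is $\|E\|_F^2\le \alpha^{3r}(\log n)^{O(1)}$, i.e.\ $\|E\|_F\le\alpha^{3r/2}(\log n)^{O(1)}$. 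This exceeds the target by a factor $\alpha^{r}$ and in general also exceeds $\lambda_2(A^{\{r\}})\asymp\beta^r$ (take $a=10$, $b=0$: $\beta=\alpha=5$ and $\alpha^{3/2}>\beta$), so Weyl's inequality no longer localizes $\lambda_2$ and the bound on $\lambda_3$ fails outright.

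The missing idea is a localization of $E$. Conditioned on the event that every $r$-neighborhood contains at most one cycle, the relation ``$i\sim j$ iff the $r$-neighborhoods of $i$ and $j$ share a cycle'' is an equivalence relation, $E_{ij}\ne 0$ forces $i\sim j$, and hence $E$ is block-diagonal with one block per shared cycle. The spectral norm of a block-diagonal matrix is the \emph{maximum} of the block norms, not their sum, and a single block has only $O(\alpha^r\log^5 n)$ nonzero entries (every offending pair is joined by a path through a fixed edge of the shared cycle, and neighborhood-growth bounds control the number of such paths). This gives $\|E_{S\times S}\|_F=O(\alpha^{r/2}\log^{5/2}n)$ per block and hence the required bound on $\|E\|_2$. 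Without this block structure (or some substitute argument that prevents the tangle contributions from accumulating across different cycles), your proof does not close.
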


The following theorem is identical to the previous one but holds for the graph powering matrix.  
\begin{theorem}[Spectral separation for graph powering]\label{ks2}
Let $(X,G)$ be drawn from $\sbm(n,a,b)$ with $(a+b)/2 > 1$.
Let $A^{(r)}$ be the adjacency matrix of the $r$-th graph power of $G$, and $r= \e \log(n)$ such that $\e > 0$, $\e \log (a+b)/2 <1/4$. 
Then, with high probability,

\begin{multicols}{2}
\begin{enumerate}
\item[A.] If $\left(\frac{a+b}{2}\right) < \left(\frac{a-b}{2}\right)^2$, then
\begin{enumerate}
\item[1.]  $\lambda_1(A^{(r)}) \asymp \left(\frac{a+b}{2} \right)^{r}$,
\item[2.] $\lambda_2(A^{(r)}) \asymp \left(\frac{a-b}{2} \right)^{r}$,
\item[3.] $|\lambda_3(A^{(r)})| \leq \left(\frac{a+b}{2} \right)^{r/2} \log(n)^{O(1)}$.
\end{enumerate}
\item[B.]If $\left(\frac{a+b}{2}\right) > \left(\frac{a-b}{2}\right)^2$, then
\begin{enumerate}
    \item[1.] $\lambda_1(A^{(r)}) \asymp \left(\frac{a+b}{2} \right)^{r}$,
    \item[2.] $|\lambda_2(A^{(r)})| \leq \left(\frac{a+b}{2} \right)^{r/2} \log(n)^{O(1)}$.
    \item[~]
\end{enumerate}
\end{enumerate}
\end{multicols}

Furthermore, $\phi_2(A^{(r)})$ with the rounding procedure of \cite{massoulie-STOC} achieves weak recovery whenever $\left(\frac{a+b}{2}\right) < \left(\frac{a-b}{2}\right)^2$, i.e., down to the KS threshold.
\end{theorem}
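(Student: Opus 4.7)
The plan is to reduce Theorem~\ref{ks2} to Theorem~\ref{ks1} via the decomposition
\begin{align*}
A^{(r)} \;=\; \sum_{s=0}^{r} A^{[s]},
\end{align*}
with the convention $A^{[0]} = I$. This identity holds because two vertices lie at graph distance $\le r$ if and only if they lie at graph distance exactly $s$ for some $s \in \{0, 1, \dots, r\}$, and the distances partition the pairs of connected vertices. Thus the distance-matrix information from Theorem~\ref{ks1} (and its proof) combines additively to control the spectrum of the graph-powering matrix.

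For the eigenvalue claims, I would use that the proof of Theorem~\ref{ks1} provides, for each $s \in \{0,\dots,r\}$, an approximate spectral decomposition $A^{[s]} = M_s + E_s$, where $M_s$ is supported on an (essentially $s$-independent) rank-two subspace spanned by approximate eigenvectors close to $\mathbf{1}/\sqrt{n}$ and $X/\sqrt{n}$, with associated eigenvalues $\asymp ((a+b)/2)^s$ and $\asymp ((a-b)/2)^s$ (Case~A), and residual $\|E_s\| \le ((a+b)/2)^{s/2} \log(n)^{O(1)}$. Summing over $s$, the eigenvalues of $\sum_s M_s$ are geometric series dominated by their final term (using $(a+b)/2 > 1$ and, in the relevant weak-recovery regime, $(a-b)/2 > 1$), yielding the required $\asymp ((a+b)/2)^r$ and $\asymp ((a-b)/2)^r$. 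The triangle inequality combined with another geometric summation gives $\|\sum_s E_s\| \le ((a+b)/2)^{r/2} \log(n)^{O(1)}$. Applying Weyl's inequality to $A^{(r)} = \sum_s M_s + \sum_s E_s$ then delivers all three estimates of Case~A, and Case~B is strictly simpler since only the Perron direction carries a signal term. The weak-recovery conclusion is then inherited from Theorem~\ref{ks1}: under $((a+b)/2) < ((a-b)/2)^2$ the signal coefficient along $X$ exceeds the residual spectral norm by a polynomially large factor, so $\phi_2(A^{(r)})$ lies within an $o(1)$ angle of $X/\sqrt{n}$ (by the Davis-Kahan theorem), and the rounding procedure of \cite{massoulie-STOC} succeeds unchanged.

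The hard part is not the geometric-sum bookkeeping but rather extracting from Theorem~\ref{ks1} a \emph{uniform-in-$s$} description of the signal subspace of $A^{[s]}$: the eigenvalue statement alone does not guarantee that the rank-two pieces $M_s$ share the same approximate range, and without such alignment the signal contributions could in principle cancel rather than accumulate. This uniformity is, however, implicit in the local tree approximation underlying Theorem~\ref{ks1}: the $s$-sphere around a typical vertex is close in distribution to a Galton-Watson tree truncated at depth $s$, and the second eigenvector of $A^{[s]}$ is, up to a controlled error, the vertex-wise difference of broadcasts from the two communities at depth $s$, which points in the $X$-direction for every $s$ in the relevant window $s = \e \log(n)$ with $\e \log((a+b)/2) < 1/4$. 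Small values of $s$, where the tree approximation and concentration are weaker, contribute negligibly to both signal and noise in the geometric sum and can be absorbed into the $\log(n)^{O(1)}$ factor. Once this uniformity is in place, the rest of the argument is the elementary manipulation of the two geometric sums described above.
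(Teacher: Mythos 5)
Your proposal follows essentially the same route as the paper: both start from the identity $A^{(r)}=\sum_{s=0}^{r}A^{[s]}$, absorb the small-$s$ terms into the error via the neighborhood-growth (max-degree) bound, show that the signal parts of the remaining terms share a common rank-two subspace and therefore sum geometrically, and control the residuals by a triangle inequality at the $\alpha^{r/2}\log(n)^{O(1)}$ scale before invoking Weyl and Davis--Kahan. You have also correctly isolated the crux, namely the uniform-in-$s$ alignment of the signal subspaces; the paper resolves exactly this point by passing to the self-avoiding-walk matrices $A^{\{k\}}$ and proving (via entrywise estimates from Massouli\'e) that $A^{\{k\}}1/\|A^{\{k\}}1\|_2$ and $A^{\{k\}}X/\|A^{\{k\}}X\|_2$ agree up to $O(n^{-\delta})$ across $k\in\{r/2,\ldots,r\}$, together with a union bound over $k$ so that the per-$k$ estimates hold simultaneously.

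One concrete misstatement should be corrected: the common signal subspace is \emph{not} spanned by vectors close to $\mathbf{1}/\sqrt{n}$ and $X/\sqrt{n}$, and consequently $\phi_2(A^{(r)})$ is \emph{not} within an $o(1)$ angle of $X/\sqrt{n}$. The entries of $A^{[s]}\mathbf{1}$ and $A^{[s]}X$ carry multiplicative fluctuations of constant order coming from the random local neighborhood sizes (the Galton--Watson martingale limit), so the correct common directions are the normalized vectors $A^{\{r\}}\mathbf{1}$ and $A^{\{r\}}X$, which have $\Omega(1)$ but not $1-o(1)$ correlation with $\mathbf{1}$ and $X$. If you tried to prove your stated form of the alignment, the step would fail; as it happens, the argument only needs the directions to agree \emph{with each other} across $s$, and Massouli\'e's rounding procedure only needs $\Omega(1)$ correlation of $\phi_2$ with $X$, so the structure of your proof survives once the claimed limit directions are replaced by $A^{\{r\}}\mathbf{1}$ and $A^{\{r\}}X$.
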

Note that one may also take smaller values of $r$, such as $r=\sqrt{\diam(G)}$, or $r$ that is $\omega(\log\log(n))$ to surpass the high degree vertices, and $o(\log(n))$ (or a small enough constant on the log as in the theorem) to avoid powering too much. We refer to Section \ref{implementations} for discussion of the choice of $r$.


\subsection{Alon-Boppana for graph powering}
We now investigate how large the spectral gap produced by graph powering on ER random graphs is. Note first the following statement, which follows from Theorem \ref{ks2} by setting $a=b$.
\begin{corollary}\label{corol_ks2}
Let $G$ be drawn from ER$(n,d)$, $A^{(r)}$ be the adjacency matrix of the $r$-th graph power of $G$ and $r= \e \log(n)$ with $\e>0$, $\e \log (a+b)/2 <1/4$. Then, with high probability, 
\begin{enumerate}
\item  $\lambda_1(A^{(r)}) \asymp d^{r}$,
\item $|\lambda_2(A^{(r)})| \leq \sqrt{d}^{\, r} \log(n)^{O(1)}$.
\end{enumerate}
\end{corollary}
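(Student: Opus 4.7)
The plan is to reduce directly to Theorem~\ref{ks2}. First I would observe that at the level of the graph (i.e., after marginalizing out the hidden label vector $X$), the distribution $\ssbm(n,d,d)$ obtained by setting $a=b=d$ in the SBM definition is exactly ER$(n,d)$: every pair of vertices is independently connected with probability $d/n$ regardless of the planted labels. Since the powered adjacency matrix $A^{(r)}=\mathbb{1}(A^r\ge 1)$ is a deterministic function of $A$, its distribution under $\mathrm{ER}(n,d)$ coincides with its distribution under $\ssbm(n,d,d)$.

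Having made this identification, I would then specialize Theorem~\ref{ks2} to $a=b=d$. We have $(a+b)/2=d$ and $(a-b)/2=0$, so the dichotomy condition $(a+b)/2 > ((a-b)/2)^2$ reduces to $d > 0$, placing us firmly in Case~B of the theorem. The two conclusions of Case~B then read
\begin{align}
\lambda_1(A^{(r)}) \asymp d^{r}, \qquad |\lambda_2(A^{(r)})| \le d^{r/2} \log(n)^{O(1)} = \sqrt{d}^{\,r}\log(n)^{O(1)},
\end{align}
which is precisely the statement of the corollary. The hypothesis $(a+b)/2 > 1$ of Theorem~\ref{ks2} becomes $d > 1$, which is implicit for the spectral gap picture to be meaningful and is part of the corollary's standing assumptions, while the growth restriction $\e \log((a+b)/2) < 1/4$ specializes to $\e \log d < 1/4$ as stated.

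The only thing I would double-check is that the proof of Theorem~\ref{ks2} remains valid in this degenerate regime where the ``signal'' eigenvalue $((a-b)/2)^r$ collapses to zero rather than being merely small. I expect this to be automatic: the proof strategy for Theorem~\ref{ks2} should produce concentration of $A^{(r)}$ around a low-rank reference matrix whose two nonzero eigenvalues are $((a+b)/2)^r$ and $((a-b)/2)^r$, with a noise term of order $\sqrt{d}^{\,r}\log(n)^{O(1)}$; setting $a=b$ simply merges the second eigenvalue into that noise term, requiring no additional argument. If instead the proof of Theorem~\ref{ks2} is written so that it formally requires a separation between the signal eigenvalue and the noise floor, one would need to verify by hand that the bound on $\lambda_2$ still comes out of the same trace-method estimates---but this would be a bookkeeping check rather than a genuine obstacle, as the random part of the analysis is insensitive to the value of $a-b$.
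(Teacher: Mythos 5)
Your proposal is correct and matches the paper exactly: the paper derives this corollary by setting $a=b$ in Theorem~\ref{ks2}, which places the ER graph in Case~B and yields both bounds immediately. Your added caveat about the degenerate regime is reasonable but already covered, since the paper notes that Case~B is handled by the same (in fact simpler) techniques as Case~A.
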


Let us compare this gap to the one of Ramanujan graphs. Recall first that the Alon-Boppana result \cite{nilli} for $d$-regular graphs gives
\begin{enumerate}
\item  $\lambda_1(A) =d$, 
\item $\lambda_2(A) \ge (1-o_{\diam(g)}(1)) 2 \sqrt{d-1}$.
\end{enumerate}
and Ramanujan graphs, or more precisely Ramanujan families of graphs, achieve the lower-bound of $2 \sqrt{d-1}$. 
Their existence is known from \cite{lps}, and Friedman \cite{friedman} proved that random $d$-regular graphs are almost Ramanujan, i.e., with high probability their second largest eigenvalue satisfies
\begin{align}
\lambda_2(A) \le  2 \sqrt{d-1} + o(1).
\end{align}  
To argue that powering turns ER$(n,d)$ into a graph of maximal spectral gap (factoring out its irregularity), we first need to understand how large of a spectral gap the powering of any regular graph can have. 
Powering regular graphs may not necessarily produce regular graphs. 
So we cannot apply directly the Alon-Boppana bound.
Yet, let us assume for a moment that the $r$-th power of a $d$-regular graph is regular with degree $d^r$, then Alon-Boppana would give that the second largest eigenvalue of a $r$-powered graph is larger than $2 \sqrt{d^r-1} \sim 2 \sqrt{d}^{\, r}$. In contrast, our Corollary \ref{corol_ks2} gives that the $r$-th power of ER$(n,d)$ has its second largest eigenvalue of order $\sqrt{d}^{\, r} \log(n)^{O(1)}$. 

This additional logarithmic factor could suggest that powering may not give a tight generalization of Ramanujan, and, thus, may not be an ``optimal graph regularizer.'' In \cite{massoulie-STOC}, for the $r$-distance rather than the $r$-powered graph, this `slack' is absorbed in the terminology of `weak' Ramanujan. However, how weak is this exactly? The above reasoning does not take into account the fact that $G^{(r)}$ is not any regular graph, but a {\it powered} graph (i.e., not any graph is the power of some underlying graph). So it is still possible that powering must concede more in the spectral gap that the above loose argument suggests; we next show that this is indeed the case.
\begin{theorem}\label{abp}
Let $\{G_n\}_{n \ge 1}$ be a sequence of graphs such that $\diam(G_n)=\omega(1)$, and $\{r_n\}_{n \ge 1}$ a sequence of positive integers such that $r_n=\e \cdot \diam(G_n)$. Then, 
\begin{align}
\lambda_2(G_n^{(r_n)}) \ge (1-o_\e(1)) (r_n+1) \hat{d}_{r_n}^{\, r_n/2}(G_n),
\end{align}
where
\begin{align}
 \hat{d}_r(G) &= \left( \frac{1}{r+1} \sum_{i=0}^r \sqrt{\delta^{(i)}(G)\delta^{(r-i)}(G)} \right)^{2/r},\\
 \delta^{(i)}(G) &= \min_{(x,y) \in E(G)} |\{ v \in V(G): d_G(x,v)=i,d_G(y,v)\ge i \}|.
\end{align}
\end{theorem}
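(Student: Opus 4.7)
The plan is to prove this Alon--Boppana-type lower bound via the Courant--Fischer variational principle, exhibiting a two-dimensional subspace on which the Rayleigh quotient of $A^{(r_n)}$ is at least $(1-o_\epsilon(1))(r_n+1)\hat{d}_{r_n}^{r_n/2}$. Since $\diam(G_n)=\omega(1)$ and $r_n=\epsilon\,\diam(G_n)$, I first choose two edges $(x,y),(x',y')\in E(G_n)$ at mutual $G$-distance strictly greater than $2r_n+1$, and build test vectors $\phi$ and $\phi'$ supported near $(x,y)$ and $(x',y')$ respectively via the same construction rule, so that by symmetry $\|\phi\|=\|\phi'\|$ and $\langle\phi,A^{(r_n)}\phi\rangle=\langle\phi',A^{(r_n)}\phi'\rangle$. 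Because no vertex in $\supp(\phi)$ is within $G$-distance $r_n$ of $\supp(\phi')$, one has $\langle\phi,A^{(r_n)}\phi'\rangle=0$; the $2\times 2$ restriction of $A^{(r_n)}$ to $\mathrm{span}(\phi,\phi')$ is thus diagonal with equal entries, and by Cauchy interlacing $\lambda_2(G_n^{(r_n)})\ge \langle\phi,A^{(r_n)}\phi\rangle/\|\phi\|^2$.

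Each test vector is built from the shell structure underlying the definition of $\delta^{(i)}$. For the edge $(x,y)$, let $S_i^x=\{v:d_G(x,v)=i,\ d_G(y,v)\ge i\}$ and choose $T_i\subseteq S_i^x$ with $|T_i|=\delta^{(i)}(G_n)$; such $T_i$ exist because $|S_i^x|\ge\delta^{(i)}(G_n)$ by definition. I then set $\phi=\sum_{i=0}^{r_n}w_i\mathbb{1}_{T_i}$ for weights $w_i$ to be optimized. The triangle inequality gives the key guaranteed connectivity: for $u\in T_i$ and $v\in T_j$ with $i+j\le r_n$ one has $d_G(u,v)\le d_G(u,x)+d_G(x,v)=i+j\le r_n$, so $(A^{(r_n)})_{uv}=1$. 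Consequently
\begin{align}
\langle\phi,A^{(r_n)}\phi\rangle\;\ge\;\sum_{i+j\le r_n} w_iw_j\,\delta^{(i)}\delta^{(j)},\qquad \|\phi\|^2\;=\;\sum_i w_i^2\,\delta^{(i)},
\end{align}
and the substitution $v_i=w_i\sqrt{\delta^{(i)}}$ reduces the problem to lower-bounding the spectral radius of the matrix $M_{ij}=\sqrt{\delta^{(i)}\delta^{(j)}}\,\mathbb{1}[i+j\le r_n]$ by $(1-o_\epsilon(1))\sum_{i=0}^{r_n}\sqrt{\delta^{(i)}\delta^{(r_n-i)}}=(1-o_\epsilon(1))(r_n+1)\hat{d}_{r_n}^{r_n/2}$.

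The main technical obstacle is this final inequality: diagonalizing $M$ along a single eigendirection only yields $\max_i\sqrt{\delta^{(i)}\delta^{(r_n-i)}}$, missing a factor of $r_n+1$. To recover it, I expect the proof to go via a multi-center construction distributed along a geodesic path $x=x_0,x_1,\ldots,x_{r_n}$ of length $r_n$ in $G_n$ (possible since $r_n\le\diam(G_n)$), with a shell $V_k$ anchored at each $x_k$. The key design challenge is to choose the shells so that every cross-pair $V_k\times V_\ell$ is fully $G_n^{(r_n)}$-connected --- which by triangle inequality is ensured whenever the perpendicular radii $s_k,s_\ell$ satisfy $s_k+|k-\ell|+s_\ell\le r_n$ --- while simultaneously having $\sum_k |V_k|$ attain the full target sum $\sum_k\sqrt{\delta^{(k)}\delta^{(r_n-k)}}$; a natural candidate is to take shells on the ``back side'' of the edge $(x_{k-1},x_k)$ with an appropriately balanced depth profile rather than full balls, and to invoke the defining minima of $\delta^{(\cdot)}$ at several edges along the path. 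Once this is arranged, the test vector $\phi=\sum_k a_k\mathbb{1}_{V_k}/\sqrt{|V_k|}$ with $a_k\propto \sqrt{|V_k|}$ produces Rayleigh quotient $\asymp \sum_k|V_k|\asymp (r_n+1)\hat{d}_{r_n}^{r_n/2}$. The $o_\epsilon(1)$ error absorbs boundary effects, slack in the bound $|S_i^x|\ge\delta^{(i)}$, and small overlaps between distinct shells, while the hypothesis $r_n=\epsilon\diam(G_n)$ with $\epsilon\to 0$ provides the geometric room to control all of these simultaneously.
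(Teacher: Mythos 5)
Your overall framework (a Nilli-style argument: two localized test vectors around far-apart edges, disjoint supports, interlacing) is sound, and the single-center computation you actually carry out is correct as far as it goes. But the heart of the theorem --- the factor $(r_n+1)\hat{d}_{r_n}^{r_n/2}$ rather than $\hat{d}_{r_n}^{r_n/2}$ --- is precisely the step you leave as ``the key design challenge,'' and the route you sketch for it does not close the gap. Quantitatively: in the model case $\delta^{(i)}=(d-1)^i$, your matrix $M_{ij}=\sqrt{\delta^{(i)}\delta^{(j)}}\,\mathbb{1}[i+j\le r]$ satisfies $\|M\|_F^2=\sum_{s\le r}(s+1)(d-1)^s=O\bigl(r(d-1)^r\bigr)$, hence $\lambda_{\max}(M)=O\bigl(\sqrt{r}\,(d-1)^{r/2}\bigr)$, short of the target by a factor of order $\sqrt{r}$. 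The multi-center repair runs into the obstruction you yourself identify: full $G^{(r)}$-connectivity of $V_k\times V_\ell$ forces $s_k+|k-\ell|+s_\ell\le r$, while attaining $\sum_k|V_k|\asymp(r+1)d^{r/2}$ forces every $s_k\approx r/2$, and these are incompatible already for adjacent centers. The deeper issue is that no test vector supported within distance $O(r)$ of a single edge can witness the bound through a single application of $A^{(r)}$: on the $d$-regular tree, $A^{(r)}=p^{(r)}(A)$ for a Chebyshev-like polynomial, and its norm $(1+o_d(1))(r+1)(d-1)^{r/2}$ is approached only by approximate eigenvectors spread over balls of radius $kr$ with $k\to\infty$.

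The paper obtains exactly this spreading by Friedman's moment method rather than a one-step Rayleigh quotient: with a test function supported on two antipodal vertices it shows $\lambda_2(G^{(r)})^{2k}\ge t^{(r)}_{2k}$ for all $2k<\lceil\diam(G)/r\rceil$, where $t^{(r)}_{2k}$ is the minimal number of closed $2k$-walks in $G^{(r)}$ at a vertex; it then lower-bounds $t^{(r)}_{2k}$ by counting ``tree-like'' walks assembled from canonical paths, where each step is a move of type $m$ contributing a factor $\delta^{(m)}$ and the admissible type sequences form interleaved Dyck words. The multinomial sum over the $r+1$ pairings $(j,r-j)$ is what produces $\bigl(\sum_{j=0}^{r}\sqrt{\delta^{(j)}\delta^{(r-j)}}\bigr)^{2k-2}=\bigl((r+1)\hat{d}_r^{r/2}\bigr)^{2k-2}$, and taking $k=\Theta(1/\e)\to\infty$ yields the $1-o_\e(1)$. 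To complete your argument you would need either to adopt this $2k$-th-moment count or to replace your shell weights by the corresponding spherical-function profile on supports of radius $\omega(r)$; what is currently written down establishes only a bound of order $\sqrt{r}\max_i\sqrt{\delta^{(i)}\delta^{(r-i)}}$, not the stated theorem.
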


Moreover, we can approximate $\hat{d}_r$ for random $d$-regular graphs.
\begin{lemma}\label{lemma_reg}
Let $G$ be a random $d$-regular graph and $r = \e \log(n)$, where $\e \log d <1/4$. Then, with high probability, 
$$\hat{d}_r(G) = (1+o_d(1)) d.$$
\end{lemma}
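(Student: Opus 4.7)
The plan is to reduce the computation to the $d$-regular tree, which is the high-probability local structure of a random $d$-regular graph at scales below roughly $\tfrac{1}{2}\log_{d-1} n$, and then observe that the ``tree value'' of $\hat d_r$ is $d-1 = (1+o_d(1))d$.

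First I would compute the relevant quantity on the infinite $d$-regular tree $T_d$. Fix any edge $(x,y)$ of $T_d$; deleting it partitions the remaining vertices into an ``$x$-side'' and a ``$y$-side''. Any vertex $v$ on the $x$-side at tree-distance $j$ from $x$ satisfies $d_{T_d}(y,v) = j+1 > j$, and symmetrically on the $y$-side. Hence the count $|\{v : d_{T_d}(x,v)=i,\ d_{T_d}(y,v) \ge i\}|$ equals $(d-1)^i$ for $i \ge 1$, and equals $1$ for $i=0$. By vertex-transitivity of $T_d$ this value does not depend on the chosen edge, so every summand in the tree analogue of $\hat d_r$ is $(d-1)^{r/2}$ and
\begin{align}
\hat d_r(T_d) = \left( (d-1)^{r/2} \right)^{2/r} = d-1 = (1+o_d(1))d.
\end{align}

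Second, I would transfer this to a random $d$-regular graph $G$. A standard configuration-model calculation shows that the expected number of edges $e$ of $G$ whose $r$-ball $B_r(e)$ contains $C$ or more distinct short cycles is at most $n \cdot ((d-1)^{2r}/n)^C \cdot \poly(r)$. Under the hypothesis $\e \log d < 1/4$ one has $(d-1)^{2r} \ll n^{1/2}$, so this bound is $O(n^{1-C/2}\poly(\log n))$; taking $C = 3$ and applying Markov's inequality yields that with high probability every edge of $G$ has an $r$-ball of tree-excess at most $2$.

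Third, I would argue that bounded tree-excess perturbs the quantity $|\{v : d_G(x,v) = i,\ d_G(y,v) \ge i\}|$ by only a multiplicative $(1 + O(1/d))$ factor, uniformly over edges $(x,y)$ of $G$. If $B_r(x,y)$ has at most $C$ extra edges compared to a tree, then these extra edges affect at most $O_C((d-1)^{i-1})$ vertices, either by disrupting the count of distance-$i$ vertices from $x$ or by creating a shortcut that violates $d_G(y,v) \ge i$. Combined with the trivial upper bound $|\{v : d_G(x,v)=i\}| \le d(d-1)^{i-1}$, these sandwich $\delta^{(i)}(G) = (1+o_d(1))(d-1)^i$ simultaneously at every edge. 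Plugging this into the definition of $\hat d_r$ gives $\hat d_r(G) = (1+o_d(1))(d-1) = (1+o_d(1))d$, as required.

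The main obstacle is Step~2: since $\delta^{(i)}$ is a \emph{minimum} over edges, even a single edge lying in many cycles could in principle pull $\delta^{(i)}(G)$ far below $(d-1)^i$; the hypothesis $\e \log d < 1/4$ is precisely the slack needed so that the union bound over all $O(n)$ edges affords a constant-excess guarantee for every edge with high probability. A secondary, purely deterministic subtlety is booking the exact loss from each extra edge in $B_r(x,y)$ in Step~3, but this only produces an $O_C(1/d)$ multiplicative correction and is absorbed into the $o_d(1)$.
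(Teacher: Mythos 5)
Your proposal is correct and follows essentially the same route as the paper: establish that every $r$-ball is locally tree-like with high probability (the hypothesis $\e\log d<1/4$ giving exactly the room needed for the union bound), note that on the $d$-regular tree $\delta^{(i)}=(d-1)^i$, and absorb the deficit caused by the few surviving cycles into the $o_d(1)$. The only differences are in bookkeeping: the paper rules out two or more cycles per vertex-ball and then does an exact even-cycle/odd-cycle case analysis yielding $\delta^{(r)}\ge (d-1)^r-2(d-1)^{r-2}$ (a relative $O(1/d^2)$ loss), whereas you permit tree-excess up to two per edge-ball and use a coarser $O_C(1/d)$ perturbation bound, which is equally sufficient for the stated conclusion.
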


Of course, a powered ER graph is not strictly regular, but Theorem \ref{abp} and Lemma \ref{lemma_reg} say that even if we had regularity, we could not hope to get a better bound than that of Corollary \ref{corol_ks2} for large degrees, except for the exponent on the logarithmic factor that we do not investigate.

Note also that $\hat{d}_r(G) = \parens{1+o_d(1)}d$ if $G$ is $d$-regular and $r$ is less than half the girth of $G$. One can then show the following.
\begin{lemma}\label{lemma_girth}
If $2r < \mathrm{girth}(G)$, $G$ is Ramanujan implies that $G^{(r)}$ is $r$-Ramanujan, in the sense that $\lambda_2(G^{(r)})= \parens{1+o_d(1)}(r+1) d^{r/2}$.
\end{lemma}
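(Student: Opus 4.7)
The plan is to reduce the analysis of $\lambda_2(G^{(r)})$ to a one-variable Chebyshev computation and then combine the resulting upper bound with the Alon--Boppana lower bound of Theorem~\ref{abp}. The girth assumption $2r<\mathrm{girth}(G)$ together with $d$-regularity forces every ball $B_r(v)\subset G$ to be isomorphic to the depth-$r$ truncation of the infinite $d$-regular tree. A standard consequence is that the distance matrices $A^{[i]}$ with $0\le i\le r$ are polynomials in $A$: setting $P_0=1$, $P_1=x$, $P_2=x^2-d$ and $xP_i=(d-1)P_{i-1}+P_{i+1}$ for $i\ge 2$, an induction on $i$ (using that balls of radius $r$ are trees) gives $A^{[i]}=P_i(A)$. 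Consequently $A^{(r)}=\sum_{i=0}^{r}P_i(A)=p(A)$ with $p:=\sum_{i=0}^{r}P_i$, so $\mathrm{spec}(A^{(r)})=\{p(\lambda):\lambda\in\mathrm{spec}(A)\}$.

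Next I would put $p$ in closed form. Introduce the auxiliary polynomials $Q_i$ defined by $Q_0=1$, $Q_1=x$, and the uniform recurrence $xQ_i=(d-1)Q_{i-1}+Q_{i+1}$ for all $i\ge 1$. A short induction gives $P_i=Q_i-Q_{i-2}$ for $i\ge 2$, and the sum telescopes to $p(x)=Q_{r-1}(x)+Q_r(x)$. Under $x=2\sqrt{d-1}\cos\theta$, $\theta\in(0,\pi)$, the $Q_i$ recurrence matches that of the Chebyshev polynomials of the second kind and yields $Q_i(x)=(d-1)^{i/2}\sin((i+1)\theta)/\sin\theta$, so
\[
p(x)=\frac{(d-1)^{(r-1)/2}}{\sin\theta}\bigl[\sqrt{d-1}\,\sin((r+1)\theta)+\sin(r\theta)\bigr].
\]

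The upper bound on $\lambda_2(G^{(r)})$ now follows from the Ramanujan hypothesis: every non-Perron eigenvalue $\lambda$ of $A$ satisfies $|\lambda|\le 2\sqrt{d-1}$, so writing $\lambda=2\sqrt{d-1}\cos\theta$ and invoking the elementary bound $|\sin(k\theta)/\sin\theta|\le k$ gives $|p(\lambda)|\le(d-1)^{(r-1)/2}[\sqrt{d-1}(r+1)+r]=(1+o_d(1))(r+1)d^{r/2}$. The Perron eigenvalue $p(d)=|B_r(v)|=(1+o_d(1))d^r$ is the top eigenvalue of $A^{(r)}$ and dominates this quantity, so $\lambda_2(G^{(r)})\le(1+o_d(1))(r+1)d^{r/2}$. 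The matching lower bound is supplied by Theorem~\ref{abp}: the girth hypothesis forces $\hat d_r(G)=(1+o_d(1))d$ (as noted immediately before the lemma), hence $\lambda_2(G^{(r)})\ge(1-o_d(1))(r+1)d^{r/2}$, and the two inequalities combine to give the claimed equality.

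The main obstacle I expect is the boundary correction at $i=1$ in the distance-polynomial recurrence, i.e.\ the step $xP_1=dP_0+P_2$ rather than $(d-1)P_0+P_2$, which forces the identity $P_i=Q_i-Q_{i-2}$ and in turn produces the $(r+1)d^{r/2}$ scaling as a sum of two telescoped Chebyshev contributions rather than a single $d^{r/2}$ term; the remaining Chebyshev manipulations are classical. If the hypotheses of Theorem~\ref{abp} are not immediately met in the Ramanujan regime at hand, an alternative route for the lower bound is to combine the Ramanujan hypothesis with the classical Alon--Boppana bound on $A$ to obtain $\lambda_2(A)=(1-o(1))2\sqrt{d-1}$, and then use the spectral mapping $\lambda_2(G^{(r)})\ge p(\lambda_2(A))$ to recover the same lower bound.
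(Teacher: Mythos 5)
Your proposal is correct and follows essentially the same route as the paper: under the girth hypothesis $A^{(r)}$ is a polynomial in $A$ satisfying a Chebyshev-type three-term recurrence, the closed trigonometric form is bounded by $(1+o_d(1))(r+1)d^{r/2}$ on the Ramanujan interval $|x|\le 2\sqrt{d-1}$, and the matching lower bound comes from Theorem~\ref{abp} via $\hat d_r(G)=(1+o_d(1))d$. The only (cosmetic) difference is that you build the polynomial by telescoping the sphere matrices $A^{[i]}$, whereas the paper derives the recurrence $A^{(r)}=AA^{(r-1)}-(d-1)A^{(r-2)}$ directly for the ball matrices; both yield the same bound.
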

We do not know whether this is true for general Ramanujan graphs, nor whether the converse holds.

\section{Geometric block model and robustness to tangles}\label{gbm_section}
In this section we provide numerical evidences that graph powering achieves the fundamental limit for weak recovery in the Gaussian-mixture block model (GBM). We also give a possible proof sketch, but leave the result as a conjecture. We start by defining and motivating the model.

\begin{definition}
Given a positive integer $n$ and $s,t \ge 0$, we define $\gbm(n,s,t)$ as the probability distribution over ordered pairs $(X,G)$ as follows. First, each vertex $v \in V$ is independently assigned a community $X_v\in\{1,2\}$ with equal probability (as for the SBM). Then, each vertex $v$ is independently assigned a location $U_v$ in $\mathbb{R}^2$ according to $\mathcal{N}(((-1)^{X_v}s/2,0), I_2)$. Finally, we add an edge between every pair of vertices $u,v$ such that $\|U_u - U_v \| \le t/\sqrt{n}$.
\end{definition}
The parameter $s$ is the separation between the two isotropic Gaussian means, the first centered at $(-s/2,0)$ and the second at $(s/2,0)$.
In order to have a sparse graph with a giant, we work with the scaling $t/\sqrt{n}$ for a constant $t$. There is no known closed form expression for the threshold in order to have a giant, but such a threshold exists \cite{penrose}. We will simply assume that $t$ is above that threshold.

Note that the expected adjacency matrix of the GBM, conditioned on the vertex labels, has the same rank-2 block structure as the SBM: it takes value $\E(A_{ij}|X_i=X_j)$ for all $i,j$ on the diagonal blocks and $\E(A_{ij}|X_i\ne X_j)$ for all $i,j$ on the off-diagonal block. However, the sampled realizations are very different for the GBM. 

The main point of introducing this GBM is to have a simple model that accounts for having many more loops than the SBM does.
The SBM gives a good framework to understand how to cluster sparse graphs with some degree variations, where ``abstract'' edges occur frequently; i.e., when connection is not necessarily based on metric attributes (x can be friends with y for a certain reason, y can be friends with z for a different reason, while x and z have nothing in common). These abstract edges turn the SBM into a sparse graph with small diameter, which is an important feature in various applications, sometimes referred to as the ``small-world'' phenomenon. However, the  opposite effect also takes place when connections are based on metric attributes, i.e., if x and y are similar (or close), y and z are similar too, then x and z must also be similar to some extent. This creates many more short loops than in the SBM. 

One possibile way to study the effect of short loops is to consider adversaries that can modify the graph in certain ways. For example, monotone adversaries that only add edges inside clusters or only remove edges across clusters (for the assortative case), or adversaries having a certain budget of edges they can alter. The problem with such adversaries is that they typically ensure robustness to many fewer cliques than may be observed in applications, due to the classical downside of worst-case models where the possible obstructions may not be typical in applications. The goal of the GBM is to have a simple model for geometric connections and tangles, albeit with the usual downside of generative models.

Note that this version of the geometric block model is different than the version studied recently in \cite{abishek2}. The model of \cite{abishek2} introduces the geometry in a different manner: each vertex has a geometric attribute that does not have any community bias, e.g., a point process on $\mR^2$ or points uniformly drawn on the torus (or it could have been an isotropic Gaussian distribution on $\mR^2$), and each vertex has an independent equiprobable community label; then two vertices connect in proportion to their geometric distance (e.g., only if close enough), and the bias comes from their abstract community label. In contrast, in our GBM, vertices connect only based on their locations, and these locations are encoding the communities. This makes the GBM simpler to analyze than the model of \cite{abishek2}. In fact, we claim the following.    
\begin{conjecture}
Let $s,t \ge 0$ and $t$ such that $\gbm(n,s,t)$ has a giant component. 
Weak recovery is efficiently solvable in $\gbm(n,s,t)$ if and only if $s >0$.
\end{conjecture}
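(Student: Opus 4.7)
The ``only if'' direction is immediate: when $s=0$, the two Gaussian means coincide, so $U_v \sim \mathcal{N}(0, I_2)$ independently of $X_v$, and hence the graph $G$ is a measurable function of $U$ alone, independent of the label vector $X$. For any estimator, $\pp(\hat X = X \mid G) = \pp(\hat X = X)$, so weak recovery is information-theoretically impossible.

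For the ``if'' direction, assume $s>0$. My plan is to run graph powering followed by the spectral--rounding procedure of \cite{massoulie-STOC}. Pick $r = c\sqrt{n}/t$ for a suitable constant $c>0$ (large enough to bridge the separation $s$, small enough to preserve nontrivial spectral structure), and compute the second eigenvector $\phi_2$ of $A^{(r)}$ restricted to the giant component. Above the continuum percolation threshold for $t$, standard random-geometric-graph arguments (e.g.\ \cite{penrose}) give that graph distances concentrate around Euclidean distances rescaled by $\sqrt{n}/(t\alpha)$ for some percolation constant $\alpha$; hence $G^{(r)}$ agrees, up to a negligible fraction of edges, with the Euclidean ball graph at scale $\delta := c/\alpha = \Theta(1)$. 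Consequently $A^{(r)}/n$ should concentrate around the integral operator
\[
T f(x) \;=\; \int_{\mR^2} \mathbb{1}\{\|x-y\|\le \delta\}\, f(y)\, \mu(dy)
\]
on $L^2(\mu)$, where $\mu$ is the Gaussian-mixture density. The top eigenfunction of $T$ is the constant function (Perron--Frobenius for the positive kernel), so any community signal must sit in the second eigenvector. By the symmetry $x_1 \mapsto -x_1$ of $\mu$, the second eigenfunction $\psi_2$ is odd in $x_1$ and thus has $\Omega(1)$ correlation with $\mathrm{sign}(x_1)$. A direct Bayes computation gives $\pp(X_v=1 \mid U_v=x) = 1/(1+e^{s x_1})$, which is strictly decreasing in $x_1$ whenever $s>0$, so $\mathrm{sign}(x_1)$ is $\Omega_s(1)$-correlated with the community label. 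Combining these, $\phi_2$ inherits $\Omega_s(1)$ correlation with $X$, and the rounding step of \cite{massoulie-STOC} produces an estimator $\hat X$ with $|\langle X, \hat X \rangle|/n = \Omega_s(1)$, i.e., weak recovery.

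The main obstacles are two technical inputs that are plausible but demand dedicated work. The first is the percolation input: existing chemical-distance concentration results are mostly stated for homogeneous Poisson point processes, while extending them to a Gaussian-mixture intensity requires uniformity of the effective percolation constant $\alpha$ across regions of varying local density $n\mu(x)$, and a careful treatment of boundary effects where the density is small. The second, and likely harder, is spectral concentration for $A^{(r)} = \mathbb{1}(A^r \ge 1)$: entries have complicated correlations through shared paths, and the non-linearity $\mathbb{1}(\,\cdot\,\ge 1)$ precludes a direct matrix-Bernstein argument. A path-counting approach along the lines of \cite{massoulie-STOC,bordenave}, combined with a spectral comparison between $A^{(r)}$ and the limiting integral operator $T$ (for instance via the empirical-process framework for random geometric graphs), should yield both the spectral gap of $A^{(r)}$ and the alignment of $\phi_2$ with $\psi_2$; carrying this out rigorously is precisely what the footnote attached to the conjecture alludes to.
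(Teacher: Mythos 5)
Your sketch is plausible in outline, but it takes a genuinely different and much heavier route than the paper's own justification, and in doing so it imports an ingredient the paper deliberately treats as a separate open problem. The paper's argument for the ``if'' direction is elementary and entirely non-spectral: if there are two giant components, label them separately; if there is a single giant component, pick two anchor vertices that are far apart in graph distance (e.g.\ a pair realizing the maximal finite distance, or two far-apart high-degree vertices), assign each anchor to a different community, and classify every other vertex by its nearest anchor in graph distance. The only technical input this needs is the chemical-distance concentration $d_G(u,v)\approx c\|U_u-U_v\|$ from continuum percolation (the same \cite{yao2011large}-type input you invoke, with the same caveat about extending it from homogeneous Poisson intensity to the Gaussian mixture). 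Your route instead runs graph powering plus spectral rounding, which requires, on top of the percolation input, a spectral concentration statement for $A^{(r)}=\1(A^r\ge 1)$ around a limiting integral operator and alignment of its second eigenvector with the odd eigenfunction. That second ingredient is precisely the content of the paper's Conjecture \ref{conj_gbm}, which the authors state separately and only justify heuristically; so your proposal effectively reduces an easier conjecture to a harder one. What your approach buys is that, if carried out, it would establish optimality of the powering algorithm on the GBM, not merely solvability of weak recovery; what the paper's approach buys is that the claim ``solvable iff $s>0$'' needs nothing beyond distance concentration. Your ``only if'' direction matches the paper's (trivial) one. If your goal is to justify this particular conjecture, you should fall back to the anchor-based argument and reserve the spectral machinery for Conjecture \ref{conj_gbm}.
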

\begin{proof}[Justification]
Obviously weak recovery is not solvable if $s=0$, so the claim follows by showing that weak recovery is efficiently solvable if $s>0$. 
A first simple case is when the graph has two separate giant components, in which case we can simply assign each component to a different community. 
When there is a single giant component, one does not need an elaborate algorithm either. 

As a first case, take a vertex uniformly at random among all $n$ vertices and assign that vertex to community 1; then assign the $n/2$ closest vertices to that vertex (in the graph distance) to community 1, and the rest to community 2. With probability $1/2+\Omega(1)$, this will already solve weak recovery. To succeed with probability $1-o(1)$, one can pick two anchor vertices that are ``far away'' in graph-distance, and assign each to a different community and the rest of the vertices to the same community as their closest anchor vertex. 

To make ``far away'' more precise, one could take two vertices of maximal (finite) distance. As soon as $s>0$, these will not be aligned with the $y$-axis giving two appropriate anchor nodes. To prove the result, it may however be easier to focus on the highest degree vertices. If the parameters are such that there are two regions of maximum vertex density in $\mathbb{R}^2$, we could simply pick two very high degree vertices that are far away from each other and use them as anchors. This requires however concentration results for the distances among pair of vertices. 

In fact, if we had a mixture of two uniform distributions of overlapping support, one could use a triangulation algorithm to reconstruct approximately the vertex locations (up to overall rotations and translations), using concentration results for distances in random geometric graphs (cf. \cite{yao2011large}). For example, for a uniform distribution in the unit square, for all pairs of vertices $u$ and $v$ in the giant, the graph distance $d_G(u,v)$ would be well concentrated around $c\|U_u - U_v\|$, for some constant $c$. A variant of this would then be needed for the non-homogeneous (Gaussian) model used here. These techniques are however out of scope for the current paper. 
\end{proof}

We next give with Figure \ref{fig:graphpoweringgbm} numerical results supporting that graph powering achieves the above threshold. 
\begin{figure}[H]
\begin{center}
\includegraphics[scale=.6]{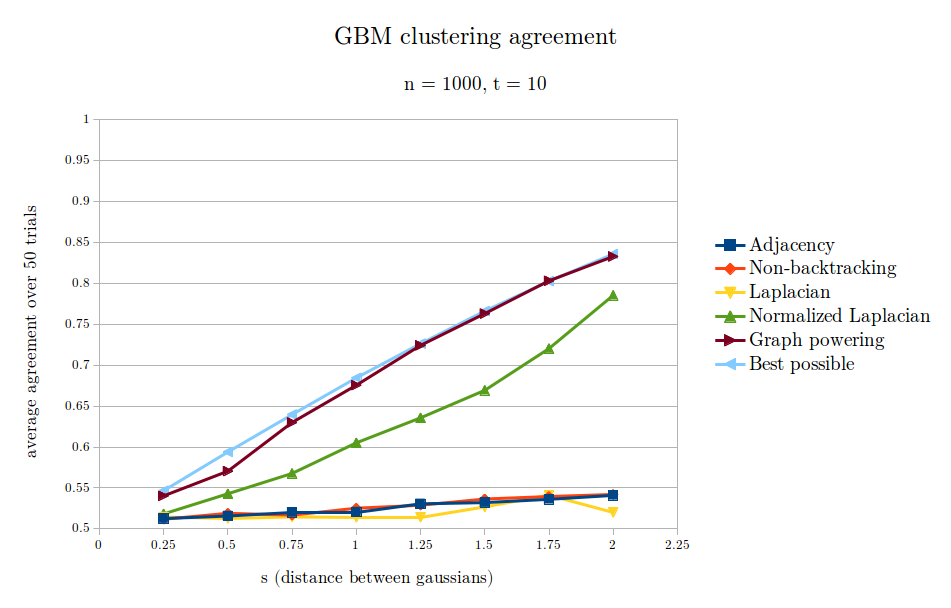}
\caption{The truth-agreement values of clusters calculated by different methods on $G \sim \gbm(n,s,t)$ with $n = 1000$, $t = 10$. Each entry is the average agreement over 50 runs. The spectral clustering on the powered adjacency matrix has powering parameter $r = 0.3 d$, where $d$ is the diameter. ``Best possible'' indicates the information-theoretic upper bound for agreement, if the locations of the vertices in the plane were known to the algorithm.} \label{fig:graphpoweringgbm}
\end{center}

\end{figure}

%
%
%
%

\begin{conjecture}\label{conj_gbm}
Let $s,t \ge 0$ and $t$ such that $\gbm(n,s,t)$ has a giant component. 
Taking the second largest eigenvector of the powered adjacency matrix $A^{(r)}$ with $r=\e \cdot \diam(G)$, for $\e$ small enough, and 
rounding it by the sign or by the median (dividing the vertices into those with above- and below-median sums of the entries) solves weak recovery in $\gbm(n,s,t/\sqrt{n})$ whenever weak recovery is solvable.
\end{conjecture}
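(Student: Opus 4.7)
The plan is to reduce the spectral analysis of $A^{(r)}$ on the random geometric graph $G$ to the spectral analysis of a deterministic integral operator associated with the Gaussian-mixture density, and then transfer the structure back via matrix concentration and Davis--Kahan. First, I would establish a shape theorem for graph distance: conditioned on the locations $\{U_v\}$, $G$ is a random geometric graph with connection radius $t/\sqrt{n}$, and in the supercritical regime one has, for uniform densities, $d_G(u,v) = (c(t) + o(1))\sqrt{n}\,\|U_u - U_v\|$ whenever $u,v$ lie sufficiently inside the giant component, with constant $c(t)$ depending only on $t$ (cf.\ \cite{yao2011large}). I would extend this to the Gaussian-mixture setting by restricting attention to a compact region $\Omega$ on which the mixture density is bounded away from zero; a routine two-scale argument shows that with high probability almost all vertices of the giant lie in $\Omega$, and within $\Omega$ the shape theorem transfers with a constant $c$ that depends mildly on local density but remains bounded above and below.

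Given this, for $u,v \in \Omega$ one has $(A^{(r)})_{u,v}=1$ iff $\|U_u-U_v\| \le R + o(1)$, where $R := r/(c\sqrt{n})$. Taking $r=\varepsilon \cdot \diam(G)$ and noting that $\diam(G)=\Theta(\sqrt{n})$ (again by the shape theorem applied to a diametral pair) makes $R$ a positive constant tunable by $\varepsilon$. Thus, up to boundary effects, $A^{(r)}$ is the geometric-graph adjacency matrix on the Gaussian-mixture point cloud with connection radius $R$.

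Next, I would analyze the integral operator $T_K f(x) = \int_{\mR^2} \mathbb{1}[\|x-y\| \le R]\, f(y)\, \rho(y)\, dy$, where $\rho$ is the Gaussian-mixture density. By Perron--Frobenius the top eigenfunction $\phi_1$ is positive with simple eigenvalue $\lambda_1$. The key point is that, whenever $s>0$, the second eigenfunction $\phi_2$ is antisymmetric in the first coordinate, and hence encodes the community label via $x \mapsto \mathrm{sign}(x_1)$. The quantitative input is a strict gap $\lambda_2(T_K) > \lambda_3(T_K)$, which I would obtain by decomposing $L^2(\rho)$ into functions symmetric and antisymmetric in $x_1$: $T_K$ preserves this decomposition, the top eigenvalue on the symmetric part equals $\lambda_1$, and the top eigenvalue on the antisymmetric part is strictly positive for $s>0$ (because the antisymmetric ``difference of densities'' function is not in the kernel of the smoothing operator) and separated from subsequent symmetric/antisymmetric eigenvalues once $R$ is small enough. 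Conditional on $\{U_v\}$, the entries $(A^{(r)})_{u,v}$ concentrate around $\mathbb{1}[\|U_u-U_v\| \le R]$; a kernel-matrix concentration argument applied to $n^{-1}A^{(r)}$ then transfers convergence of its top three eigenvalues to those of $T_K$ in operator norm, and Davis--Kahan transfers this to the eigenvectors. Hence $\phi_2(A^{(r)})(v)$ concentrates around $\phi_2(T_K)(U_v)$, and sign or median rounding correlates with $X_v$ with probability $1/2 + \Omega(1)$.

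The main obstacle is the shape theorem for graph distances in an inhomogeneous random geometric graph. Off-the-shelf uniform-density theorems do not apply because the Gaussian density varies across its support, and in low-density regions the graph metric may distort or even become infinite. A plausible route is to discretize $\mR^2$ into cells on which the density is essentially constant, prove a local shape theorem on each cell, and glue them via a quasi-geodesic / renormalization argument; controlling the few ``trapped'' vertices near the tails and quantifying the boundary error so that it does not corrupt the second eigenvector is, in my view, the most delicate step. A secondary obstacle is verifying that the antisymmetric spectral gap of $T_K$ is uniform in $R$ on the relevant range, which requires quantitative perturbation estimates for the smoothing operator as $\varepsilon \to 0$.
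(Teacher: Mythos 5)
The target statement is a conjecture, and the paper itself only supplies a heuristic proof plan (in Section \ref{justifications}), so neither your argument nor the paper's is a complete proof; the useful comparison is between the two plans. Your route is genuinely different. The paper works directly on the powered graph: it writes the Rayleigh quotient of $A^{(r)}$ as a Dirichlet form $\sum_{(v,v')}\frac{w_v w_{v'}}{2}(w'_v/w_v-w'_{v'}/w_{v'})^2$ weighted by the Perron eigenvector, argues that the minimizer orthogonal to $w$ must vary slowly across power-edges and therefore interpolates between the two extremes of the cloud along the $x$-direction, and then does a variance computation on $r$-ball sizes to argue that random fluctuations do not break the $x_1\mapsto -x_1$ symmetry (the failure mode it establishes for small $r$). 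Your route instead passes to a deterministic limit object: a shape theorem turning $A^{(r)}$ into a geometric kernel matrix, spectral convergence to an integral operator $T_K$, and a symmetric/antisymmetric decomposition identifying $\phi_2$ as the community-revealing eigenfunction, with Weyl/Davis--Kahan to transfer back. Your approach is more systematic and would, if completed, localize all the probabilistic difficulty in one place (the shape theorem plus an operator-norm bound on the error matrix), whereas the paper's plan is more robust to not knowing the limit metric but leaves the symmetry-breaking question to a rougher second-moment heuristic.

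Two concrete gaps in your plan. First, there is an internal inconsistency: you correctly note that the graph-distance speed $c$ in an inhomogeneous RGG depends on the local density, but you then assert that $(A^{(r)})_{u,v}=\1[\|U_u-U_v\|\le R+o(1)]$ for a single constant $R$. For the Gaussian mixture this is false; the graph ball of radius $r$ is a Euclidean ball whose radius varies with position (larger where $\rho$ is large), so the correct limit kernel is $\1[d_\rho(x,y)\le \mathrm{const}]$ for a density-dependent metric $d_\rho$. The symmetric/antisymmetric decomposition survives because $\rho$, and hence $d_\rho$, is invariant under $x_1\mapsto -x_1$, but the operator you analyze must be restated accordingly, and the claim that the antisymmetric gap is ``uniform in $R$'' needs to be re-examined for this non-Euclidean kernel. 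Second, the step ``conditional on $\{U_v\}$, the entries of $A^{(r)}$ concentrate around $\1[\|U_u-U_v\|\le R]$'' is not a concentration statement in the usual sense: conditional on the locations, $A^{(r)}$ is deterministic, and its entries are strongly dependent functionals (graph distances) of the whole point cloud, so kernel-matrix concentration for independent or conditionally independent entries does not apply. What you actually need is a quantitative shape theorem stating that for all but $o(n)$ vertices the symmetric difference between the graph $r$-ball and the limit shape has $o(n)$ points, which yields $\|A^{(r)}-K\|_F=o(n)$ and hence suffices for Weyl and Davis--Kahan given a $\Theta(n)$ gap; this is exactly the percolation estimate that the paper declines to prove, and it remains the open core of both plans.
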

We refer to Section \ref{justifications} for a proof plan for this conjecture, and to Section \ref{implementations} for discussions on how to reduce $r$, such as to $r=\sqrt{\diam(G)}$.

\section{Comparisons of algorithms}
In this section, we compare some of the main algorithms for community detection on the SBM and GBM. We also introduce a hybrid block model (HBM), which superposes an SBM and a GBM. The HBM has the advantage of simultaneously having a short diameter, having abundant tangles, and being sparse.  We then illustrate the fact that graph powering is robust to the superposition of both types of edges (geometric and abstract edges), while other algorithms suffer on either of the other types of edges. 


\begin{definition}
Given a positive integer $n$ and $a,b,s,t,h \ge 0$, we define $\hbm(n,a,b,s,t,h_1,h_2)$ as the probability distribution over ordered pairs $(X,G)$ as follows. Let $(X,G_1) \sim \sbm(n,a,b)$, and $G_2$ drawn from $\gbm(n,s,t)$ with $X$ for the community labels, such that $G_1,G_2$ are independent conditionally on $X$. For each pair of vertices $u,v$, independently keep the edge from $G_1$ with probability $h_1$ and the edge from $G_2$ with probability $h_2$, and merge the edges if both are kept. Call the resulting graph $G$.
\end{definition}

\begin{table}[H] \label{tab:graphpoweringgbm}
\begin{center}
\begin{tabular}{ |c|c| } 
\multicolumn{2}{c}{Agreement Values on $\hbm(n,a,b,s,t,h_1,h_2)$} \\
\hline
Method &	Agreement (average over 50 trials) \\
\hline
Adjacency &	0.508 \\
\hline
Non-backtracking &	0.508 \\
\hline
Laplacian &	0.507 \\
\hline
Normalized Laplacian &	0.630 \\
\hline
Graph powering &	\textbf{0.666} \\
 \hline
\end{tabular}
\caption{The $\hbm(n,a,b,s,t,h_1,h_2)$ parameters are $n = 4000$, $s = 1$, $t = 10$, $h_1 = h_2 = 0.5$, $a = 2.5$, $b = 0.187$. The graph powering parameter is $r = 0.4 d$, where $d$ is the diameter.}
\end{center}

\end{table}

\begin{figure}[H]
\begin{center}
\includegraphics[scale=.6]{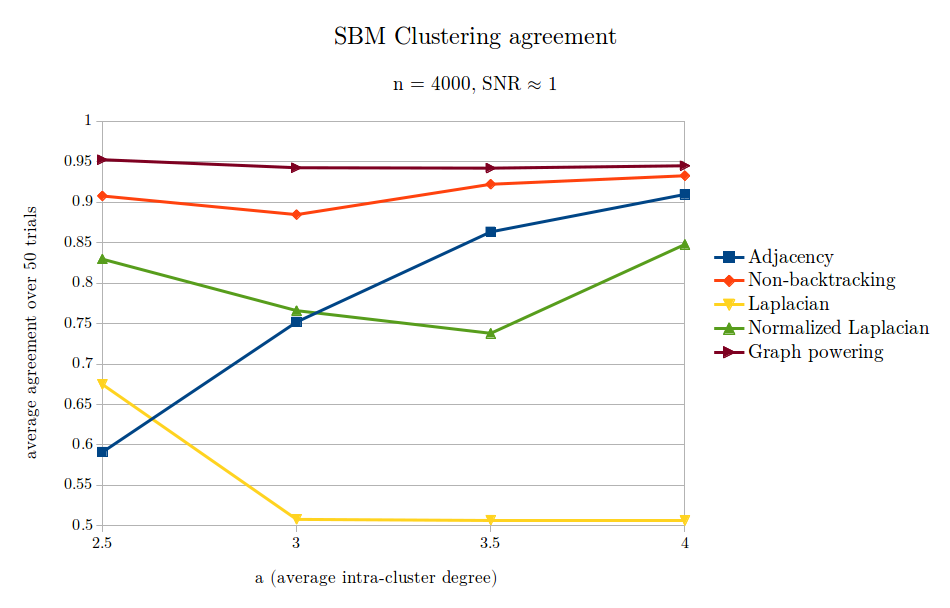}
\caption{The $\sbm(n,a,b)$ parameters are $n = 4000$, $a$ given by the $x$-axis, and $\snr = (a-b)^2 / (2(a+b))$ chosen to be very close to $1$ (the KS-threshold for weak recovery). The graph powering parameter is $r = 0.5 d$, where $d$ is the diameter. Although the normalized Laplacian and the adjacency matrix may seem to cluster the graph relatively well, when $n$ is increased the performance of these methods drops dramatically. For instance, when $n = 100000$, the normalized Laplacian method consistently scores below 0.51 agreement on the parameter ranges in the above graph. Similarly, for $n = 100000$, $\snr \approx 1$ and $a \leq 3$, the adjacency method scores below 0.55 agreement.}\label{sbm_num}
\end{center}

\end{figure}


\section{Derivation of graph powering}\label{derivation}

\subsection{Bayes vs.\ ML and beliefs vs.\ cuts}

The most obvious approach to community recovery when we have a generative model would be to simply return the {\it most likely} communities given the graph in question, which reduces to finding a minimal balanced cut in the symmetric SBM. However, this does not always produce good results as demonstrated by the example discussed in Figure \ref{bad-tree}.  Fundamentally, focusing on the single most likely community assignment forces one to treat each vertex as being $100\%$ likely to be in the designated community when attempting to determine the communities of other vertices, and that is not the right way to handle uncertainty. To the degree that our goal is to recover communities with some given degree of accuracy, we want to assign the vertices to communities in the way that maximizes the probability that we get at least that fraction of them right. Typically, we can come fairly close to that by determining how likely each vertex is to be from each community, and then assigning it to the most likely community. More formally, we could reasonably want to guess a community $x_v$ for each $v$ in such a way as to maximize $\pp(X_v=x_v|G=g)$.

To further motivate this approach, consider a weakly symmetric SBM, i.e., a two-community SBM with two communities drawn from a prior $(p_1,p_2)$, $p_1 \ne p_2$, and connectivity $Q/n$ such that $PQ$ has two rows with the same sum. In other words, the expected degree of every vertex is the same (and weak recovery is non-trivial), but the model is slightly asymmetrical and one can determine the community labels from the partition. In this case, we can work with the agreement between the true labels and the algorithm reconstruction without use of the relabelling $\pi$, i.e., 
\begin{align}
A(X,\hat{X}(G))= \sum_{v=1}^n \1(X_v = \hat{X}_v(G)).
\end{align}
Consider now an algorithm that maximizes the expected agreement, i.e, 
\begin{align}
\E A(X,\hat{X}(G))= \sum_{v=1}^n \pp (X_v = \hat{X}_v(G)).
\end{align}
To solve weak recovery, one needs a non-trivial expected agreement, and to maximize the above, one has to maximize each term given by 
\begin{align}
\pp (X_v = \hat{X}_v(G))= \sum_{g}\pp (X_v = \hat{X}_v(g)|G=g)  \pp (G=g),
\end{align}
i.e., $\hat{X}_v(g)$ should take the maximal value of the function $ x_v \mapsto \pp (X_v = x_v |G=g)$. In other words, we need the marginal $\pp (X_v = \cdot |G=g)$. Note that in the symmetric SBM, i.e., $\sbm(n,a,b)$, this marginal is $1/2$. Hence we need to break the symmetry between communities in order for these marginals to be meaningful.

Technically, any method of deciding which community is which given a partition of the vertices into communities would break the symmetry, but not all of them break it in a useful way. To illustrate this, let $(X,G)\sim\sbm(n,10,0)$ for some odd $n$. The overwhelming majority of the vertices in $G$ are in one of two giant components that each contain solely vertices from one community. As such, we should be able to declare the community of the vertices from one of these components community $1$ in order to assign every vertex in a giant component to the correct community with probability $1-o(1)$. However, if we were to declare that community $1$ was the community with an odd number of vertices, then we would not be able to determine which of the giant components contained vertices from community $1$ due to uncertainty about which community the isolated vertices were in. As a result, every vertex would be in community $1$ with probability $1/2+o(1)$. We could declare that community $1$ was the larger community. At that point, the vertices from the largest component would be in community $1$ with probability $1/2+\Omega(1)$. However, there would be a nonvanishing probability that enough of the isolated vertices were in the same community as the smaller giant component to make that giant component the one with vertices from community $1$. So, every vertex would be in community $1$ with probability $1-\Omega(1)$ and community $2$ with probability $1-\Omega(1)$. 

The point is, in any case where community detection is possible, there is a partially known partition of the vertices into communities. However, some methods of deciding which set from the partition should be given which label result in the uncertainty in the partition leading to uncertainty in which set from the partition is which community. One reasonable approach to breaking the symmetry would be to pick the $\sqrt{n}$ vertices of highest degree, and then declare that community $1$ was the community that contained the largest number of these vertices. The fact that these vertices have high degrees means that they have large numbers of edges that provide information on their communities. As a result, we would hopefully be able to determine which of them were in the same communities as each other with accuracy $1-o(1)$. That in turn would tend to allow us to determine which of them were in community $1$ with an accuracy of $1-o(1)$. Finally, that would fix which community was community $1$ in a way that would be unlikely to add significant amounts of ambiguity about other vertices' communities. When applicable for the rest of this paper, we will assume that the symmetry has been broken in a way that does not introduce unnecessary uncertainty.

\subsection{Belief propagation and nonbacktracking linearization}
One approach to estimate the marginal $P[X_{v_0}=x_{v_0}|G=g]$ is to proceed iteratively starting from some initial guesses. We will derive this for the general SBM, as some aspects are more easily explained in the general context. The main ideas in this section already appear in papers such as \cite{redemption}; we provide here a slightly different and more detailed account, including the symmetry breaking and the influence of non-edges in details.       

\begin{lemma}
Let $(X,G)\sim\sbm(n,p,Q/n)$, $v_0\in G$, and $g$ be a possible value of $G$. Let $v_1,... v_m$ be the vertices that are adjacent to $v_0$, and define the vectors $q_1,...,q_m$ such that for each community $i$ and vertex $v_j$, 
\[(q_j)_i=P[X_{v_j}=i|G\backslash \{v_0\}=g\backslash \{v_0\}].\]
If the probability distributions for $X_{v_1}, X_{v_2},..., X_{v_m}$ ignoring $v_0$ are nearly independent in the sense that for all $x_1,...,x_m$ we have that
\begin{align*}&P\left[X_{v_1}=x_1,X_{v_2}=x_2,...,X_{v_m}=x_m|G\backslash \{v_0\}=g\backslash \{v_0\}\right]\\
&=(1+o(1))\prod_{i=1}^m P\left[X_{v_i}=x_i|G\backslash \{v_0\}=g\backslash \{v_0\}\right],\end{align*}
then with probability $1-o(1)$, $G$ is such that for each community $i$, we have that
\[P[X_{v_0}=i|G=g]=(1+o(1))\frac{p_i\prod_{j=1}^m (Q q_j)_i}{\sum_{i'=1}^k p_{i'}\prod_{j=1}^m (Q q_j)_{i'}}.\]
\end{lemma}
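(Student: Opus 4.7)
The natural entry point is Bayes' rule applied conditional on $G\setminus\{v_0\}$. Because the labels are i.i.d.\ and $G\setminus\{v_0\}$ is a function of $X\setminus\{v_0\}$ and fresh edge coins, $X_{v_0}$ is independent of $G\setminus\{v_0\}$, so $P[X_{v_0}=i\mid G\setminus\{v_0\}=g\setminus\{v_0\}]=p_i$. Let $E_0$ denote the event that $v_0$ is adjacent in $G$ to exactly $v_1,\ldots,v_m$. Then
\[ P[X_{v_0}=i\mid G=g]\;=\;\frac{p_i\,P[E_0\mid X_{v_0}=i,\,G\setminus\{v_0\}=g\setminus\{v_0\}]}{\sum_{i'}p_{i'}\,P[E_0\mid X_{v_0}=i',\,G\setminus\{v_0\}=g\setminus\{v_0\}]}, \]
and the task reduces to computing the numerator up to a factor independent of $i$.

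Given the full label vector $X$, the edges incident to $v_0$ are mutually independent Bernoullis, so
\[ P[E_0\mid X_{v_0}=i,X\setminus\{v_0\}]\;=\;\prod_{j=1}^m \frac{Q_{i,X_{v_j}}}{n}\,\prod_{u\notin N(v_0)\cup\{v_0\}}\!\!\left(1-\frac{Q_{i,X_u}}{n}\right). \]
Averaging over $X\setminus\{v_0\}$ conditional on $G\setminus\{v_0\}=g\setminus\{v_0\}$, the hypothesized near-independence of the posterior marginals of $X_{v_1},\ldots,X_{v_m}$ lets the edge product's expectation factor as
\[ \mathbb{E}\!\left[\prod_{j=1}^m \frac{Q_{i,X_{v_j}}}{n}\,\Big|\,G\setminus\{v_0\}\right]\;=\;(1+o(1))\prod_{j=1}^m \frac{(Qq_j)_i}{n}. \]
For the non-edge product, a first-order Taylor expansion gives $\log\prod_u(1-Q_{i,X_u}/n)=-\tfrac{1}{n}\sum_u Q_{i,X_u}+O(1/n)$, and concentration of $\tfrac{1}{n}\sum_u Q_{i,X_u}$ around its prior mean $(Qp)_i$ renders this product equal to $(1+o(1))\exp(-(Qp)_i)$ with high probability over $G$. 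In the symmetric and weakly symmetric regimes this paper targets, the row sums of $pQ$ are constant in $i$, so the exponential factor cancels between numerator and denominator, as do the $n^{-m}$ factors, yielding the claimed identity.

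Two technical points warrant extra care. First, decoupling the edge product from the non-edge product inside the expectation: once the non-edge product concentrates around a deterministic quantity, one can replace it by that quantity up to $(1+o(1))$ and pull it outside the edge-expectation; without the concentration step this replacement is invalid. Second, and the main obstacle, is justifying the concentration of $\tfrac{1}{n}\sum_u Q_{i,X_u}$ \emph{conditional on $G\setminus\{v_0\}$}. My plan is a two-scale argument: for vertices $u$ in a local $O(\log n)$-neighborhood of $v_0$ the contribution to the sum is at most $O((\log n)/n)=o(1)$ regardless of how $G$ skews their posterior marginals; for vertices $u$ farther away, a McDiarmid-type bounded-difference inequality applied to $\tfrac{1}{n}\sum_u Q_{i,X_u}$, combined with the fact that the conditional distribution of such distant $X_u$'s remains close to the prior in total variation (the same philosophy underlying the near-independence hypothesis), gives the needed $o(1)$ deviation, at which point the three displays above combine to finish the proof.
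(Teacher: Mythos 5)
Your proposal is correct and follows the only natural route; the paper's entire proof of this lemma is the single sentence ``This follows from Bayes' Rule,'' so you are supplying essentially all of the detail that the authors omit. Two remarks on the details you add. First, you correctly identify that the stated conclusion requires the non-edge factors $\exp(-(Qp)_i)$ to cancel between numerator and denominator, which holds only when the row sums of $pQ$ are constant in $i$ (equal expected degrees across communities). The lemma as written for general $\sbm(n,p,Q/n)$ silently assumes this; the paper only makes the assumption explicit later, in the linearization subsection, and its ``Adjusted'' algorithms exist precisely because the non-edge terms do not vanish in general. Flagging this is a genuine improvement over the paper's non-proof. Second, your plan for establishing concentration of $\tfrac{1}{n}\sum_u Q_{i,X_u}$ \emph{conditional} on $G\setminus\{v_0\}$ via McDiarmid is the weakest link: bounded-difference inequalities do not apply directly to the conditional law of $X$ given the graph, which is not a product measure. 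A cleaner route matching the lemma's ``with probability $1-o(1)$, $G$ is such that\dots'' phrasing is an annealed argument: Hoeffding on the i.i.d.\ prior labels gives $\pp[|\tfrac{1}{n}\sum_u Q_{i,X_u}-(Qp)_i|>\e]\le e^{-cn\e^2}$, and Markov's inequality then shows that for all but a $o(1)$-probability set of graphs $g$ the conditional probability of this deviation given $G\setminus\{v_0\}=g\setminus\{v_0\}$ is $o(1)$, which is all you need. With that substitution your argument is complete and considerably more rigorous than what the paper provides.
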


This follows from Bayes' Rule.

\begin{remark}
The requirement that
\begin{align*}
&P\left[X_{v_1}=x_1,X_{v_2}=x_2,...,X_{v_m}=x_m|G\backslash \{v_0\}=g\backslash \{v_0\}\right]\\
&=(1+o(1))\prod_{i=1}^m P\left[X_{v_i}=x_i|G\backslash \{v_0\}=g\backslash \{v_0\}\right]
\end{align*}
is necessary because otherwise dependencies between the communities of the $v_i$ would complicate the calculation. It is reasonable because if $(X,G)\sim\sbm(n,p,Q/n)$, $v_0$ is a random vertex in $G$, and $v_1,... v_m$ are the neighbors of $v_0$, then the following holds. With probability $1-o(1)$, for every $i$ and $j$ every path between $v_i$ and $v_j$ in $G\backslash\{v_0\}$ has a length of $\Omega(\log n)$. So, we would expect that knowing the community of $v_i$ would provide very little evidence about the community of $v_j$.
\end{remark}

Given preliminary guesses of the vertices' communities, we might be able to develop better guesses by applying this repeatedly. However, in order to estimate $P[X_v=i|G=g]$ one needs to know $P[X_{v'}=i'|G\backslash \{v\}=g\backslash\{v\}]$ for all $i'$ and all $v'$ adjacent to $v$. Then in order to compute those, one needs to know $P[X_{v''}=i'|G\backslash\{v,v'\}=g\backslash\{v,v'\}]$ for all $v'$ adjacent to $v$, $v''$ adjacent to $v'$, and community $i'$. The end result is that if we wanted to apply the formula recursively with $t$ layers of recursion, we would need an estimate of $P[X_{v_t}=i|G\backslash \{v_0,...,v_{t-1}\}=g\backslash\{v_0,...,v_{t-1}\}]$, $(q_{v_0,v_1,...,v_t})_i$ for every path $v_0,...,v_t$ in $G$. Then we could refine these estimates by means of the following algorithm.

\vspace{1 cm}
\noindent
{\em PathBeliefPropagationAlgorithm(t,q, p,Q, G):}
\begin{enumerate}
\item Set $q^{(0)}=q$.

\item For each $0<t'< t$, each $v\in G$, and each community $i$, set
\begin{align*}
&(q^{(t')}_{v_0,...,v_{t-t'}})_i\\
&=\frac{p_i\prod_{v_{t-t'+1}:(v_{t-t'},v_{t-t'+1})\in E(G), v_{t-t'+1}\not\in (v_0,...,v_{t-t'})}(Q q^{(t'-1)}_{v_0,...,v_{t-t'},v_{t-t'+1}})_i}{\sum_{i'=1}^k p_{i'}i\prod_{v_{t-t'+1}:(v_{t-t'},v_{t-t'+1})\in E(G), v_{t-t'+1}\not\in (v_0,...,v_{t-t'})}(Q q^{(t'-1)}_{v_0,...,v_{t-t'},v_{t-t'+1}})_{i'}}.\end{align*}

\item For each $v\in G$ and each community $i$, set 
\[(q^{(t)}_{v})_i=\frac{p_i\prod_{v': (v,v')\in E(G)} (Q q^{(t-1)}_{v,v'})_i}{\sum_{i'=1}^k p_{i'}\prod_{v': (v,v')\in E(G)}(Q q^{(t-1)}_{v,v'})_{i'}}.\]

\item Return $q^{(t)}$.
\end{enumerate}
\vspace{1 cm}

Of course, that requires nontrivial initial estimates for the probability that each vertex is in each community. In the stochastic block model, whatever approach we used to break the symmetry between communities will generally give us some information on what communities the vertices are in. For instance, if we have a graph drawn from $\sbm(n,a,b)$, we select the $\sqrt{n}$ vertices of highest degree, and we declare that community $1$ is the community that contains the majority of these vertices, then we can conclude that each of these vertices is in community $1$ with probability $1/2+\Theta(1/\sqrt[4]{n})$. So, we could set $(q_{v_0,v_1,...,v_t})_i$ equal to $1/2+1/\sqrt[4]{4\pi^2n}$ when $i=1$ and $v_t$ is one of the selected vertices, $1/2-1/\sqrt[4]{4\pi^2n}$ when $i=2$ and $v_t$ is one of the selected vertices, and $1/2$ when $v_t$ is any other vertex. Unfortunately, this would not work. These prior probabilities give every vertex at least a $1/2$ probability of being in community $1$ and if $a>b$, the existence of an edge between two vertices is always evidence that they are in the same community. As a result, there is no way that this algorithm could ever conclude that a vertex has a less than $1/2$ chance of being in community $1$, and it will probably end up concluding that all of the vertices in the main component are in community $1$. The fundamental problem is that the abscence of an edge between $v$ and $v'$ provides evidence that these vertices are in different communities, and this algorithm fails to take that into account. Generally, as long as our current estimates of vertice's communities assign the right number of vertices to each community, each vertex's nonneigbors are balanced between the communities, so the nonedges provide negligible amounts of evidence. However, if they are not taken into account, then any bias of the estimates towards one community can grow over time. 

The obvious solution to this would be to add terms to the formula for $q^{(t')}_{v_0,...,v_{t-t'}}$ that account for all $v_{t-t'+1}$ that are not adjacent to $v_{t-t'}$. However, that would mean that instead of merely having a value of $q_{v_0,...,v_t}$ for every path of length $t$ or less we would need to have a value of $q_{v_0,...,v_t}$ for every $(t+1)$-tuple of vertices. That would increase the memory and time needed to run this algorithm from the already questionable $ne^{O(t)}$ to the completely impractical $O(n^{t+1})$. A reasonable way to make the algorithm more efficient is to assume that  $P[X_{v''}=i'|G\backslash\{v,v'\}=g\backslash\{v,v'\}]\approx P[X_{v''}=i'|G\backslash\{v'\}=g\backslash\{v'\}]$, which should generally hold as long as there is no small cycle containing $v$, $v'$, and $v''$. We can also reasonably assume that if $v$ is not adjacent to $v'$ then $P[X_{v'}=i'|G\backslash\{v\}=g\backslash\{v\}]\approx P[X_{v'}=i'|G=g]$. Using that to simplify the proposed approach, we would need an intial estimate of $P[X_{v'}=i|G\backslash\{v\}=g\backslash\{v\}]$, $(q_{v,v'})_i$, for each community $i$ and each adjacent $v$ and $v'$, and an initial estimate of $P[X_{v}=i|G=g]$, $(q_v)_i$ for each community $i$ and vertex $v$. Then we could develop better guesses by means of the following algorithm.

\vspace{1 cm}
\noindent
{\em AdjustedBeliefPropagationAlgorithm(t,q, p,Q, G):}
\begin{enumerate}
\item Set $q^{(0)}=q$.

\item For each $0<t'< t$: \begin{enumerate}

\item For each $(v,v')\in E(G)$, and each community $i$, set
\begin{align*}
&(q^{(t')}_{v,v'})_i\\
&=\frac{p_i\prod\limits_{v'': (v',v'')\in E(G), v''\ne v} (Q q^{(t'-1)}_{v',v''})_i\prod\limits_{v'': (v',v'')\not\in E(G),v''\ne v'} [1-(Q q^{(t'-1)}_{v''})_i/n]}{\sum_{i'=1}^k p_{i'}\prod\limits_{v'': (v',v'')\in E(G), v''\ne v}(Q q^{(t'-1)}_{v',v''})_{i'}\prod\limits_{v'': (v',v'')\not \in E(G),v''\ne v'}[1-(Q q^{(t'-1)}_{v''})_{i'}/n]}.\end{align*}

\item For each $v\in G$, and each community $i$, set
\[(q^{(t')}_{v})_i=\frac{p_i\prod\limits_{v': (v,v')\in E(G)} (Q q^{(t'-1)}_{v,v'})_i\prod\limits_{v': (v,v')\not\in E(G),v'\ne v} [1-(Q q^{(t'-1)}_{v'})_i/n]}{\sum_{i'=1}^k p_{i'}\prod\limits_{v': (v,v')\in E(G)}(Q q^{(t'-1)}_{v,v'})_{i'}\prod\limits_{v': (v,v')\not \in E(G),v'\ne v}[1-(Q q^{(t'-1)}_{v'})_{i'}/n]}.\]

\end{enumerate}
\item For each $v\in G$, and each community $i$, set
\[(q^{(t)}_{v})_i=\frac{p_i\prod\limits_{v': (v,v')\in E(G)} (Q q^{(t-1)}_{v,v'})_i\prod\limits_{v': (v,v')\not\in E(G),v'\ne v} [1-(Q q^{(t-1)}_{v'})_i/n]}{\sum_{i'=1}^k p_{i'}\prod\limits_{v': (v,v')\in E(G)}(Q q^{(t-1)}_{v,v'})_{i'}\prod\limits_{v': (v,v')\not \in E(G),v'\ne v}[1-(Q q^{(t-1)}_{v'})_{i'}/n]}.\]

\item Return $q^{(t)}$.
\end{enumerate}

\vspace{1 cm}

\begin{remark}
If one computes $q^{(t')}$ using the formula exactly as written, this will take $O(tn^2)$ time. However, one can reduce the run time to $O(tn)$ by computing $\prod_{v''\in G} [1-(Qq_{v''}^{(t'-1)})_i/n]$ once for each $i$ and $t'$ and then dividing out the terms for all $v''$ that should be excluded from the product in order to calculate each of the $(q^{(t')}_{v,v'})_i$ and the $(q^{(t')}_{v})_i$.
\end{remark}

Alternately, if one has good enough initial guesses of the communities not to need to worry about the estimates becoming biased towards one community, one can use the following simplified version.

\vspace{1 cm}
\noindent
{\em BeliefPropagationAlgorithm(t,q, p,Q, G):}
\begin{enumerate}
\item Set $q^{(0)}=q$.

\item For each $0<t'< t$, each $v\in G$, and each community $i$, set
\[(q^{(t')}_{v,v'})_i=\frac{p_i\prod_{v'': (v',v'')\in E(G), v''\ne v} (Q q^{(t'-1)}_{v',v''})_i}{\sum_{i'=1}^k p_{i'}\prod_{v'': (v',v'')\in E(G), v''\ne v}(Q q^{(t'-1)}_{v',v''})_{i'}}.\]

\item For each $(v,v')\in E(G)$ and each community $i$, set 
\[(q^{(t)}_{v})_i=\frac{p_i\prod_{v': (v,v')\in E(G)} (Q q^{(t-1)}_{v,v'})_i}{\sum_{i'=1}^k p_{i'}\prod_{v': (v,v')\in E(G)}(Q q^{(t-1)}_{v,v'})_{i'}}.\]

\item Return $q^{(t)}$.
\end{enumerate}
\vspace{1 cm}

This algorithm ends with a probability distribution for the community of each vertex that is hopefully better than the initial guesses, and it can be run efficiently. However, it has several major downsides. First of all, it uses a non-trivial (e.g., non-linear) formula to calculate each sucessive set of probability distributions. In particular, one needs to know the parameters of the model in order to run the algorithm. Worse, it may not extend well to other models. For instance, if we wanted to convert it to an algorithm for the GBM, then instead of just keeping track of a probability distribution for each vertex's community, we would need to keep track of a probability distribution for each vertex's community and geometric location. The most obvious way to do that would require infinite memory, and it could be highly nontrivial to determine how to track a reasonable approximation of this probability distribution. As such, it seems that we need a more general algorithm.

As a first step towards this, note that we will probably have fairly low confidence in our original guesses of the vertices' communities; e.g., if these come from a random initial guess. If vertices from some communities have higher expected degrees than others, we can quickly gain significant amounts of evidence about the vertices' communities by counting their degrees, but otherwise our beliefs about all of the vertices' communities will be fairly close to their prior probabilities for a while. As such, it may be useful to focus on the first order approximation of our formula when our beliefs about the communities of $v_1,...,v_m$ are all close to the prior probabilities for a vertex's community, and vertices in each community have the same expected degree, $d$. In this case, every entry of $Qp$ must be equal to $d$. So, we have that
\begin{align*}
P[X_{v_0}=i|G=g]&\approx \frac{p_i\prod_{j=1}^m (Q q_j)_i}{\sum_{i'=1}^k p_{i'}\prod_{j=1}^m (Q q_j)_{i'}}\\ \\
&=\frac{p_i\prod_{j=1}^m [d+(Q (q_j-p))_i]}{\sum_{i'=1}^k p_{i'}\prod_{j=1}^m [d+(Q (q_j-p))_{i'}]}\\ \\
&\approx \frac{p_i[1+\sum_{j=1}^m (Q (q_j-p))_i/d]}{\sum_{i'=1}^k p_{i'}[1+\sum_{j=1}^m [(Q (q_j-p))_{i'}/d]}\\ \\
&= \frac{p_i[1+\sum_{j=1}^m (Q (q_j-p))_i/d]}{1+\sum_{j=1}^m p\cdot Q (q_j-p)/d}\\ \\
&=p_i+p_i\sum_{j=1}^m (Q (q_j-p))_i/d.\\
\end{align*}

We can then rewrite BeliefPropagationAlgorithm using this approximation in order to get the following:

\vspace{1 cm}
\noindent
{\em LinearizedBeliefPropagationAlgorithm(t, p, Q, q, G):}
\begin{enumerate}
\item Set $\epsilon^{(0)}_{v,v'}=q_{v,v'}-p$ for all $(v,v')\in E(G)$.

\item For each $0<t'< t$, and each $v\in G$, set
\[\epsilon^{(t')}_{v,v'}=\sum_{v'': (v',v'')\in E(G), v''\ne v} PQ\epsilon^{(t'-1)}_{v',v''}/d.\]

\item For each $(v,v')\in E(G)$, set 
\[q^{(t)}_{v}=p+\sum_{v': (v,v')\in E(G)} PQ\epsilon^{(t-1)}_{v,v'}/d.\]

\item Return $q^{(t)}$.
\end{enumerate}
\vspace{1 cm}

Once again, one may want to take the evidence provided by nonedges into account as well, in which case one gets the following.

\vspace{1 cm}
\noindent
{\em AdjustedLinearizedBeliefPropagationAlgorithm(t, p, Q, q, G):}
\begin{enumerate}
\item Set $\epsilon^{(0)}_{v,v'}=q_{v,v'}-p$ for all $(v,v')\in E(G)$.

\item Set $\epsilon^{(0)}_v=q_v-p$ for all $v\in G$.

\item For each $0<t'< t$:
\begin{enumerate}

\item For each $(v,v')\in E(G)$, set
\[\epsilon^{(t')}_{v,v'}=\sum_{v'': (v',v'')\in E(G), v''\ne v} PQ\epsilon^{(t'-1)}_{v',v''}/d-\sum_{v'':(v',v'')\not\in E(G),v''\ne v'} PQ\epsilon_{v''}^{(t'-1)}/n.\]

\item For each $v\in G$, set
\[\epsilon^{(t')}_{v}=\sum_{v': (v,v')\in E(G)} PQ\epsilon^{(t'-1)}_{v,v'}/d-\sum_{v': (v,v')\not\in E(G),v'\ne v} PQ\epsilon^{(t'-1)}_{v'}/n.\]

\end{enumerate}
\item For each $v\in G$, set 
\[q^{(t)}_{v}=p+\sum_{v': (v,v')\in E(G)} PQ\epsilon^{(t-1)}_{v,v'}/d-\sum_{v': (v,v')\not\in E(G),v'\ne v} PQ\epsilon^{(t-1)}_{v'}/n.\]

\item Return $q^{(t)}$.
\end{enumerate}
\vspace{1 cm}

So far, this does not seem like a particularly big improvement. However, this paves the way for further simplifications. First, we define the graph's nonbacktracking walk matrix as follows.
\begin{definition}
Given a graph, the graph's nonbacktracking walk matrix, $B$, is a matrix over the vector space with an orthonormal basis consisting of a vector for each directed edge in the graph. $B_{(v_1,v_2),(v'_1,v'_2)}$ is $1$ if $v'_2=v_1$ and $v'_1\ne v_2$, otherwise it is $0$. In other words, $B$ maps a directed edge to the sum of all directed edges starting at its end, except the reverse of the original edge.
\end{definition}

Then we define its adjusted nonbacktracking walk matrix as follows.
\begin{definition}
Given a graph $G$, and $d>0$, the graph's adjusted nonbacktracking walk matrix, $\widehat{B}$ is the $(2|E(G)|+n)\times (2|E(G)|+n)$ matrix such that for all $w$ in the vector space with a dimension for each directed edge and each vertex, we have that $\widehat{B} w=w'$, where $w'$ is defined such that \[w'_{v,v'}=\sum_{v'': (v',v'')\in E(G), v''\ne v} w_{v',v''}/d-\sum_{v'':(v',v'')\not\in E(G),v''\ne v'} w_{v''}/n\]
for all $(v,v')\in E(G)$ and
\[w'_{v}=\sum_{v': (v,v')\in E(G)} w_{v,v'}/d-\sum_{v': (v,v')\not\in E(G),v'\ne v} w_{v'}/n\] 
for all $v\in G$.
\end{definition}

These definitions allows us to state the following fact:

\begin{theorem}
When AdjustedLinearizedBeliefPropagationAlgorithm is run, for every $0<t'<t$, we have that
\[\epsilon^{(t')}=\left (\widehat{B}\otimes PQ\right)^{t'}\epsilon^{(0)}.\]
\end{theorem}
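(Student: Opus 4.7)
The plan is to recognize the stated identity as a clean consequence of the algorithm's update rule factoring as a tensor product of a matrix acting on the (directed edge $\cup$ vertex) index and a matrix acting on the community index. First I would fix the vector space: each $\epsilon^{(t')}$ lives in $\mR^{2|E(G)|+n} \otimes \mR^k$, where the first factor has the orthonormal basis $\{e_{(v,v')}: (v,v')\in E(G)\} \cup \{e_v: v\in V(G)\}$ used in the definition of $\widehat{B}$, and the second factor is the community space on which $PQ$ acts. The base case $t'=0$ is just the identity $\epsilon^{(0)}=(\widehat{B}\otimes PQ)^0\epsilon^{(0)}$, so nothing needs to be checked there.

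The inductive step is the heart of the proof, and it is essentially a line-by-line comparison. Given the inductive hypothesis $\epsilon^{(t'-1)}=(\widehat{B}\otimes PQ)^{t'-1}\epsilon^{(0)}$, I would apply one step of AdjustedLinearizedBeliefPropagation and read the two formulas
\[\epsilon^{(t')}_{v,v'}=\sum_{v'': (v',v'')\in E(G), v''\ne v} PQ\,\epsilon^{(t'-1)}_{v',v''}/d-\sum_{v'':(v',v'')\not\in E(G),v''\ne v'} PQ\,\epsilon_{v''}^{(t'-1)}/n,\]
\[\epsilon^{(t')}_{v}=\sum_{v': (v,v')\in E(G)} PQ\,\epsilon^{(t'-1)}_{v,v'}/d-\sum_{v': (v,v')\not\in E(G),v'\ne v} PQ\,\epsilon^{(t'-1)}_{v'}/n,\]
and match them against $(\widehat{B}\otimes PQ)\epsilon^{(t'-1)}$. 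Since $PQ$ is a constant $k\times k$ matrix pulled outside each sum, and the remaining scalar coefficients ($1/d$ on edge terms, $-1/n$ on non-edge terms, with the same "exclude $v$ or $v'$" conventions) coincide exactly with the entries of $\widehat{B}$ as given in the definition just above the theorem, the block corresponding to each basis vector $e_{(v,v')}$ or $e_v$ of the first tensor factor is precisely $(PQ)$ times the corresponding block of $\widehat{B}\epsilon^{(t'-1)}$. This is the defining property of the tensor product $\widehat{B}\otimes PQ$, so $\epsilon^{(t')}=(\widehat{B}\otimes PQ)\epsilon^{(t'-1)}$, and iterating gives the claim.

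The only real care required — and I would flag this as the main (modest) obstacle — is that the two pieces of the update, the one for directed edges and the one for vertices, are presented separately in the algorithm and in the definition of $\widehat{B}$, so one has to be meticulous in checking that the index conventions ("$v''\ne v$" in the edge update versus "$v'\ne v$" in the vertex update, and the corresponding exclusions inside $\widehat{B}$) agree exactly. Once this bookkeeping is carried out, the statement is a direct induction on $t'$ and requires no probabilistic input beyond what has already been invoked in deriving the linearized algorithm. Note that the claim as stated applies for $0<t'<t$; the final step of the algorithm (the update of $q^{(t)}_v$) is one more application of the same operator on the vertex block and could be appended if desired, but is not part of what needs to be proved here.
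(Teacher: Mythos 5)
Your proposal is correct and is essentially the paper's own argument: the paper simply observes that the one-step identity $\epsilon^{(t')}=(\widehat{B}\otimes PQ)\,\epsilon^{(t'-1)}$ is immediate from the definition of $\widehat{B}$ and the propagation step, and then iterates, exactly as you do (you merely spell out the index-matching that the paper declares ``clear''). No gap; nothing further is needed.
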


\begin{proof}
It is clear from the definition of $\widehat{B}$ and the propagation step of AdjustedLinearizedBeliefPropagationAlgorithm that $\epsilon^{(t')}=\left (\widehat{B}\otimes PQ\right)\epsilon^{(t'-1)}$ for all $0<t'<t$. The desired conclusion follows immediately.
\end{proof}

In other words, AdjustedLinearizedBeliefPropagationAlgorithm is essentially a power iteration algorithm that finds an eigenvector of $\widehat{B}\otimes PQ$ with a large eigenvalue. 
\begin{remark}
If $\epsilon^{(0)}$ was selected completely randomly, then it would probably find the eigenvector with the highest eigenvalue. However, we actually set $\epsilon^{(0)}$ by starting with a set of probability distributions and then subtracting $p$ from each one of them. As a result, each component vector in $\epsilon^{(0)}$ has entries that add up to $0$, and due to the fact that every column in $PQ$ has the same sum, that means that each component vector in $\epsilon^{(t')}$ will add up to $0$ for all $t'$. As such, it cannot find any eigenvector that does not also have this property.
\end{remark}

$\widehat{B}\otimes PQ$ has an eigenbasis consisting of tensor products of eigenvectors of $\widehat{B}$ and eigenvectors of $PQ$, with eigenvalues equal to the products of the corresponding eigenvalues of $\widehat{B}$ and $PQ$. As such, this suggests that for large $t'$, $\epsilon^{(t')}$ would be approximately equal to a tensor product of eigenvectors of $\widehat{B}$ and $PQ$ with large corresponding eigenvalues. For the sake of concreteness, assume that $w$ and $\rho$ are eigenvectors of $\widehat{B}$ and $PQ$ such that $\epsilon^{(t')}\approx w\otimes \rho$. That corresponds to estimating that
\[P[X_v=i|G=g]\approx p_i+w_v \rho_i\]
for each vertex $v$ and community $i$. If these estimates are any good, they must estimate that there are approximately $p_in$ vertices in community $i$ for each $i$. In other words, it must be the case that the sum over all vertices of the estimated probabilities that they are in community $i$ is approximately $p_i n$. That means that either $\rho$ is small, in which case these estimates are trivial, or $\sum_{v\in G} w_v\approx 0$. Now, let the eigenvalue corresponding to $w$ be $\lambda$. If $\sum_{v\in G} w_v\approx 0$, then for each $(v,v')\in E(G)$, we have that
\begin{align*}
\lambda w_{v,v'} &\approx \sum_{v'':(v',v'')\in E(G),v''\ne v} w_{v',v''}/d\\
&= \sum_{v'':(v',v'')\in E(G)} B_{(v',v''),(v,v')} w_{v',v''}/d.
\end{align*}
So, the restriction of $w$ to the space spanned by vectors corresponding to directed edges will be approximately an eigenvector of $B$ with an eigenvalue of $\approx \lambda/d$. Conversely, any eigenvector of $B$ that is balanced in the sense that its entries add up to approximately $0$ should correspond to an eigenvector of $\widehat{B}$. So, we could try to determine what communities vertices were in by finding some of the balanced eigenvectors of $B$ with the largest eignvalues, adding together the entries corresponding to edges ending at each vertex, and then running $k$-means or some comparable algorithm on the resulting data. The eigenvector of $B$ with the largest eigenvalue will have solely nonnegative entries, so it will not be balanced. However, it is reasonable to expect that its next few eigenvectors would be relatively well balanced. This approach has a couple of key advantages over the previous algorithms. First of all, one does not need to know anything about the graph's parameters to find the top few eigenvectors of $B$, so this algorithm works on a graph drawn from an SBM with unknown parameters. Secondly, the calculations necessary to reasonably approximate the top few eigenvectors of $B$ will tend to be simpler and quicker than the calculations required for the previous algorithms. Note that balanced eigenvectors of $B$ will be approximately eigenvectors of $B-\frac{\lambda_1}{n}\mathbb{1}$, where $\lambda_1$ is the largest eigenvalue of $B$ and $\mathbb{1}$ is the matrix thats entries are all $1$. As such, we could look for the main eigenvectors of $B-\frac{\lambda_1}{n}\mathbb{1}$ instead of looking for the main balanced eigenvectors of $B$.

\subsection{Curbing the enthusiasm around tangles without promoting tails}\label{curb}

This is likely to work in the SBM, however many other models have larger numbers of cycles, which creates problems for this approach. In order for the formula for the probability distribution of $X_v$ to be accurate, we need probability distributions for the communities of the neighbors of $v$ based on $G\backslash\{v\}$ that are roughly independent of each other. In the SBM there are very few cycles, so requiring nonbacktracking is generally sufficient to ensure that the estimates of the communities of the neighbors of $v$ are essentially unaffected by the presence of $v$ in $G$, and that these probability distributions are nearly independent of each other. In a model like the GBM, on the other hand, the neighbors of $v$ will typically be very close to each other in $G\backslash\{v\}$. 

This will cause two major problems. First of all, it means that the probability distributions of the communities of the neighbors of $v$ will tend to be highly dependent on each other, in violation of the assumptions for the formula to be valid. More precisely, they will have a strong positive dependency which will cause the formula to double-count evidence and become overconfident in its conclusions about the community of $v$. The worse problem is that the formula underlying this algorithm assumes that for most communities $i$ and vertices $v$, $v'$, and $v''$ with $v$ and $v''$ adjacent to $v'$, we have that $P[X_{v}=i|G\backslash\{v',v''\}=g\backslash\{v',v''\}]\approx P[X_{v}=i|G\backslash\{v'\}=g\backslash\{v'\}]$. In the SBM, small cycles are rare, so this will usually hold. However, in the GBM, there are enough small cycles that if $v$ and $v''$ are both adjacent to $v'$, then there is likely to be a short path connecting them in $G\backslash\{v'\}$. As a result, $v''$ is probably providing significant evidence about the community of $v$, which means that this approximation is unlikely to hold. More precisely, our previous beliefs about the community of $v$ will influence our beliefs about the community of $v''$ which will influence our beliefs about the community of $v'$ in turn. In a particularly dense tangle, this feedback could result in an out of control positive feedback loop which would result in us rapidly becoming highly confident that the vertices in that region were in some random community. Then we would classify all other vertices in the graph based on their proximity to that tangle. So, we believe that this approach would fail on the GBM in the following sense.

\begin{conjecture} \label{conj3}
Let $G\sim \gbm(n,s,t)$, with $s$ and $t$ chosen such that $G$ has a single giant component with high probability. Also, let $B$ be the nonbacktracking walk matrix of the main component of $G$. Then computing the eigenvector of $B$ with the second largest eigenvalue and dividing the vertices of $G$ into those with above-median sums of their entries in the eigenvector and those with below-median sums of entries does not recover communities with optimal accuracy.
\end{conjecture}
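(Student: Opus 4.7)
My plan is to show that, with probability $1-o(1)$ over $G\sim\gbm(n,s,t)$, the eigenvector $\phi_2(B)$ of the nonbacktracking matrix $B$ is almost entirely supported on a single small ``tangle'' whose position is essentially independent of the planted labels $X$, so that the resulting median-threshold partition cannot attain the Bayes-optimal agreement achievable from $G$. The key phenomenon is that the GBM contains an induced clique of unbounded size whose nonbacktracking spectral radius dominates everything except the Perron-Frobenius eigenvalue of $B$.

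The first step exhibits the tangle. A Poisson/occupancy calculation against the Gaussian density of the $U_v$ shows that, with high probability, some ball of radius $t/(2\sqrt n)$ contains $k=\Theta(\log n/\log\log n)$ of the points, say centered near a high-density location. Any two such points are within $t/\sqrt n$, so the induced subgraph is a clique $K_k$, and its $k$ vertices carry i.i.d.\ Bernoulli$(1/2)$ labels; hence the tangle carries no useful community information.

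Next, one must show that this clique dictates the second eigenvalue of $B$. The nonbacktracking matrix of $K_k$ has spectral radius $k-2$, while the bulk spectrum of $B(G)$ away from the Perron-Frobenius eigenvalue should be of order $O(\sqrt{d}\,\mathrm{polylog}(n))$ with $d=\Theta(t^2)$ fixed, by an Ihara-trace argument in the spirit of \cite{bordenave}. Writing $B = B_K \oplus B_{\mathrm{rest}} + E$ where $E$ is a coupling of rank $O(k)$ living on the tangle's boundary, a Schur-complement/Davis-Kahan estimate pins $\phi_2(B)$ within $\ell_2$-distance $o(1)$ of a vector supported on the directed edges of $K_k$. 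Converting back to vertex sums $\psi_v=\sum_{(v,v')\in E(G)} \phi_2(B)_{v,v'}$ yields $\psi_v=o(1)$ on $(1-o(1))n$ vertices and $\psi_v=\Omega(1)$ only on the $O(k)$ vertices of the tangle plus a thin boundary layer. The median partition therefore separates the tangle's vicinity from the rest of the graph; since the tangle's location is drawn from the Gaussian mixture marginal (not from any community-conditional distribution), its agreement with $X$ is bounded by $1/2+o_s(1)$. Optimal recovery, in contrast, reaches at least $\Phi(s/2)>1/2+\Omega(1)$ via direct thresholding of the $U_v$, so the NB-based algorithm is strictly suboptimal.

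The main obstacle is the Ihara-type bulk-spectrum bound. Unlike in the SBM, the GBM has many short cycles and many small dense subgraphs, so the trace-method estimates of \cite{bordenave,Mossel_SBM2,massoulie-STOC} do not transfer directly. One has to control the density of small connected subgraphs of the GBM at every scale up to $k=\log n/\log\log n$ simultaneously, ruling out a competing rogue eigenvalue arising from a different tangle, and then push a trace bound through despite the presence of these cycles. A secondary complication is that the coupling $E$ may have large operator norm but low rank, so the Davis-Kahan step must be adapted to exploit the rank structure rather than $\|E\|$.
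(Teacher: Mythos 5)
First, note that the paper offers only a heuristic ``Justification'' for this conjecture, not a proof, so the right comparison is between two heuristics. Your high-level strategy --- a clique of size $\Theta(\log n/\log\log n)$ exists, the second eigenvector of $B$ localizes on a tangle, and the resulting proximity-to-tangle partition is suboptimal --- matches the paper's. But there are two concrete gaps. The more serious one: exhibiting a \emph{single} clique $K_k$ only lower-bounds $\lambda_1(B)$ (and even internally, the NB spectrum of $K_k$ drops to modulus $\sqrt{k-2}$ after its Perron eigenvalue $k-2$), so nothing in your argument forces $\lambda_2(B)$ to be of order $k$, or forces $\phi_2(B)$ to live on a tangle rather than, say, correlate with the communities. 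The paper's justification fixes this with a two-clique trick: it finds \emph{two} cliques of size $\eta$ with no path of length less than $s\sqrt n/2t$ between them, takes the indicator vectors $w,w'$ of their directed edges, observes that $w\cdot B^\tau w'=0$ for all such $\tau$ while $w\cdot B^\tau w$ and $w'\cdot B^\tau w'$ grow like $(\eta-2)^\tau$, and concludes that some combination $cw+c'w'$ orthogonal to the top eigenvector still grows at that rate, forcing $\lambda_2(B)\ge(\eta-2)/2$ barring an implausible cancellation. This entirely sidesteps the Ihara/trace bulk bound that you correctly identify as the hard step of your route and never establish; the paper then gets localization not from a Schur-complement/Davis--Kahan step but from the elementary pointwise decay $|w_{v_{i-1},v_i}|\le\frac{\deg(v_i)-1}{\lambda}|w_{v_i,v_{i+1}}|$ along a greedily chosen path, which shows that any eigenvector with eigenvalue at least $(\eta-2)/2$ must concentrate near vertices of degree at least $(\eta-2)/4$, together with a separate argument that the near-degenerate local eigenvalues of distinct tangles do not hybridize.

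The second gap is your final accuracy claim. The tangles are \emph{not} community-neutral: they concentrate near the density maxima $(\pm x_0,0)$ of the Gaussian mixture, which are community-biased, and the median partition (roughly, ``within graph distance $\rho$ of the tangle'' versus not) is positively correlated with the $x$-coordinate and hence with $X$. So the agreement is not $1/2+o_s(1)$; the partition can in fact be fairly good. What makes it \emph{suboptimal} --- which is all the conjecture asserts --- is that a disc around $(x_0,0)$ misclassifies the $\Omega(n)$ vertices on the far side of the tangle that lie even deeper in the same community's half-plane; this is the argument the paper's justification gives, and you should replace your independence claim with it.
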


\begin{proof}[Justification]
We believe that the eigenvectors of $G$ with the largest eigenvalues will each have a single small region of the graph such that the entries of the graph corresponding to edges in that region will have large absolute values. Then, the entries corresponding to directed edges outside the region will decrease exponentially with distance from that region. As a result, dividing the vertices into those that have above-median and below-median sums of entries will divide the vertices into those that are geometrically near some such region and those that are not. This is a suboptimal way to assign vertices to communities, so the algorithm will have suboptimal accuracy.

More formally, the average vertex in $G$ will have a degree of $O(1)$. However, for any geometric point within a distance of $s$ of the origin, the probability distribution of the number of vertices in the circle of radius $t/2\sqrt{n}$ centered on that point is approximately a poisson distribution with a mean of $\Omega(1)$. It is possible to find $\Theta(n)$ such circles that do not overlap, and the circle's contents are thus nearly independent of each other. Furthermore, we can do so in such a way that half of these points have $x$-coordinates less than $-s/4$ and the other half have $x$-coordinates that are greater than $s/4$.

As a result, with probability $1-o(1)$, there will be at least one such circle that contains $\Omega(\log(n)/\log(\log(n)))$ vertices with $x$-coordinate less than $-s/4$, and at least one such circle that contains $\Omega(\log(n)/\log(\log(n)))$ vertices with $x$-coordinate greater than $s/4$. That implies that there exists $\eta=\Omega(\log(n)/\log(\log(n)))$ such that with probability $1-o(1)$, $G$ contains $2$ complete graphs with at least $\eta$ vertices such that there is no path of length less than $s\sqrt{n}/2t$ between them. Now, pick one of these complete graphs and let $w$ be the vector in $\mathbb{R}^{2|E(G)|}$ that has all entries corresponding to directed edges in the graph set to one and all other entries set to $0$. Next, let $w'$ be the vector that has all entries corresponding to directed edges in the other complete graph set to one and all other entries set to $0$. For all nonnegative integers $t$, we have that $w\cdot B^t w/w\cdot w\ge (\eta-2)^t$ and $w'\cdot B^t w'/w'\cdot w'\ge (\eta-2)^t$. This implies that the largest eigenvalue of $B$ must be at least $\eta-2$. Next, observe that there must exist $c,c'=O(1)$ that are not both $0$ such that $cw+c'w'$ can be expressed as a linear combination of eigenvectors of $B$ other than its eigenvector of largest eigenvalue. Also, for all $0\le t< \sqrt{n}/r$ we have that $w\cdot B^t w'=w'\cdot B^t w=0$. So, 
\[(cw+c'w')\cdot B^t (cw+c'w')/(cw+c'w')\cdot (cw+c'w')\ge (\eta-2)^t\]
for all $0\le t<s\sqrt{n}/2t$. That means that either the second largest eigenvector of $B$ is also at least $(\eta-2)/2$ or the representation of $cw+c'w'$ as a sum of eigenvectors of $B$ involves multiple eigenvectors of magnitude $\Omega(2^{s\sqrt{n}/2t}/n^2)$ that cancel out almost completely. The former is far more plausible than the latter.

Now, let $w$ be an eigenvector of $B$ with an eigenvalue of $\lambda\ge (\eta-2)/2$. Also, let $(v_0,v_1)$ be a random edge in $G$. Next, define $v_2$ to be the vertex adjacent to $v_1$ that maximizes the value of $|w_{v_1,v_2}|$, $v_3$ to be the vertex adjacent to $v_2$ that maximizes the value of $|w_{v_2,v_3}|$ and so on. For each $i$, we have that
\begin{align*}
|w_{v_{i-1},v_i}|&=\frac{1}{\lambda}|\sum_{v':(v_{i},v')\in E(G),v'\ne v_{i-1}} w_{v_i,v'}|\\
&\le \frac{1}{\lambda}\sum_{v':(v_{i},v')\in E(G),v'\ne v_{i-1}} |w_{v_i,v'}|\\
&\le \frac{\mathrm{deg}(v_i)-1}{\lambda} |w_{v_i,v_{i+1}}|.\\
\end{align*}
So, if we define $m$ to be the first integer such that $v_m$ has a degree of at least $(\eta-2)/4$, then it must be the case that $|w_{v_0,v_1}|\le 2^{1-m}|w_{v_{m-1},v_m}|$. The graph probably only has $O(\log(n))$ cliques of maximum size. Also, if we consider the largest eigenvectors of the restrictions of $B$ to the vicinity of two different cliques of maximum size, then we would expect them to differ by a factor of $1+\Omega(1/\log^2(n))$ simply due to random variation in what other vertices are nearby. As a result, these eigenvalues are sufficiently unbalanced that each of the main eigenvectors of $B$ will probably be centered on a single dense region of the graph. So, assigning vertices to communities based on whether their edges' entries in that eigenvector have an above or below median sum would essentially assign vertices to communities based on their proximity to that region. The correct way to assign vertices to communities based on their geometric location is to assign all vertices with positive $x$-coordinates to one community and those with negative $x$-coordinates to the other. However, this algorithm would assign all vertices that are sufficiently distant from the central region to the other community, even if they are further from the $y$-axis than the central region is. As a result, it will classify vertices with suboptimal accuracy.
\end{proof}

\begin{remark}
There are several variants of this approach that we could have used, such as computing the $m$ eigenvectors of $B$ with the largest eigenvalues for some $m>2$ and then running some sort of clustering algorithm on the coordinates given by the entries in these vectors corresponding to these vertices. However, we do not think that these variants are likely to do better. All of the eigenvectors would assign negligible values to all but a tiny fraction of the vertices, and many clustering algorithms would classify all of the vertices that are essentially at the origin as being in the same community. Any algorithm that does so would classify $n-o(n)$ vertices as being in the same community, and would thus fail to solve weak recovery. Many algorithms that avoid that would classify all vertices sufficiently close to a specific dense region of the graph as being in one community. One could conceivably hope to compare two eigenvectors centered on regions on opposite sides of the graph in order to recover communities with optimal accuracy. However, any algorithm that does so would probably be overfitted to the GBM.
\end{remark}

More generally, if we start with preliminary guesses of the vertices' communities and then use the formula above to refine them in such a model, the cycles will result in large amounts of unwanted feedback, we will quickly gain high confidence that the vertices in the largest positive feedback loop are in some random community, and then we will end up classifying the rest of the vertices based on their proximity to said feedback loop. There are several possible ways to try to fix this, each having their own downsides.

\begin{enumerate}
\item We could try to use a formula that takes dependencies between a vertex's neighbors into account. However, this would be substantially more complicated, and one would need information on the model's parameters to determine what the correct formula is.

\item We could use the graph's $r$-nonbacktracking walk matrix, which is a generalization of its nonbacktracking walk matrix that is defined over a vector space with a dimension for every directed path of length $r-1$. It maps each directed path of length $r-1$ to the sum of all paths that can be formed by adding a new vertex to the front of the path and deleting the last vertex. Using it instead of $B$ would result in an algorithm that works the same as the previous one, except that cycles of length $r$ or less would not cause feedback. However, this would not do anything to fix doublecounting of evidence caused by cycles. Also, to adequetely mitigate feedback in a model like the GBM, we would need to use a huge $r$, and the dimensionality of the $r$-nonbacktracking walk matrix would grow exponentially in $r$. As a result, doing this effectively would take an impractical amount of time, which would render the algorithm useless in practice.

\item We could simply eliminate all areas of the graph that have too many cycles by deleting edges or vertices as needed. The problem with this is that those edges and vertices provided information on which vertices are in which communities, and we can not necessarily afford to lose that information.


\item We could raise the nonbacktracking walk matrix to some power, $r$, and then reduce all nonzero entries to $1$. Call the resulting matrix $B^{(r)}$. Using $B^{r}$ instead of $B$ would not change anything. Every entry of $B^{r}$ corresponding to a pair of vertices that are not near any small cycles would be $0$ or $1$ anyways. However, in a part of the graph where multiple small cycles are causing large amounts of feedback and doublecounting, entries of $B^{r}$ could be very large. As such, reducing every positive entry of $B^{(r)}$ to $1$ significantly mitigates the effects of feedback and doublecounting caused by small cycles. Note that while replacing $B$ with $B^{(r)}$ can be viewed as adding even more edges to regions of the graph with dense tangles, it adds edges everywhere else too, and the tangles will still end up with an edge density that is comparable to the rest of the graph, limiting their influence on the eigenvectors. However, this algorithm does not handle cycles perfectly. In a Bayes optimal algorithm, the existence of multiple paths between two vertices $v$ and $v'$ would provide more evidence that they are in the same community than either would alone, but not as much as the sum of the evidence they would have provided individually. Using $B^{(r)}$, the existence of multiple paths could provide any amount of evidence from only slightly more than one of them alone would have provided to far more than the sum of what they would have provided individually.

However, it is still an improvement over $B$ because in the densest tangles there will almost certainly be multiple length $r$ nonbacktracking walks between the same pairs of vertices, so it will mitigate some of the feedback. The feedback resulting from particularly dense tangles is the dominant factor influencing the eigenvectors of $B$ in models like the GBM, so countering it is critically important to developing an algorithm that recovers communities with optimal accuracy in such models. If we set $r$ to a constant fraction of the graph's diameter, then we believe this would do a good enough job of mitigating the effects of cycles to succeed on the GBM, while keeping the desirable properties of $B$ on the SBM. We discuss the approach in details in the next section; we conclude here with a last option.  

\item A final option would be to set our new estimates of what communities a vertex is likely to be in equal to an average of the previous estimates for its neighbors instead of their sum. This would ensure that no matter how many cycles there are in an area of the graph, one can never have enough feedback to make beliefs about the communities of vertices in a region amplify themselves over time. Essentially, we would be substituting the random walk matrix, $AD^{-1}$, or possibly the random nonbacktracking walk matrix, for the nonbacktracking walk matrix. Note that the random walk matrix has its name due to the fact that for all vertices $v$ and $v'$ and $t>0$ the probability that a length $r$ random walk starting at $v$ ends at $v'$ is given by $(AD^{-1})^t_{v',v}$.

However, this idea has some major problems. First of all, when guessing the community of a vertex whose neighborhood in the graph is a tree, it gives less weight to vertices connected to that vertex by paths through high degree vertices than to vertices connected to it by paths through low degree vertices. This is simply an incorrect formula, with the result that this will proably fail to solve weak recovery in some instances of the SBM on which weak recovery is efficiently solvable. Furthermore,  while it would seem like the averaging makes feedback a nonissue, this is not really true. While the averaging does ensure that the values in a local region of the graph cannot amplify over time, it also ensures that the values everywhere will converge to the overall average. So, if one region of the graph is isolated enough in the sense that the number of edges between that region and the rest of the graph is small relative to the number of edges in the region, it is possible that the vertices there will homogenize with the vertices outside the region more slowly than the graph as a whole will homogenize. If that happens, then eventually most of the variation in values will be determined by how close vertices are to that region. Recall Figure \ref{lap-cut} for such an outcome. In fact, we conjecture the following.

\begin{conjecture}
For all $a,b>0$, the algorithm that assigns vertices to communities by finding the $m$ dominant eigenvectors of the random walk matrix of the main component of the graph and then running $2$-means fails to solve weak recovery on $\sbm(n,a,b)$.
\end{conjecture}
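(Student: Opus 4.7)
The plan is to show that the random walk matrix $P = AD^{-1}$ has its top eigenvectors localized on ``tail'' substructures of the graph, and that these tails are asymptotically independent of the community labels, so no clustering of the resulting embedding can recover communities.

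First, I would show that with probability $1-o(1)$, $G\sim\sbm(n,a,b)$ contains at least $2m+2$ vertex-disjoint induced pendant paths of length $\ell_n=\omega(1)$ attached to the giant component. This is a standard sparse random-graph subgraph-counting argument: the expected number of such substructures is $\omega(1)$ (in fact polynomially large for suitable $\ell_n$), and a second-moment computation gives concentration. Since the pendant-path event depends only on local tree structure near the attached vertex, its probability is almost the same under the conditioning on community labels, so we may further insist that at least $m+1$ of these pendant paths have their attaching vertex in each community.

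Next, for each such pendant path $P_i$ of length $\ell_i$, I would construct a test vector $\psi_i$ supported on $P_i$ (for instance the ground state of the random walk on the path with Dirichlet condition at the attachment vertex). A direct computation gives $\psi_i^\top L\psi_i / \psi_i^\top D\psi_i = O(1/\ell_i^2) = o(1)$, where $L=D-A$. Since the $\psi_i$ are supported on disjoint sets, they are $D$-orthogonal, and by the Courant--Fischer minmax principle applied to the normalized Laplacian $I - D^{-1/2}AD^{-1/2}$ (which is similar to $I-P$), we obtain at least $m+1$ eigenvalues of $P$ that are $1-o(1)$, in addition to the trivial Perron eigenvalue $1$. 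Combining with the fact that the community signal eigenvalue of $P$ is bounded above by a constant $c(a,b)<1$ (this follows from standard trace/nonbacktracking analyses as in the proof sketch of Theorem~\ref{ks2}, since the community signal of $A$ sits at $(a-b)/2$ in a spectrum of bulk width $\Theta(\sqrt{(a+b)/2})$, and the $D$-normalization cannot push it to $1$), for any fixed $m$ the top $m$ eigenvectors of $P$ are, up to $o(1)$ perturbation, linear combinations of tail-localized vectors and therefore carry no community information.

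Finally, I would translate this into the failure of $2$-means. In the $m$-dimensional embedding obtained from the top $m$ eigenvectors, all but an $o(1)$ fraction of vertices are mapped within $o(1)$ of the origin, while the tail vertices are $\Theta(1)$-outliers whose coordinates are determined by tail identity and position, which are (by the symmetry argument in step one) asymptotically independent of the community label. Any output of $2$-means then produces either a nearly all-encompassing cluster (violating the balance needed for weak recovery) or a partition based on tail proximity, uncorrelated with the planted bisection; in both cases $\max_{\pi\in S_2}\frac{1}{2}\sum_i|\hat\Omega_i\cap\Omega_{\pi(i)}|/|\Omega_i| = 1/2 + o(1)$.

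The main obstacle is the localization step: the variational bound yields many eigenvalues near $1$ but does not immediately imply the corresponding eigenvectors are concentrated on tails. One needs a quantitative version stating that \emph{every} eigenvector of $P$ with eigenvalue in $[1-\eta,1]$ has most of its $\ell^2$-mass on the union of $O(1/\eta^2)$-diameter isolated subregions. This should follow from a resolvent/decoupling argument analogous to the one that drives Theorems~\ref{ks1}--\ref{ks2}, using the fact that the random walk mixes rapidly on the bulk giant (so only slowly-mixing appendages like pendant paths can support near-stationary modes), but making this quantitative while handling the dependence between tails and the bulk is the delicate part of the argument.
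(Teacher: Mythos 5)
Your construction of the near-$1$ eigenvalues from pendant paths matches the paper's justification (the paper takes $m$ induced paths of length $\Omega(\log n)$ with a single attachment edge and uses their indicator vectors as test functions for $D^{-1/2}AD^{-1/2}$). But the remainder of your argument has a genuine gap, which you yourself flag: you need every eigenvector with eigenvalue in $[1-\eta,1]$ to be localized on tails, and you do not have a proof of that. The paper's route avoids localization entirely. Instead of arguing about where the top eigenvectors live, it argues directly about what they \emph{cannot} correlate with: taking $w_v=(-1)^{X_v}$ and $r=\lfloor\sqrt{\ln n}\rfloor$, it bounds the quadratic form $w^{\top}(D^{-1/2}AD^{-1/2})^{r}w/n$ by interpreting it via length-$r$ random walks. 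Because a walk of that length typically makes $\Omega(\sqrt{\log n}/\log\log n)$ ``productive'' steps (steps not detouring into small branches), the graph distance $d(v_0,v_r)$ is anti-concentrated, and since two vertices at distance $d$ agree in community only with probability $1/2+((a-b)/(a+b))^{d}$, the quadratic form is $O(\log^2(\log n)/\sqrt[4]{\log n})$. This forces \emph{any} eigenvector with $\Omega(1)$ correlation with $w$ to have eigenvalue $1-\omega(1/\sqrt{\log n})$, which is strictly below the $1-O(1/\log n)$ guaranteed for the top $m$ eigenvalues by the pendant paths. Hence none of the top $m$ eigenvectors carries community information, and $2$-means on that embedding fails — no statement about tail-localization of those eigenvectors is ever needed.

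Two further points on your version. First, your claimed constant bound $c(a,b)<1$ on ``the community signal eigenvalue'' is not what is needed even if true: you must exclude correlation with $w$ for \emph{all} of the top $m$ eigenvectors simultaneously, which is exactly what the quadratic-form bound delivers and what a statement about a single ``signal eigenvalue'' does not. Second, your pendant paths of length only $\ell_n=\omega(1)$ give eigenvalues $1-O(1/\ell_n^2)$, which need not beat $1-\omega(1/\sqrt{\log n})$; you should insist on paths of length $\Omega(\log n)$ (which do exist with high probability in the sparse SBM), as the paper does, so that the two eigenvalue scales are actually separated. Note finally that the statement is a conjecture with a heuristic justification in the paper, so neither argument is a complete proof; but the quadratic-form step is the ingredient that makes the paper's sketch close where yours does not.
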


\begin{proof}[Justification]
No eigenvalue of $AD^{-1}$ can be greater than $1$ because for any vector $w$, the absolute value of the largest entry in $D^{-1}A w$ can not be more than the absolute value of the largest entry in $w$. Also, the vector whose entries are all $1$ is an eigenvector of $D^{-1}A$ with an eigenvalue of $1$. Now, consider the matrix $D^{-1/2}AD^{-1/2}$. This is a conjugate of $AD^{-1}$ so it has the same eigenvalues, and its eigenvectors are what one gets when multiplying the eigenvectors of $AD^{-1}$ by $D^{-1/2}$. 

Next, let $w$ be the vector such that $w_v$ is $1$ if $v$ is in community $1$ and $-1$ if $v$ is in community $2$. Now, let $r=\lfloor \sqrt{\ln(n)}\rfloor$, $v_0$ be a vertex randomly selected with probability proportional to the square root of its degree, and $v_0,v_1,...,v_r$ be a random walk starting at $v_0$. With probability $1-o(1)$, the subgraph of $G$ induced by vertices within $r$ edges of $v_0$ is a tree. Call a subgraph of $G$ that can be disconected from the main component of the graph by deleting a single edge a branch. With probability $1-o(1)$, the combined sizes of all branches connected to $v_i$ by an edge will be $O(\log(\log(n)))$ for all $i$. So, the expected number of steps required for the random walk to move from $v_i$ to a vertex that is not part of any branch that $v_i$ is not also part of is also $O(\log(\log(n)))$. That means that the number of steps that do not involve detouring onto a branch will typically be $\Omega(\sqrt{\log(n)}/\log(\log(n)))$. Any step in the walk that does not involve a detour onto a branch is at least as likely to move farther away from $v_0$ as it is to move closer to it. As a result, for each $d$, we have that $P[d(v_0,v_r)=d]=O(\log^2(\log(n))/\sqrt[4]{\log(n)})$, where $d(v,v')$ is the minimum length of a path between $v$ and $v'$. For a fixed value of $d(v_0,v_r)$, we have that $v_0$ and $v_r$ are in the same community with probaiblity $1/2+(\frac{a-b}{a+b})^{d(v_0,v_r)}$. So, $v_0$ and $v_r$ are in the same community with probability $1/2+O(\log^2(\log(n))/\sqrt[4]{\log(n)})$. That means that $w\cdot (D^{-1/2}AD^{-1/2})^r w/n=O(\log^2(\log(n))/\sqrt[4]{\log(n)})$. That in turn means that any eigenvector of $D^{-1/2}AD^{-1/2}$ that has $\Omega(1)$ correlation with $w$ must have an eigenvalue of $1-\omega(1/\sqrt{\log(n)})$.

On the flip side, we can find $m$ induced paths in $G$ that that have lengths of $\Omega(\log(n))$ such that no vertex in the path except the last one has any edges except those in the path, these paths do not overlap, and no two of these paths have vertices that share a neighbor. Now, let $w_i$ be the vector that has all of its entries corresponding to a vertex on the $i$th one of these paths set to $1$ and all other entries set to $0$. For each $i$ we have that $w_i\cdot D^{-1/2}AD^{-1/2} w_i/w_i\cdot w_i=1-O(1/\log(n))$, while for all $i\ne j$, we have that $w_i\cdot w_j=0$ and $w_i\cdot D^{-1/2}AD^{-1/2} w_j=0$. That means that the top $m$ eigenvalues of $D^{-1/2}AD^{-1/2}$are all at least $1-O(1/\log(n))$. By the previous argument, that means that none of the top $m$ eigenvectors can have $\Omega(1)$ correlation with $w$. So, attempting to classify vertices by running $2$-means on coordinates defined by these vectors will not solve weak recovery.
\end{proof}
We also refer to the caption of Figure  \ref{sbm_num} for numerical results corroborating this claim. 
\end{enumerate}

\subsection{Graph powering performances}\label{justifications}
We now investigate the performance of the powered nonbacktracking graph, i.e., the matrix $B^{(r)}$, and its undirected counter-part $A^{(r)}$. We start by its performance on the GBM. 
\begin{conjecture}
For all $m>0$, $s>0$, $t>0$, and $0<r'<1/2$, the algorithm that finds the eigenvector of $B^{(r'd)}$ with the second largest eigenvalue and then divides the vertices into those with above- and below-median sums of the entries in the eigenvector corresponding to edges ending at them solves weak recovery on $\gbm(n,s,t)$ with optimal accuracy.
\end{conjecture}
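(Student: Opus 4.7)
The plan is to reduce the spectral analysis of $B^{(r'd)}$ to an effective geometric operator on the underlying Gaussian mixture, and then extract the $x$-coordinate partition from the second eigenvector. Since the partition $\{U_v \cdot e_1 > 0\}$ vs.\ $\{U_v \cdot e_1 < 0\}$ achieves optimal accuracy on $\gbm(n,s,t)$ (as argued in the justification of the first conjecture of Section~\ref{gbm_section}), it suffices to show that the median rounding of the proposed eigenvector approximates this partition up to $o(n)$ errors.

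I would carry this out in three steps. First, use standard percolation-type estimates for sparse random geometric graphs above the giant threshold (cf.\ \cite{yao2011large}) to show that $d_G(u,v)$ is concentrated around $c(t)\sqrt{n}\,\|U_u - U_v\|$, uniformly over pairs in the giant component. This gives $d = \Theta(\sqrt{n \log n})$ and identifies the parameter $r = r'd$ with a Euclidean scale $R = \Theta(\sqrt{\log n})$ that is a constant fraction of the diameter of the Gaussian point cloud. Second, argue that $(B^{(r)})_{(\cdot,u),(v,\cdot)} = 1$ for some incident pair of edges precisely when $u,v$ are at graph distance of order $r$, i.e., at Euclidean distance of order $R$; here the nonlinearity $\mathbb{1}(\cdot \ge 1)$ is essential, since it caps the tangle contributions that would otherwise dominate $B^r$ (as in the justification of Conjecture~\ref{conj3}). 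Summing entries over incident directed edges yields a vertex-space operator $M$ which approximates the adjacency operator of the geometric ``ball graph'' of radius $R$ over the sampled points. Third, analyze the spectrum of $M$ in the large-$n$ limit, where it converges, in a suitable sense, to a convolution operator with kernel $K_R$ against the Gaussian mixture measure. By reflection symmetry across the $y$-axis, eigenfunctions split into $x$-even and $x$-odd functions; the top eigenfunction is even (and nearly constant), while the second is the principal $x$-odd eigenfunction, which generically has a single nodal line close to $\{x=0\}$. Davis--Kahan stability of spectral projectors then transfers this structure back to $M$ and to $B^{(r)}$, so that median rounding recovers the desired partition up to $o(n)$ errors.

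The main obstacle lies in the second step: rigorously controlling the entries of $B^{(r)}$ when $r$ is a constant fraction of the diameter requires uniform walk-counting and percolation estimates on a sparse, non-homogeneous random geometric graph. In particular, one needs concentration for the \emph{existence} (not merely the expected count) of length-$r$ nonbacktracking walks between arbitrary vertex pairs in the giant, including in the low-density Gaussian tails. These are precisely the estimates the authors mention as being beyond the current scope; once they are in hand, the reduction to a smooth convolution operator and the ensuing spectral analysis should follow along standard lines.
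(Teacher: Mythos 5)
Your proposal is a reasonable heuristic for this conjecture, but it takes a genuinely different route from the paper's. The paper does not justify the $B^{(r'd)}$ statement directly: it first argues that the abundance of short cycles in the GBM makes $B^{(r'd)}$ behave like $A^{(r'd)}$ (your passage to the vertex-space operator $M$ plays the same role), and then justifies the $A^{(r'd)}$ version by a discrete Dirichlet-form argument. Writing $w$ for the Perron eigenvector of $A^{(r'd)}$, the paper expresses the Rayleigh quotient of any unit vector $w'$ as $\lambda_1$ minus the weighted energy $\sum_{(v,v')} \frac{w_v w_{v'}}{2}\bigl(\frac{w'_v}{w_v}-\frac{w'_{v'}}{w_{v'}}\bigr)^2$, so the second eigenvector is the orthogonal minimizer of this energy; the ratio $w'_v/w_v$ must therefore vary slowly across $E^{(r'd)}$ while having both signs, which forces a sign change along the longest (the $x$-) direction of the point cloud. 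It then rules out fluctuation-driven asymmetry by a direct variance computation on the sizes of $r'd$-neighborhoods under vertex deletion. Your route instead passes to a continuum convolution operator against the Gaussian mixture and invokes reflection symmetry to identify the second eigenfunction as the principal $x$-odd one. That is cleaner conceptually and makes the role of the symmetry explicit, but it carries an extra obligation the paper's energy argument sidesteps: you must show the top odd eigenvalue beats the second even eigenvalue (symmetry alone only decomposes the spectrum, it does not order it), which is where localized or radially-excited even modes could in principle intervene. Both routes ultimately hinge on the same unproven ingredient you flag --- uniform distance-concentration and walk-existence estimates on the sparse inhomogeneous geometric graph --- which is exactly why the paper leaves this as a conjecture. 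You also omit the degenerate cases (no giant component, or two disjoint giants) that the paper's justification dispatches first; these are easy but should be stated.
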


Instead of attempting to justify this conjecture directly, we will start by giving justifications for the analogous conjecture in terms of $A^{(r)}$. Note that $A^{(r)}$ for $r$ large enough is generally similar to $B^{(r)}$, although it does not prevent backtracking the way $B^{(r)}$ does. 
For instance, in the absence of cycles, $(B^{(r)})^s_{(v_1,v_2),(v'_1,v'_2)}$ is a count of the number of length $rs+1$ paths from $(v'_1,v'_2)$ to $(v_1,v_2)$. In this case, $(A^{(r)})^s_{v,v'}$ is a count of the number of sequences of $s$ length $r$ paths such that the first path starts at $v'$, each subsequent path starts at the previous path's endpoint, and the last path ends at $v$.
However, the GBM has enough small cycles that allowing backtracking will not change much, so we believe that the above conjecture is true if the following version is.

\begin{conjecture}
For all $m>0$, $s>0$, $t>0$, and $0<r'<1/2$, the algorithm that finds the eigenvector of $A^{(r'd)}$ with the second largest eigenvalue and then divides the vertices into those with above and below median entries in the eigenvector corresponding to edges ending at them recovers communities on $\gbm(n,s,t)$ with optimal accuracy.
\end{conjecture}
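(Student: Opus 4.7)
The plan is to reduce the spectral analysis of $A^{(r)}$ on a sample from $\gbm(n,s,t)$ to that of a deterministic kernel operator on $\mathbb{R}^2$ against the Gaussian mixture measure. The starting point is a concentration-of-graph-distance result for random geometric graphs: for vertices $u,v$ in the giant component with locations $U_u, U_v$, the graph distance $d_G(u,v)$ is concentrated around $c(t)\sqrt{n}\,\|U_u-U_v\|$ for some explicit $c(t)$ depending on the connection radius, provided $\|U_u-U_v\|$ is not too small (this is essentially the content of the shape theorem for random geometric graphs in the Poisson setting, due to Yao et al.\ and Penrose). Since $\diam(G)$ itself is of order $c(t)\sqrt{n}\,D(s)$ for some effective diameter $D(s)$ of the Gaussian mixture truncated to the giant, setting $r=r'\cdot\diam(G)$ yields $(A^{(r)})_{u,v}=1$ iff $\|U_u-U_v\|\lesssim r' D(s)$, up to a vanishing fraction of exceptional pairs.

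Given this reduction, $A^{(r)}$ behaves like the kernel matrix $K_{u,v}=\1(\|U_u-U_v\|\le r' D(s))$ sampled at $n$ i.i.d.\ points from the Gaussian mixture. The second step is to analyze the integral operator $\mathcal{K}\colon L^2(\mu)\to L^2(\mu)$ defined by $(\mathcal{K}f)(x)=\int \1(\|x-y\|\le r' D(s))\, f(y)\,d\mu(y)$, where $\mu$ is the Gaussian mixture. Both $\mathcal{K}$ and $\mu$ are invariant under the reflection $(x_1,x_2)\mapsto(-x_1,x_2)$, so the eigenspaces of $\mathcal{K}$ split into functions that are symmetric and antisymmetric in the first coordinate. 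The top eigenfunction is symmetric and nonnegative (Perron--Frobenius); the claim is that the second eigenfunction is antisymmetric in $x_1$, hence---because the community label of a vertex $v$ equals the sign of $U_{v,1}$ plus lower-order correction---positively correlated with the true labels. Verifying that this antisymmetric mode attains the second-largest eigenvalue, and is separated from the next one, reduces to a one-parameter Sturm--Liouville-type comparison in $x_1$; this is the step where $s>0$ becomes essential, since at $s=0$ the symmetry breaking is rotational rather than reflective and the antisymmetric mode loses correlation with the labels.

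The third step is a Davis--Kahan / matrix-concentration argument showing that the second eigenvector of $A^{(r)}/n$, evaluated at the sample locations, is $L^2(\mu_n)$-close to the second eigenfunction of $\mathcal{K}$. One writes $A^{(r)}/n$ as an empirical kernel matrix, bounds the deviation from the population kernel in operator norm via matrix Bernstein, and invokes standard perturbation theory for integral operators in the style of Koltchinskii--Gin\'e. The median-thresholding of the second eigenvector then returns the sign of the antisymmetric eigenfunction, which agrees with the community label on a $1/2+\Omega(1)$ fraction of vertices, matching the information-theoretic upper bound established in the justification of the earlier GBM conjecture.

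The main obstacle is the first step. Concentration of graph distances in random geometric graphs is classical for homogeneous Poisson models on a box or torus, but the Gaussian mixture here is inhomogeneous, so one needs a version of these estimates that is sensitive to local density---ideally a uniform law of large numbers for $d_G(u,v)/\sqrt{n}$ against a weighted Euclidean-type distance $\rho_\mu(U_u,U_v)$ (a Riemannian distance whose metric depends on the local intensity) that holds uniformly over the giant. A secondary issue is the nonlinearity in $A^{(r)}=\1(A^r\ge 1)$: one must verify that the number of $r$-length walks between any pair within graph distance $r$ is large enough that this indicator is effectively deterministic. For $r$ a constant fraction of the diameter this is expected, since walk counts grow exponentially in $r$, but making it rigorous requires second-moment control in the inhomogeneous regime and seems to be the technically most demanding ingredient.
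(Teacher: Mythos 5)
Your proposal takes a genuinely different route from the paper's justification. The paper argues directly on the powered graph: it writes the Rayleigh quotient of $A^{(r'd)}$ as $\lambda_1$ minus a Dirichlet-form penalty $\sum_{(v,v')}\tfrac{w_vw_{v'}}{2}(w'_v/w_v-w'_{v'}/w_{v'})^2$, argues that the second eigenvector must therefore vary slowly across $E^{(r'd)}$-neighborhoods while changing sign, concludes that it interpolates along the longest (i.e.\ $x$-) direction of the giant component, and then rules out localization on dense regions by a variance computation for the sizes of $r'd$-neighborhoods. You instead pass to a continuum limit: concentration of the chemical distance reduces $A^{(r'd)}$ to a deterministic kernel $\1(\rho_\mu(U_u,U_v)\le\tau)$, the population integral operator is analyzed via its reflection symmetry in $x_1$, and empirical-operator perturbation theory transfers the conclusion back to the sample. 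Your route has the advantage of making the limiting object explicit and of correctly isolating the hardest ingredient (a chemical-distance shape theorem for the \emph{inhomogeneous} Gaussian intensity, which the paper itself flags as out of scope); the paper's route avoids the continuum limit entirely but must then argue non-localization by hand.

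Two specific points. First, your ``secondary issue'' about walk counts is vacuous: by definition $A^{(r)}=\1(A^r\ge1)$ with self-loops, so $(A^{(r)})_{u,v}=1$ exactly when $d_G(u,v)\le r$, deterministically given the graph --- a single path suffices, and no second-moment control of walk counts is needed. The only boundary effect is the vanishing fraction of pairs whose $\rho_\mu$-distance lies inside the fluctuation window of the threshold. Second, the real crux --- that the top \emph{antisymmetric} eigenvalue of the limiting operator $\mathcal{K}$ exceeds its second \emph{symmetric} eigenvalue, uniformly for all $s>0$ and $0<r'<1/2$ --- is asserted rather than argued; the proposed ``Sturm--Liouville comparison in $x_1$'' does not obviously apply since the kernel does not separate variables. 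This is precisely the step where the paper's Dirichlet-form and neighborhood-variance heuristics do their work, so you have relocated the gap rather than closed it. Given that the statement is a conjecture with only a heuristic justification in the paper, this is acceptable as a proof plan, but the antisymmetry of the second eigenfunction should be listed alongside the shape theorem as a primary, not implicit, obstacle.
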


\begin{proof}[Justification]
If graphs drawn from $\gbm(n,s,t)$ have no giant component with high probability, then recovering communities on $\gbm(n,s,t)$ with accuracy greater than $1/2$ is impossible, so the algorithm trivially recovers communities with optimal accuracy. If graphs drawn from $\gbm(n,s,t)$ typically have two giant components then the two eigenvectors of $A^{(r'd)}$ with the largest eigenvalues will be the ones that have positive entries for every vertex in one giant component and all other entries set to $0$. So, the algorithm will assign all vertices in one giant component to one community, which attains optimal accuracy. That leaves the case where graphs drawn from $\gbm(n,s,t)$ have a single giant component with high probability. For the rest of this argument, assume that we are in this case. Any small component of the graph will have $o(n^2)$ vertices, so any eigenvector with nonzero entries corresponding to vertices in such a component must have an eigenvalue of $o(n^2)$. There are multiple cliques of size $\Omega(n^2)$ in the giant component that do not have any edges between them, so the second largest eigenvalue of $A^{(r'd)}$ is $\Omega(n^2)$. Therefore, the eigenvector of $A^{(r'd)}$ with the second largest eigenvalue will have all of its entries corresponding to vertices outside the giant component set to $0$. From now on, ignore all vertices outside of the giant component of the graph.


Let $w$ be the unit eigenvector of $A^{(r'd)}$ with the largest eigenvalue. Every entry in $A^{(r'd)}$ is nonnegative, so $w$ has all nonnegative entries. Furthermore, there is a path between any two vertices of a graph drawn from $\gbm(n,s,t)$, so every entry in $[A^{(r'd)}]^n$ is positive, which means every entry in $w$ is positive. Now, let $\lambda$ be the eigenvalue corresponding to $w$ and $w'$ be a unit vector of $A^{(r'd)}$. Also, let $E^{(r'd)}$ be the set of all pairs of vertices that have a path of length $r'd$ or less between them. Then it must be the case that
\begin{align*}
\lambda&=\sum_v \lambda (w'_v)^2\\
&=\sum_v \lambda (w'_v)^2\sum_{v':(v,v')\in E^{(r'd)}} w_{v'}/(\lambda w_v)\\
&=\sum_{(v,v')\in E^{(r'd)}} (w'_v)^2 \frac{w_{v'}}{w_v}.
\end{align*}
This means that
\begin{align*}
&w'\cdot A^{(r'd)} w'=\sum_{(v,v')\in E^{(r'd)}} w'_v w'_{v'}\\
&=\lambda+\sum_{(v,v')\in E^{(r'd)}} w'_v w'_{v'}-\frac{w_{v'}}{2w_v}(w'_v)^2-\frac{w_{v}}{2w_{v'}}(w'_{v'})^2\\
&=\lambda-\sum_{(v,v')\in E^{(r'd)}} \frac{w_v w_{v'}}{2}\left(\frac{w'_v}{w_v}-\frac{w'_{v'}}{w_{v'}}\right)^2.
\end{align*}

If $w'$ is the eigenvector of second greatest eigenvalue, then it must be orthogonal to $w$, and for any other unit vector $w''$ that is orthogonal to $w$,  it must be the case that
\[\sum_{(v,v')\in E^{(r'd)}} \frac{w_v w_{v'}}{2}\left(\frac{w'_v}{w_v}-\frac{w'_{v'}}{w_{v'}}\right)^2\le \sum_{(v,v')\in E^{(r'd)}} \frac{w_v w_{v'}}{2}\left(\frac{w''_v}{w_v}-\frac{w''_{v'}}{w_{v'}}\right)^2.\]
On another note, the orthogonality of $w$ and $w'$ implies that $w'$ must have both positive and negative entries. Also, since $w\cdot w=w'\cdot w'$, there must exist $v$ such that $|w'_v/w_v|\ge 1$. So, there must be a significant amount of variation in the value of $\frac{w'_v}{w_v}$, but the values of $\frac{w'_v}{w_v}$ must tend to be similar for nearby vertices. Also, vertices that are particularly close to each other geometrically will have mostly the same neighbors in terms of $E^{(r'd)}$, which will result in them having similar values. So, we would expect that $w'_v/w_v$ would be strongly positive on one side of the graph, strongly negative on the other, and that it will shift gradually between these extremes as one moves from one side to the other. Geometrically, the $x$ direction is the direction the giant component extends the farthest in, so we would expect that these sides would be defined in terms of $x$-coordinates. 

Intuitively, it seems like the entries of $w'$ would switch signs half way between these sides, which means at the $y$-axis. However, we need to consider the possibility that random variation in vertices would disrupt the symmetry in a way that prevents this from happening. Given a vertex $v$ at a given geometric location that gives it a nonvanishing probability of being in the giant component, and $r''<r'd$, the expected number of vertices exactly $r''$ edges away from $v$ in $G$ is $\Theta(r'')$, and the expected number of vertices within $r'd$ edges of $v$ is $\Theta(n^2)$. Now, consider the affects of deleting a random vertex $v'$ on the set of vertices within $r'd$ edges of $v$. It is possible that deleting $v'$ disconnects $v$ from the giant component entirely, in which case its deletion removes nearly the entire set. 

Now, assume that this does not happen. Some of the edges of $v'$ might be cut edges, but the components their removal cuts off from the giant component will typically have $O(1)$ vertices. We would expect that any two edges of $v'$ other than cut edges would be contained in some cycle. Furthermore, due to the abundance of small cycles in the GBM, we would generally expect that there exists some $m$ such that every such pair of edges is contained in a cycle of length at most $m$ and $E[m^2]=O(1)$. That means that for any $v''$ that is still in the giant component, the length of the shortest path from $v$ to $v''$ will be at most $m-4$ edges longer than it was before $v'$ was deleted. So, only vertices that were more than $r'd-(m-4)$ edges away from $v$ are in danger of being removed from the set. Furthermore, a minimum length path from $v$ to $v''$ will only pass through one vertex that is $r''$ edges away from $v$ for each $r''$. So, if $v'$ was $r''$ edges away from $v$, there is only a $O(1/r'')$ chance that the length of the shortest path from $v$ to $v''$ is even effected by the deletion of $v'$. So, the variance of the size of this set conditioned on the geometric location of $v$ and the assumption that $v$ is in the giant component is $O(n^3)$. 

That means that the standard deviation of the size of this set, and the size of the subset restricted to a given geometric region, is much smaller than the expected size of the set. Furthermore, all entries of $(A^{(r'd)})^{\lceil 1/r'\rceil}$ are positive, and we would expect 
that a power iteration method on $(A^{(r'd)})$ would only need a constant number of steps to obtain a reasonable approximation of $w'$. So, random variation in vertex placement has little effect on the behavior of $(A^{(r'd)})$ and we expect that the signs of the entries of $w'$ have $1-o(1)$ correlation with the signs of the corresponding vertices' $x$-coordinates. So, this algorithm would essentially assign the vertices with positive $x$-cordinates to one community and the vertices with negative $x$-cordinates to the other. That is the best one can do to classify vertices in the GBM, so we believe this algorithm will classify vertices with optimal accuracy.
\end{proof}

We likewise believe that this would solve weak recovery on the SBM whenever weak recovery is efficiently solvable. More precisely, we believe the following.

\begin{conjecture}
Choose $p$ and $Q$ such that in $\sbm(n,p,Q/n)$, vertices from every community have the same expected degree and there is an efficient algorithm that solves weak recovery on $\sbm(n,p,Q/n)$. There exists $0<r_0<1$ such that for all constant $r<r_0$ the algorithm that finds the eigenvector of $A^{(rd)}$ or $B^{(rd)}$ with the second largest eigenvalue and then divides the vertices into those with above- and below-median sums of the entries in the eigenvector corresponding to edges ending at them solves weak recovery on $\sbm(n,p,Q/n)$.
\end{conjecture}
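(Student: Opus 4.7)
The plan is to extend Theorem \ref{ks2}, which treats $\sbm(n,a,b)$, to the general $\sbm(n,p,Q/n)$ under the equal-expected-degree hypothesis $PQ\mathbb{1} = d\mathbb{1}$. The argument has two main steps: first, establish a spectral separation analog of Theorem \ref{ks2} for $A^{(rd)}$ and $B^{(rd)}$; second, convert the separation into weak recovery via median rounding of the second eigenvector.

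For the spectral step, conditional on $X$ the expected entries of $A^{(rd)}$ are, to leading order, controlled by $(PQ)^{rd}$. The resulting $k$-block structure has Perron eigenvalue $d^{rd}$ corresponding to $\mathbb{1}$ (trivial under the equal-degree hypothesis), and remaining eigenvalues of size $|\lambda_i(PQ)|^{rd}$ on the orthogonal community-informative directions. I would then bound the centered matrix $A^{(rd)} - \mathbb{E}[A^{(rd)} \mid X]$ in operator norm by $\lambda_1(PQ)^{rd/2}\,\mathrm{polylog}(n)$ by adapting the path-counting / trace-method argument behind Theorem \ref{ks2}, exploiting the key feature that the non-linearity $\mathbb{1}(\cdot \ge 1)$ prevents multi-path counts from exploding inside tangles or around high-degree vertices. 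The KS condition for the general SBM, namely $\lambda_i(PQ)^2 > \lambda_1(PQ)$ for some non-Perron eigenvalue $\lambda_i$, then places the signal eigenvalue above the noise floor and produces the spectral gap.

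The second step would use the linearized belief-propagation correspondence developed in Section \ref{derivation}: iterating linearized BP is essentially a power method on a variant of $\widehat{B}\otimes PQ$, so in the spectral gap regime the second eigenvector of $A^{(rd)}$ aligns with the tensor of a non-trivial eigenvector of $PQ$ and a well-spread vector on the vertices. Concretely, at a vertex $v$ this eigenvector concentrates around a scalar multiple of $\rho_{X_v}$, where $\rho$ is a sub-leading eigenvector of $PQ$; for $B^{(rd)}$ the same holds after summing entries over directed edges incident at $v$. Since $\rho$ has entries of both signs across communities, median rounding produces an assignment with $\Omega(1)$ correlation to $X$, i.e.\ solves weak recovery. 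The hypothesis that efficient weak recovery is possible is used to ensure that some non-Perron eigenvalue of $PQ$ satisfies the KS condition, so that the relevant signal direction is actually visible in the spectrum.

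The main obstacle I expect is the operator-norm bound on the centered matrix. The proof of Theorem \ref{ks2} handles two-label symmetric walk combinatorics and leans on cancellations specific to that setting; extending it here requires tracking $k$-tuples of community labels along walks and verifying that all $\binom{k}{2}$ off-diagonal blocks concentrate simultaneously at the common scale $\lambda_1(PQ)^{rd/2}$. For $B^{(rd)}$, one must additionally impose non-backtracking on the walk enumeration and center the matrix to kill the Perron contribution, likely via a block-moment computation in the spirit of \cite{bordenave}; I expect this non-backtracking variant to be the most technically demanding piece of the argument.
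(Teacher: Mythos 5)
The statement you are addressing is stated in the paper as a conjecture, and the paper supplies only a two-sentence heuristic justification rather than a proof: it observes that $B^{rd}$ has exactly the same eigenvectors as $B$, argues that $B^{(rd)}\approx B^{rd}$ for small constant $r$ because cycles are rare in the SBM, and then appeals to the (believed) correctness of the plain nonbacktracking spectral method for the general SBM. Your plan is a genuinely different and considerably more ambitious route: you propose to redo the machinery behind Theorem \ref{ks2} for the general $\sbm(n,p,Q/n)$ --- a $k$-block analysis of the conditional expectation via $(PQ)^{rd}$, a path-counting bound on the fluctuation, the general Kesten--Stigum condition $\lambda_i(PQ)^2>\lambda_1(PQ)$, and an alignment argument for the second eigenvector followed by median rounding. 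What the paper's route buys is brevity and the ability to piggyback the new conjecture on the established nonbacktracking results; what your route buys is that, if carried out, it would yield an actual proof in the style of the paper's rigorous two-community result rather than a reduction of one open statement to another. Neither argument is complete, which is consistent with the statement being labelled a conjecture.

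One concrete caution about your spectral step. You propose to bound $\|A^{(rd)}-\E[A^{(rd)}\mid X]\|_2$ by $\lambda_1(PQ)^{rd/2}\,\mathrm{polylog}(n)$ ``by adapting the argument behind Theorem \ref{ks2}.'' But the proof of Theorem \ref{ks2} does not establish concentration of $A^{(r)}$ around its conditional expectation; it compares $A^{(r)}$ to the sum $D=\sum_{k=r/2}^{r}A^{\{k\}}$ of self-avoiding-walk matrices and then invokes the weak Ramanujan property of \cite{massoulie-STOC}, which controls $\|Du\|_2$ only on the subspace orthogonal to $A^{\{r\}}1$ and $A^{\{r\}}X$. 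This restricted formulation matters: the top informative eigenvectors of $A^{(r)}$ are asymptotically parallel to $A^{\{r\}}1$ and $A^{\{r\}}X$, not to the community indicator vectors that span the top eigenspace of $\E[A^{(rd)}\mid X]$, and eigenvectors localized on high-degree vertices sit just below the signal scale. A naive Weyl/Davis--Kahan comparison against the conditional expectation therefore does not close as stated. The right generalization of your step is a $k$-community weak Ramanujan bound on the subspace orthogonal to $A^{\{r\}}\rho_1,\dots,A^{\{r\}}\rho_k$, where the $\rho_i$ are the eigenvectors of $PQ$; that, rather than the nonbacktracking centering, is where the technically demanding piece of your plan genuinely lives.
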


\begin{proof}[Justification]
We argue for $B^{(rd)}$. 
We already believed that doing the same thing with the eigenvector of second largest eigenvalue of $B$ would solve weak recovery in this case. $B^{rd}$ has exactly the same eigenvectors as $B$, and for sufficiently small $r$, the difference between $B^{rd}$ and $B^{(rd)}$ should be negligable due to the rarity of cycles in the SBM. As such, we believe that taking the eigenvector of $B^{(rd)}$ with the second largest eigenvalue and then dividing its vertices into those that have above-median sums of the entries corresponding to their edges and those that have below-median sums of the entries corresponding to their edges solves weak recovery. 
\end{proof}

%

So, we believe that assigning vertices to communities based on the top eigenvectors of $B^{(r)}$ for $r$ proportional to the graph's diameter would be an effective way to recover communities. The problem with this approach is that computing $B^{(r)}$ and finding its eigenvectors for such large values of $r$ is not very efficient.

\begin{remark}
Exactly how long this would take depends on our model and whether we really set $r=\Theta(d)$ or use a somewhat smaller value of $r$. On the SBM we do not need particularly high values of $r$ because there are not many tangles anyway. This would run in $O(n^{1+o(1)})$ time on the SBM for any $r=o(d)$, and $n^{1+\Theta(1)}$ time for $r=\Theta(d)$. On the GBM, the minimum workable value of $r$ would probably be around $d/\log(n)$. Exactly how long the algorithm would take depends heavily on how efficient we are at computing powers and eigenvectors, but it could reasonably take something along the lines of $\Theta(n^{3/2}r)$ time.
\end{remark}

Unfortunately, for practical values of $r$ the GBM will have large numbers of cycles of length greater than $2r$, and $B^{(r)}$ does not do anything to mitigate the feedback caused by these cycles. As a result, we believe that this algorithm would fail on the GBM in the following sense.

\begin{conjecture}
There exists $s_0>0$ such that the following holds. For all $m>0$, $s\ge s_0$, $t>0$, and $r=n^{o(1)}$ such that graphs drawn from $\gbm(n,s,t)$ have a single giant component with high probability, the algorithm that finds the eigenvector of $A^{(r)}$ or $B^{(r)}$ with the second largest eigenvalue and then divides the vertices into those that have above-median sums of the entries corresponding to edges ending at them and those that have below-median sums of the entries ending at them does not recover communities with optimal accuracy on $\gbm(n,s,t)$.
\end{conjecture}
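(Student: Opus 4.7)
The plan is to adapt the justification of Conjecture \ref{conj3} from the nonbacktracking matrix $B$ to the powered matrix $B^{(r)}$ (and $A^{(r)}$), exploiting the fact that $r = n^{o(1)}$ is far too small to bridge the $\Theta(\sqrt{n})$ diameter of the giant component of the GBM. So the same tangle-localization obstruction that kills $B$ will kill $B^{(r)}$ and $A^{(r)}$, essentially unchanged.

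First I would establish the existence of two dense tangles at well-separated geometric locations. Tiling an annulus in $\mathbb{R}^2$ around the centers $(\pm s/2, 0)$ by $\Theta(n)$ disjoint discs of radius $t/(2\sqrt{n})$, the number of sample points in each disc is approximately Poisson with constant mean, and these counts are essentially independent. A Poisson tail bound then gives, with probability $1-o(1)$, two such discs, one contained in $\{x<-s/4\}$ and the other in $\{x>s/4\}$, each containing $\eta = \Omega(\log n / \log\log n)$ vertices that form cliques $C_1, C_2$ in $G$ because their geometric diameters lie below $t/\sqrt{n}$. For $s\ge s_0$ large enough the two discs are separated in $\mathbb{R}^2$ by $\Omega(1)$, and standard random-geometric-graph estimates give $d_G(C_1,C_2) = \Omega(\sqrt{n}) = \omega(r)$.

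Since $r = n^{o(1)}$, no $r$-power edge connects a vertex near $C_1$ to a vertex near $C_2$ in $A^{(r)}$, and analogously no length-$r$ nonbacktracking walk in $B^{(r)}$ joins the two regions. Letting $w_i$ be the indicator vector of $C_i$, one has $w_i^\top A^{(r)} w_i / \|w_i\|^2 \ge \eta - 1$, while $w_1^\top (A^{(r)})^t w_2 = 0$ for every $t$ with $tr < d_G(C_1,C_2)$. This forces the top two eigenvalues of $A^{(r)}$ to be $\Omega(\eta)$: either the second eigenvalue is already of that order, or else $c_1 w_1 + c_2 w_2$ (for suitable $c_1,c_2$ orthogonal to the Perron vector) decomposes into several eigenvectors of magnitude $\Omega(\eta)$ whose power-iteration contributions almost perfectly cancel for $\Omega(\sqrt{n}/r)$ steps, which I would argue is untenable since $C_1$ and $C_2$ arise from independent local randomness.

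Finally I would show that the second eigenvector localizes on a single tangle rather than spreading evenly over $C_1$ and $C_2$. Random variation in the local neighborhood of each densest disc perturbs its contribution to the spectrum by a multiplicative factor $1 + \Omega(1/\log^2 n)$, which exceeds any achievable relative gap between the two approximately decoupled local eigenproblems and thus selects a unique top tangle. Propagating the resulting eigenvector outward through $A^{(r)}$ gives entries that decay exponentially in graph-distance from that tangle, so the above- versus below-median cut splits vertices according to graph-distance from one side's densest disc rather than according to the sign of the $x$-coordinate; comparison with the $y$-axis cut from Conjecture \ref{conj_gbm} then gives strictly suboptimal agreement. The hard part will be ruling out the cancellation alternative and, equivalently, justifying the eigenvalue-repulsion claim: both reduce to showing that the top of the spectrum of $A^{(r)}$ is governed by a small collection of approximately disjoint local eigenproblems with genuinely distinct eigenvalues. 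I expect this to follow from Poisson-process concentration estimates for the densest balls of the GBM combined with a Bauer-Fike perturbation argument, but a complete treatment requires percolation estimates outside the scope of the present paper, which is why the statement is left as a conjecture.
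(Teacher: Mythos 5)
There is a genuine gap: you have transplanted the clique-localization mechanism that the paper uses against the \emph{unpowered} operator $B$ (Conjecture \ref{conj3}), but that mechanism does not survive powering over the full range $r=n^{o(1)}$. A lucky Poisson disc gives a clique of size $\eta=\Theta(\log n/\log\log n)$, so your indicator vectors only certify Rayleigh quotients of order $\eta$. Powering, however, raises the degree of a \emph{typical} vertex (or directed edge) to $\Theta(r^2)$ --- in the paper's notation, an average column of $B^{(r)}$ at location $(x,y)$ has $\pi^2t^4r^2p^2(x,y)[f(p(x,y))+o(1)]$ nonzero entries --- so as soon as $r=\omega(\sqrt{\log n/\log\log n})$ (e.g.\ $r=\log^2 n$, well within $n^{o(1)}$) the bulk degree swamps $\eta$, the bound $w_i^\top A^{(r)}w_i/\|w_i\|^2\ge \eta-1$ no longer places the tangles anywhere near the top of the spectrum, and indeed neutralizing exactly these outlier cliques is the advertised purpose of powering. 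Your argument therefore covers at best the bounded-$r$ end of the range, whereas the paper's justification explicitly works in the regime $r=\omega(1)$.

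The paper's heuristic rests on a different, macroscopic obstruction that your proposal never engages: the degree profile of the powered graph inherits the \emph{smooth} bimodality of the Gaussian mixture density. This is where the hypothesis $s\ge s_0$ actually enters (the paper takes $s_0=8$ so that $p(s/2,0)/p(x,y)\ge e^2/2$ on the strip $|x|<s/4$, producing two separated degree maxima at $(\pm x_0,0)$); in your write-up $s_0$ plays no structural role. The paper then shows by a second-moment/variational computation that the top eigenvector of the restriction of $B^{(r)}$ to one half-plane concentrates within geometric radius $O((r^2/n)^{1/4})$ of $(x_0,0)$, a region containing $\Theta(r\sqrt{n})$ vertices, which yields an eigenvalue mismatch of relative size $\Theta((r^2n)^{-1/4})$ between the two halves --- not the $1+\Omega(1/\log^2 n)$ repulsion you import from the clique setting --- and compares $\ln((r^2n)^{1/4})$ against the $\Theta(\sqrt{n}/r)$ powered-graph distance between the maxima to rule out a balanced second eigenvector. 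Your conclusion (the median cut separates vertices by proximity to one dense region) matches the paper's, but to justify it in the regime where the conjecture is interesting you would need to replace the tangle analysis by this density-maximum analysis.
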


\begin{proof}[Justification]
We argue for $B^{(r)}$ and assume that $r=\omega(1)$.
We believe that this algorithm would encounter nearly the same problem on the GBM as the one using $B$ would. More precisely, we believe that each of the main eigenvectors of $B^{(r)}$ would have some large entries corresponding to some of the directed edges in the vicinity of $(-s/2,0)$ or $(s/2,0)$, but not both. Then the absolute values of entries corresponding to other edges would decrease rapidly as one moved away from that region. As a result, we expect the algorithm to divide the vertices into those sufficiently close to the dominant region, and all other vertices.

More formally, set $s_0=8$ and let $s\ge s_0$. Next, for all $x$ and $y$, let $p(x,y)$ be the probability density function of vertices' geometric location. For any $x,y$ with $|x|<s/4$, we have that $p(s/2,0)/p(x,y)\ge e^2/2\approx 3.6$. Among other things, this means that $p(x,y)$ has maxima at $(\pm x_0,0)$ for some $x_0$ with $s/4<x_0<s/2$. If $(v,v')$ is an edge at $(x,y)$, then on average there will be $(1+o(1)) \pi^2 t^4 r^2 p^2(x,y)$ other edges that are geometrically close enough to $(v,v')$ that there could potentially be a path of length $r+1$ connecting them. However, there will not always be such a path between two potential edges, and it is more likely to exist where the vertices are denser. As such, there exists a nondecreasing function $f$ such that an average column in $B^{(r)}$ corresponding to an edge ending at $(x,y)$ will have $\pi^2 t^4 r^2 p^2(x,y)[f(p(x,y))+o(1)]$ nonzero entries. As a result, if we pick any point within $o(1)$ of $(\pm x_0,0)$ and then take a circle of radius $1/\ln(n)$ centered on that point, the graph induced by edges with vertices in that circle will have an average degree of $\pi^2 t^4 r^2 p^2(x_0,0)[f(p(x_0,0))+o(1)]$. Furthermore, we can find an arbitrarily large constant number of such circles such that no vertex in one of them has a path of length less than $\sqrt{n}/\ln(n)$ connecting it to a vertex in another. That strongly suggests that $B^{(r)}$ will have at least $m$ eigenvalues of $\pi^2 t^4 r^2 p^2(x_0,0)[f(p(x_0,0))+o(1)]$ based on the same logic used in the justification for Conjecture \ref{conj3}.

Now, let $B'$ be the restriction of $B^{(r)}$ to edges that's vertices both have positive $x$-coordinates. Next, let $w$ be its top eigenvector, scaled so that its entries are all positive. Also, for a given vertex $v$, let $(x_v,y_v)$ be the coordinates of $v$ and $\delta_v=\sqrt{(x_v-x_0)^2+y_v^2}$. We would expect that
\begin{align*}
\lambda \sum_{(v,v')} \delta^2_v w_{v,v'}&=\sum_{(v,v')} w_{v,v'}\sum_{(v'',v'''): B'_{(v'',v'''),(v,v')}=1} \delta^2_{v''}\\
&\approx\sum_{(v,v')} \left[\pi^2 t^4 r^2 p^2(x_v,y_v)f(p(x_v,y_v))\right] \left[\delta^2_v+2t^2r^2f(p(x_v,y_v))/3n\right]w_{v,v'}\\
&=\sum_{(v,v')} \left[\pi^2 t^4 r^2 p^2(x_v,y_v)f(p(x_v,y_v))\right]\delta^2_vw_{v,v'}\\
&\hspace{1 cm} +\sum_{(v,v')} \left[\pi^2 t^4 r^2 p^2(x_v,y_v)f(p(x_v,y_v))\right] \left[2t^2r^2f(p(x_v,y_v))/3n\right]w_{v,v'}\\
&\approx \sum_{(v,v')} \left[\pi^2 t^4 r^2 p^2(x_v,y_v)f(p(x_v,y_v))\right]\delta^2_vw_{v,v'}\\
&\hspace{1 cm} +\lambda\left[2t^2r^2f(p(x_v,y_v))/3n\right] \sum_{(v,v')}w_{v,v'}\\
&= \sum_{(v,v')} \lambda(1-\Theta(\delta_v^2))\delta^2_vw_{v,v'}+\Theta(\lambda t^2r^2/n) \sum_{(v,v')}w_{v,v'}.\\
\end{align*}
where the last equality follows from the expectation that $f(p(x_v,y_v))=f(p(x_0,0))-\theta(\delta_v^2)$. This implies that
\[\sum_{(v,v')} \delta_v^4 w_{v,v'}=\Theta\left(\sum_{(v,v')} r^2 w_{v,v'}/n\right).\]

So, most of the weight of $w$ is concentrated in the region of geometric space within $O(\sqrt[4]{r^2/n})$ of $(x_0,0)$. There are only $O(r\sqrt{n})$ vertices in this space, so random variation in vertex placement should give $\lambda$ a standard deviation of $\Theta(1/\sqrt[4]{r^2n}E[\lambda])$. That means that if we set $\lambda'$ equal to the top eigenvalue of the restriction of $B^{(r)}$ to edges whose vertices have negative $x$-coordinates, $\lambda'$ will probably differ from $\lambda$ by a factor of $(1+\Theta(1/\sqrt[4]{r^2n}))$. Furthermore, $(-x_0,0)$ is $\Theta(\sqrt{n})$ edges away from $(x_0,0)$. $\ln(\sqrt[4]{r^2n})=o(\sqrt{n}/r)$, which means that the eigenvector of $B^{(r)}$ with the second largest eigenvalue is unlikely to combine these to any significant degree. So, the second eigenvector of $B^{(r)}$ will be concentrated around $(-x_0,0)$ or $(x_0,0)$ but not both.


That means that assigning vertices to communities based on whether or not the sum of the entries of the eigenvector corresponding to their edges is greater than the median would essentially assign vertices that are sufficiently close to $(\pm x_0,0)$ to one community and those that are farther away to the other community. This fails to attain optimal accuracy for the same reasons as in Conjecture \ref{conj3}.
\end{proof}


There are a couple of key problems that contribute to the predicted failure of the algorithm in the scenario above. First of all, the graph is divided into high degree and low degree regions. Edges from a region almost always lead to the same region, or at least a similar region. As a result, eigenvectors focus almost completely on the high degree regions and assign negligable values to edges in low degree regions. Secondly, for large $s$ there are two different regions of maximum degree that are $\Theta(\sqrt{n})$ edges away from each other. Also, the top eigenvalue of the restriction of $B^{(r)}$ to one of the densest regions has $\omega(1/r\sqrt{n})$ random variance due to the positioning of vertices. As a result, each of the main eigenvectors of $B^{(r)}$ will tend to focus on one maximum degree region or the other rather than being symmetric or antisymmetric. Thus, the algorithm will tend to classify vertices based on their distance to a high degree region rather than classifying them based on what side of the graph they are on.  

\begin{conjecture}
Let $a,b\ge 0$ such that $(a-b)^2>2(a+b)$, $r=\Theta(\sqrt{\log (n)})$, and $G\sim\sbm(n,a,b)$. Next, let $w$ be the eigenvector of $A^{(r)}$ with eigenvalue of second largest absolute value. Dividing the vertices of $G$ into those with above median entries in $w$ and those with below median entries in $w$ solves weak recovery.
\end{conjecture}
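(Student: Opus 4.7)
The plan is to extend the spectral separation of Theorem~\ref{ks2} down to the smaller scale $r = \Theta(\sqrt{\log n})$ and then apply a median-rounding argument. The core observation is that the relevant signal-to-noise ratio for the second eigenvalue of $A^{(r)}$ is of order $((a-b)^2/(2(a+b)))^{r/2}$ up to polylog factors; this exceeds $1$ precisely when the KS condition $(a-b)^2 > 2(a+b)$ holds, and for any $r = \omega(1)$ the ratio diverges. So the proof should not depend on $r$ being of logarithmic order, only on $r \to \infty$ and on the local tree structure being preserved.

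First, I would establish the local geometry. For $r = c\sqrt{\log n}$, the typical ball $B_r(v)$ in $G$ has size $((a+b)/2)^r = n^{o(1)}$, well below the $\sqrt{n}$ scale at which cycles start appearing in sparse SBM. Hence with probability $1-o(1)$, for all but a vanishing fraction of vertices $v$, $B_r(v)$ is a tree and couples with the two-type Galton--Watson branching process with Poisson$(a/2)$ same-community and Poisson$(b/2)$ cross-community offspring. Using this coupling, one decomposes $A^{(r)} = M + Z$ with $M = \mathbb{E}[A^{(r)}\mid X]$ admitting the rank-2 approximation
\[
M \approx \alpha_r \mathbb{1}\mathbb{1}^T + \beta_r X X^T,
\]
where $\alpha_r \asymp ((a+b)/2)^r/n$ and $\beta_r \asymp ((a-b)/2)^r/n$. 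The top two eigenvalues of $M$ are therefore $\asymp ((a+b)/2)^r$ and $\asymp ((a-b)/2)^r$, with the signal eigenvector proportional to $X$.

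The key quantitative step is the noise bound $\|Z\| = O(((a+b)/2)^{r/2} \mathrm{polylog}(n))$, which I would obtain by the trace method, following the same template as Theorem~\ref{ks2}: expand $\mathbb{E}[\tr(Z^{2k})]$ for $k \asymp r$ and verify that only tree-like pairings of walks contribute, giving the $\sqrt{\text{row sum}}$ scaling characteristic of random-matrix theory. Under the KS condition, the resulting spectral gap between $\lambda_2(A^{(r)})$ and $\lambda_3(A^{(r)})$ has order $((a-b)/2)^r (1-o(1))$. Davis--Kahan then yields $|\langle w, X/\sqrt{n}\rangle| = 1 - o(1)$, so the entries of $w$ agree in sign with $\pm X_v$ on all but $o(n)$ vertices. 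Since $X$ is $\{-1,+1\}$-valued with balanced sum up to $O(\sqrt{n})$, the median of $w$ lies within $o(1/\sqrt{n})$ of zero, and rounding by $\mathrm{sign}(w_v - \mathrm{median}(w))$ yields agreement $\tfrac{1}{2} + \Omega(1)$, solving weak recovery.

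The main obstacle is the noise bound. The non-linearity $\mathbb{1}(\cdot \ge 1)$ in the definition of $A^{(r)}$ introduces subtle dependencies between entries that preclude a direct application of matrix Bernstein-type inequalities; the trace computation must enumerate walk configurations in the powered graph and bound the contribution of configurations that traverse short cycles. At the scale $r = \Theta(\sqrt{\log n})$, such cycles within $B_r(v)$ are rare, but making this precise requires combinatorial bookkeeping analogous to that in \cite{massoulie-STOC,bordenave}, now adapted to the indicator-of-walks operator rather than the distance or nonbacktracking matrix. A secondary issue is symmetry breaking (cf.\ Section~\ref{derivation}) so that the second eigenvector is well-defined and oriented consistently with $X$; this is routine once the spectral bound is in place.
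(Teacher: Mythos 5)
There is a genuine gap at the key quantitative step. The decomposition $A^{(r)} = M + Z$ with $M = \E[A^{(r)}\mid X]$ and $\|Z\|_2 = O(((a+b)/2)^{r/2}\,\mathrm{polylog}\,n)$ is false. The $i$-th row sum of $A^{(r)}$ is $|B_r(i)|$, and in the sparse SBM the ball size fluctuates \emph{multiplicatively} around its conditional mean: $|B_r(i)| \approx W_i\,((a+b)/2)^r$ where $W_i$ is the Galton--Watson martingale limit, with $\mathrm{Var}(W_i)=\Theta(1)$. Testing against $u = 1/\sqrt{n}$ gives $\|Z u\|_2^2 \approx \frac{1}{n}\sum_i (W_i-1)^2\big(\tfrac{a+b}{2}\big)^{2r} = \Theta\big(((a+b)/2)^{2r}\big)$, so $\|Z\|_2 = \Omega\big(((a+b)/2)^{r}\big)$ --- the same order as $\lambda_1$, not its square root. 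Consequently Weyl/Davis--Kahan applied to this decomposition gives nothing. This is exactly the obstruction that forces \cite{massoulie-STOC} (and the proofs of Theorems \ref{ks1}--\ref{ks2} in this paper) to state the ``weak Ramanujan'' bound as a supremum over $u$ orthogonal to the \emph{random} vectors $A^{\{r\}}1$ and $A^{\{r\}}X$, rather than to $1$ and $X$ or to the columns of $\E[A^{(r)}\mid X]$: the informative directions are $A^{\{r\}}1$ and $A^{\{r\}}X$, which carry the $W_i$-profile, and the $((a+b)/2)^{r/2}$ scaling only holds on their orthogonal complement. A symptom of the same error is your conclusion $|\langle w, X/\sqrt{n}\rangle| = 1-o(1)$: the second eigenvector aligns with $A^{\{r\}}X$ (entries $\approx X_v\,\beta^r W_v'$ with $W_v'\ge 0$ of constant-order variance), whose normalized correlation with $X$ is a constant strictly less than $1$; one only gets $\Omega(1)$ correlation, which still suffices for weak recovery but not via the route you describe.

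For comparison, the paper's (heuristic) justification of this conjecture avoids the expectation-plus-noise decomposition entirely: it evaluates the quadratic forms $1^T (A^{(r)})^t 1$, $X^T (A^{(r)})^t X$ and $e_v^T (A^{(r)})^{2t} e_v$ directly by counting sequences of nonbacktracking walks of length at most $r$, deduces from these moments that $1$ and $X$ each have $\Omega(1)$ overlap with eigenvectors of eigenvalue $\asymp \alpha^r$ and $\asymp \beta^r$ respectively, and separately bounds the localized/bulk eigenvalues by $O\big(r\,\alpha^{r/2}\big)$. To repair your argument you would need to either adopt that variational/moment route, or replace $M$ by the rank-two projection onto $\mathrm{span}(A^{\{r\}}1, A^{\{r\}}X)$ and prove the norm bound only on its orthogonal complement, as in Lemma \ref{lem:weakramanujan}.
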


\begin{proof}[Justification]
First, let $w'\in\mathbb{R}^n$ be the vector such that $w'_v=(-1)^{X_v}$ and $w''\in\mathbb{R}^n$ be the vector such that $w''_v=1$. Next, note that for small $t$, and a vertex $v$, the expected number of vertices $t$ edges away from $v$ is approximately $(\frac{a+b}{2})^t$, and approximately $(\frac{a-b}{2})^t$ more of these are in the same community as $v$ than are in the other community on average. For any $t>0$ and vertices $v$ and $v'$, we have that $(A^{(r)})^t_{v,v'}$ is the number of sequences of $t$ nonbacktracking walks of length $r$ or less such that the first nonbacktracking walk starts at $v$, each subsequent nonbacktracking walk starts where the previous one ended, and the last nonbacktracking walk ends at $v'$. If such a sequence has nonbacktracking walks with lengths of $r-c_1$, $r-c_2$,...,$r-c_t$ and there are $b_1$,$b_2$,...,$b_{t-1}$ edges of overlap between sucessive nonbacktracking walks, then the combined walk is essentially a nonbacktracking walk of length $rt-\sum c_i-2\sum b_i$ from $v$ to $v'$ with nonbacktracking walks of length $b_1$, $b_2$,..., $b_{t-1}$ branching off of it. So for small $t>0$, we have that
\begin{align*}
w''\cdot \left[A^{(r)}\right]^t w''/n&\approx \sum_{c_1,...,c_t,b_1,...,b_{t-1}\ge 0} \left(\frac{a+b}{2}\right)^{rt-\sum c_i-2\sum b_i}\prod_{i=1}^{t-1} \left(\frac{a+b}{2}\right)^{b_i}\\
&= \left(\frac{a+b}{2}\right)^{rt} \left(\frac{a+b}{a+b-2}\right)^{2t-1}.
\end{align*}
Similarly,
\begin{align*}
w'\cdot \left[A^{(r)}\right]^t w'/n&\approx \sum_{c_1,...,c_t,b_1,...,b_{t-1}\ge 0} \left(\frac{a-b}{2}\right)^{rt-\sum c_i-2\sum b_i}\prod_{i=1}^{t-1} \left(\frac{a+b}{2}\right)^{b_i}\\
&= \left(\frac{a-b}{2}\right)^{rt} \left(\frac{a-b}{a-b-2}\right)^{t}\left(\frac{(a-b)^2}{(a-b)^2-2(a+b)}\right)^{t-1}
\end{align*}
assuming that $(a-b)^2>2(a+b)$. Alternately, if $w_1,...,w_n$ is an orthonormal eigenbasis for $A^{(r)}$ with corresponding eigenvalues of $\lambda_1,...,\lambda_n$, then for all $t\ge 0$, we have that $w''\cdot \left[A^{(r)}\right]^t w''/n=\frac{1}{n}\sum_{i=1}^n \lambda_i^t (w_i\cdot w'')^2$. That means that
\begin{align*}
&\sum_{i=1}^n \lambda_i^2\left (\lambda_i-\left(\frac{a+b}{2}\right)^r\left(\frac{a+b}{a+b-2}\right)^2\right)^2 (w_i\cdot w'')^2\\
&=\sum_{i=1}^n [\lambda_i^4-2\left(\frac{a+b}{2}\right)^r\left(\frac{a+b}{a+b-2}\right)^2\lambda_i^3+\left(\frac{a+b}{2}\right)^{2r}\left(\frac{a+b}{a+b-2}\right)^4\lambda_i^2](w_i\cdot w'')^2\\
&=w''\cdot \left[A^{(r)}\right]^4 w''-2\left(\frac{a+b}{2}\right)^r\left(\frac{a+b}{a+b-2}\right)^2 w''\cdot \left[A^{(r)}\right]^3 w''\\
&+\left(\frac{a+b}{2}\right)^{2r}\left(\frac{a+b}{a+b-2}\right)^4 w''\cdot \left[A^{(r)}\right]^2 w''\\
&\approx n\left(\frac{a+b}{2}\right)^{4r} \left(\frac{a+b}{a+b-2}\right)^{7}-2n\left(\frac{a+b}{2}\right)^{4r} \left(\frac{a+b}{a+b-2}\right)^{7}+n\left(\frac{a+b}{2}\right)^{4r} \left(\frac{a+b}{a+b-2}\right)^{7}\\
&=0.
\end{align*}

So, any eigenvector that is significantly correlated with $w''$ must have an eigenvalue of approximately $0$ or approximately $(\frac{a+b}{2})^r(\frac{a+b}{a+b-2})^2$. We know that $\sum_{i=1}^n \lambda^2_i (w_i\cdot w'')=w''\cdot [A^{(r)}]^2 w''\approx n(\frac{a+b}{2})^{2r}(\frac{a+b}{a+b-2})^3$, so they cannot all have small eigenvalues. Therefore, $w''$ is correlated with an eigenvector of $A^{(r)}$ with an eigenvalue of approximately $(\frac{a+b}{2})^r(\frac{a+b}{a+b-2})^2$. By similar logic, $w'$ must be correlated with an eigenvector of $A^{(r)}$ with an eigenvalue of approximately $(\frac{a-b}{2})^r(\frac{a-b}{a-b-2})(\frac{(a-b)^2}{(a-b)^2-2(a+b)})$. For a high degree vertex, $v$, we would expect that
\begin{align*}
e_v \cdot (A^{(r)})^2 e_v&=||A^{(r)} e_v||_2^2\\
&\approx \sum_{i=1}^r \mathrm{deg}(v)\left(\frac{a+b}{2}\right)^{i-1}\\
&\approx \mathrm{deg}(v)\cdot \frac{2}{a+b-2} \left(\frac{a+b}{2}\right)^r.
\end{align*}
The largest degree of a vertex in $G$ is $\Omega(\log(n)/\log(\log(n)))$ with high probability, so this means that there will be a large number of eigenvectors with eigenvalues of up to $\Omega\left(\sqrt{\frac{2}{a+b-2} (\frac{a+b}{2})^r\frac{\log(n)}{\log(\log(n))}}\right)$ that are concentrated in small regions of the graph. We can upper bound the absolute value of any eigenvalue corresponding to an eigenvector that is concentrated in a small region of the graph by using the fact that such an eigenvector would be significantly correlated with the unit vector at one of the vertices it is focused on. Given a high degree vertex $v$, there are approximately $\mathrm{deg}(v)(\frac{a+b}{2})^{r'-1}$ vertices that are $r'$ edges away from $v$, and the corresponding entries in $(A^{(r)})^t$ will tend to be reasonably close to each other. So, we would expect that

\begin{align*}
e_v\cdot (A^{(r)})^{2t} e_v&=||(A^{(r)})^t e_v||_2^2\\
&=\sum_{v'\in G} [(A^{(r)})^t_{v,v'}]^2\\
&=\sum_{r'=0}^{rt} \sum_{v'\in G: d(v,v')=r'} [(A^{(r)})^t_{v,v'}]^2\\
&=e^{O(t)}\sum_{r'=0}^{rt} \frac{1}{|\{v'\in G:d(v,v')=r'\}} \left[\sum_{v'\in G: d(v,v')=r'} (A^{(r)})^t_{v,v'}\right]^2\\
&=e^{O(t)}\sum_{r'=0}^{rt} \mathrm{deg}(v)^{-1}\left(\frac{a+b}{2}\right)^{1-r'} \left[\sum_{v'\in G: d(v,v')=r'} (A^{(r)})^t_{v,v'}\right]^2\\
&=e^{O(t)}(rt)^{-O(1)}\left[ \sum_{r'=0}^{rt} \mathrm{deg}(v)^{-1/2}\left(\frac{a+b}{2}\right)^{(1-r')/2}\sum_{v'\in G: d(v,v')=r'} (A^{(r)})^t_{v,v'}\right]^2\\
&=e^{O(t)}(rt)^{-O(1)}\left[ \mathrm{deg}(v)^{-1/2} \sum_{v'\in G}\left(\frac{a+b}{2}\right)^{(1-d(v,v'))/2} (A^{(r)})^t_{v,v'}\right]^2\\
&=e^{O(t)}(rt)^{-O(1)}[ \mathrm{deg}(v)^{-1/2} \sum_{c_1,...,c_t,b_1,...,b_{t-1}} \left(\frac{a+b}{2}\right)^{(1-rt+\sum c_i+2\sum b_i)/2}\\
&\indent\indent \left(\frac{a+b}{2}\right)^{rt-\sum c_i-2\sum b_i}\prod_{i=1}^{t-1} \left(\frac{a+b}{2}\right)^{b_i}]^2\\
&=e^{O(t)}(rt)^{-O(1)}\left[ \mathrm{deg}(v)^{-1/2} \sum_{c_1,...,c_t,b_1,...,b_{t-1}} \left(\frac{a+b}{2}\right)^{rt/2+1/2-\sum c_i/2}\right]^2\\
&=e^{O(t)}(rt)^{-O(1)}\left[ \mathrm{deg}(v)^{-1/2} (r+1)^{t-1} \left(\frac{a+b}{2}\right)^{rt/2+1/2} \left(\frac{1}{1-\sqrt{2/(a+b)}}\right)^t\right]^2\\
&=e^{O(t)}(rt)^{-O(1)} \mathrm{deg}(v)^{-1} r^{2t-2} \left(\frac{a+b}{2}\right)^{rt+1} \left(1-\sqrt{2/(a+b)}\right)^{-2t}.\\
\end{align*}
This means that any eigenvector of $A^{(r)}$ that is concentrated in a small region of the graph will have an eigenvalue that is within a constant factor of $r \left(\frac{a+b}{2}\right)^{r/2}$, and we can show that for a random vector $w'''$ we will have
\begin{align*}
w'''\cdot (A^{(r)})^{2t} w'''/||w'''||_2^2&\approx E[||(A^{(r)})^t w'''||_2^2/||w'''||_2^2]\\
& \approx E[||(A^{(r)})^t e_v||_2^2]\\
&=\Theta\left(e^{O(t)}(rt)^{-O(1)} r^{2t-2} \left(\frac{a+b}{2}\right)^{rt+1}.\right)
\end{align*}
So, it appears that eigenvalues other than the dominant eigenvector and $w$ will tend to have eigenvalues of $O(r \left(\frac{a+b}{2}\right)^{r/2})$. This will always be less than $(\frac{a-b}{2})^r(\frac{a-b}{a-b-2})(\frac{(a-b)^2}{(a-b)^2-2(a+b)})$ for sufficiently large $n$. So, we expect that $w$ will be correlated with the communities. That suggests that dividing the vertices into those with above median entries in $w$ and those with below median entries in $w$ will solve weak recovery.
\end{proof}

\section{Powering implementations}\label{implementations}
We argued in previous sections that graph powering can circumvent the issues of nonbacktracking or normalized Laplacian operators on the discussed objectives, but if some regions of the graph have significantly higher degrees than others, graph powering may require impractically large values of $r$. We next discuss how one can combine graph powering with other processing steps to reduce the value of $r$.

\subsection{Cleaning and normalizing}

A first possibility is to combine graph powering on $A$ or $B$ with degree normalization, e.g., to use the normalized Laplacian of $A^{(r)}$. Since graph powering will provide a first round of regularization of the density variations in the graph, composing powering with degree normalization may not face the issues mentioned previously for the normalized Laplacian. 

As long as $r=\omega(1)$, we are effectively only weighting the information provided by a path based on the inverses of $o(1)$ of the degrees of the vertices on the path. That reduces the distortion to the flow of information this causes enough that aforementioned distortion will not cause the algorithm to fail at weak recovery on the SBM. The initial powering step also tends to render isolated regions of the graph significantly less isolated, which mitigates their effects on the eigenvectors.

However, it may not mitigate the effects of isolated regions enough. On the SBM there are paths of length $\Omega(\log(n))$ such that each such path only has one edge connecting it to the rest of the graph. Unless we use $r=\Omega(\log(n))$ with a sufficiently large constant, these paths will still be isolated enough after powering to result in eigenvectors with eigenvalues greater than any eigenvalue of an eigenvector corresponding to the communities. This would result in the algorithm failing at weak recovery. If we do use a sufficiently large value of $r$, then due to the structure of the SBM, the typical vertex would be within $r$ edges of $n^{\Omega(1)}$ other vertices, and the algorithm would take $n^{1+\Omega(1)}$ space and time, which would be notably suboptimal.

A reasonable solution to this would be to delete problematic sections of the graph before powering. Repeatedly deleting all leaves would be a good start because branches are one of the potential forms an isolated region could take and leaves generally do not provide much information on their neighbors' communities. However, there are other possible forms of isolated regions, such as long paths with no edges connecting their interiors to the rest of the graph. We would like to remove these as well, but they do provide evidence that the vertices on either side of the path are in the same community. As a result, for any constant $c$, deleting all such paths of length $c$ or more would discard enough evidence that it would render weak recovery impossible for the SBM with parameters sufficiently close to the threshold. So, we would need to pick some function $r'=\omega(1)$ and only delete isolated paths that have a length of $r'$ or greater.

This would probably work on the SBM or GBM. However, some other models have vertices with extremely high degree. These can potentially make community recovery easier, but they can also potentially create problems by adding large numbers of unhelpful edges once the graph is raised to a significant power. As an extreme example of this, a vertex that was adjacent to every other vertex would cause any nontrivial powering of the graph to be the complete graph. As such, it may help to remove vertices of excessively high degree. This may not always help, and it is important not to overdo it, but in some cases it may help to delete the $m$ highest degree vertices for some $m=o(n)$.

That still leaves the issue of ensuring that there are no remaining regions of the graph that are still too isolated after powering. In order to do that, we would like to ensure that for every vertex $v$, the majority of the vertices that are within $o(r)$ edges of $v$ have typical levels of connectivity to the rest of the graph. If every vertex has some constant probability of having significantly lower connectivity than normal, and the vertices $r$ edges away from $v$ are reasonably close to being independent, then we would want there to be $\omega(\log n)$ vertices $o(r)$ edges away from $v$ in order to ensure that. The number of vertices that are typically a given distance away from a designated vertex varies wildly depending on the model. In the SBM, the number of vertices a given distance away scales geometrically with the distance, while in the GBM the number of vertices is linear in the distance. More generally, define $f(m)$ to be the typical number of vertices $m$ edges away from a given vertex in the graph. If the graph has a diameter of $d$, then that presumably means that $f^{-1}(n)=\Theta(d)$, which suggests that perhaps $f^{-1}(\log(n))=\Theta(\log(d))$. In all models we have considered, this either gives an accurate estimate of $f^{-1}(\log(n))$ or overestimates it.

However, we want to ensure that every vertex has $\omega(\log(n))$ vertices $o(r)$ edges away from it, rather than merely ensuring that the typical vertices do. Our previous deletion of all leaves and all isolated paths of length $r'$ or less ensures that for every vertex $v$ remaining in the graph, $v$ has at least two edges. It also ensures that if one starts a nonbacktracking walk at any edge, one will never hit a dead end, and one will have a choice of edges at least every $r'$ steps. Steps that continue along an isolated path do not seems to contribute to increasing the number of reachable vertices, but the others do. So, that would suggest that perhaps one should require that $r=\omega(r'\log(d))$. Unfortunately, that does not quite work because in models like the GBM, one could essentially take a long isolated path and then add a few cycles to prevent it from being deleted. Luckily, there is a limited amount of room for such a construction to exist in an isolated region of the graph, which essentially means that if it has more than $\sum_{i=1}^{f^{-1}(\ln(m))} f(i)\approx \ln^2(m)$ vertices, it will have to have at least $m$ vertices near the edge of the region. So, it is sufficient to require that $r=\omega(\max(r'\log(d),\log^2(d)))$, $r=o(d)$, and $r'=\omega(1)$. There are many possible values within this range, but to keep things relatively simple, we can set $r=\Theta(\ln^3(d))$ and $r'=\Theta(\ln^{3/2}(d))$.

\subsection{The spectral meta-algorithm}
Based on the points from previous sections, we arrive at the following algorithm. 

\begin{definition}
For a connected graph $G$ and parameters $\tau,c \ge 0$, define $\psi(G)$ as the matrix output by the following procedure, where $r:=c\ln^3(\diam(G))$.
\begin{enumerate}

\item Cleaning:
\begin{itemize}
\item Delete all vertices of degree more than $\tau$.
\item Repeat the following until no vertex is deleted:
\begin{itemize}
\item Delete all the leaves from $G$;
\item Delete all $\sqrt{r}$-segments from $G$: If $G$ has a vertex $v$ such that every vertex within $\sqrt{r}$ edges of $v$ has degree $2$ or less, delete $v$.
\end{itemize}
\item If the graph has multiple components, delete all but the largest component.
\end{itemize}

\item Powering: 

Take the $r$-th power of the graph output in the previous step (i.e., add an edge between every pair of vertices that had a path of length $r$ or less between them).

\item Normalizing:

Construct the random walk matrix of the graph output in the previous step, i.e., $\psi(G):=D_{\tilde{G}}^{-1}A_{\tilde{G}}$ where $\tilde{G}$ is the graph output at the previous step.
\end{enumerate}
\end{definition}

This leads to the following algorithm.\\ 

\noindent
{\it Spectral meta-algorithm (for two clusters)}.\\
Input: a graph $G$, a choice of the two parameters for $\psi(G)$.\\
Algorithm: take the eigenvector $w$ of $\psi(G)$ corresponding to its second largest eigenvalue, and assign each vertex in $G$ with a positive entry in $w$ to one cluster, and all other vertices to the other cluster.

\subsection{Powering and sparsifying}
One downside that remains for the previous spectral meta-algorithm is that the powering step will typically yield a graph that has $\Omega(n)$ vertices and at least a polylogarithmic number of edges per vertex which could easily be inconveniently large. For the case of sparse graph, which is the main focus of this paper, this is typically not an issue (e.g., the complexity of the algorithm in Theorem \ref{ks2} is already quasi-linear). However, for denser graphs, this may be impractical. One possible approach to mitigating this would be to randomly delete most of the vertices. 

As long as the main eigenvalues of the graph's random walk matrix are sufficiently large relative to the eigenvalues corresponding to noise eigenvectors (i.e., eigenvectors resulting from the random variation in the graph) and we do not delete too large a fraction of the vertices, this will probably not affect the eigenvectors too much, and we can guess communities for the vertices that were deleted based on the communities of nearby vertices.

Of course, if we still have to compute the powered graph in the first place, this does not help all that much. However, we can address that by alternating between raising the graph to a power and deleting vertices. For a graph $G$ and integer $r$, if the distance between $v$ and $v'$ in $G$ is significantly less than $2r$, there will typically be a large number of vertices that are within $r$ edges of both $v$ and $v'$. As such, if $H$ is a graph formed by taking $G^{(r)}$ and then deleting most of the vertices, and $v,v'\in H$, then it is fairly likely that $H^{(2)}$ will contain an edge between $v$ and $v'$. Therefore, the result of alternating between powering a graph and deleting vertices at random is likely to be reasonably close to what we would have gotten if we had simply raised it to a high power first and then deleted most of the vertices. So, it should still allow us to assign the vertices to communities accurately while saving us the trouble of computing the full powered graph.

Another possibility would be to edge-sparsify the powered graph, as in \cite{sparsify}. Since it would be computationally expensive to first calculate $A^{(r)}$ and then sparsify it, the sparsification could be done by iteratively squaring the graph and then sparsifying to keep the degree low. One would have to prove that the result of this chain of powerings and sparsifications would indeed approximate the spectral properties of the graph power.\footnote{Other relevant papers for this type of approaches based mainly on the classical rather than graph power are \cite{peng2014efficient,yin2,murtagh2017derandomization,yin1}. See also \cite{yin3} for lower bounds on the expansion of classical powers.}

\section{Additions and open problems}
\subsection{Powering weighted graphs}
Throughout this analysis so far, we have assumed that our original graph is unweighted. To use our algorithms to recover communities on a weighted graph, we would first have to consider what the weights mean. The simplest possibility would be that the degree of evidence that an edge provides that its vertices are in the same community is proportional to its weight. For the rest of this section, we will assume that the weights are positive and have this property; other scenarios may require applying some function to the weights when they are used in the adjusted algorithm, or otherwise modifying it to account for the weights.

To use our algorithms to recover communities on a weighted graph, we would need to make several modifications. First of all, it would be possible for the graph to have structures that were essentially isolated paths or leaves, but that technically were not due to the presence of some extra edges with negligible weights. As such, we would want to modify the cleaning phase so that it deletes vertices that are sufficiently similar to leaves or centerpoints of isolated paths. One way to do that would be to have the algorithm do the following. If there is any vertex such that at least $1-1/\sqrt{r}$ of the total weight of its edges is concentrated on a single edge, the algorithm would consider it a leaf and delete it. Likewise, if there is a path of length $2\sqrt{r}+2$ such that every vertex in the interior of the path has at least $1-1/\sqrt{r}$ of the total weight of its edges concentrated on edges in the path, the algorithm would consider it an isolated path and delete its center vertex.

Secondly, powering would need to assign weights to the edges it is adding based on the weights of the edges in the path. To the degree that our goal in powering is to account for indirect evidence between communities while avoiding doublecounting and feedback, we would probably want to set the weight of the edge powering puts between two vertices equal to the weight of the strongest path between them. Generally, the degree of evidence a path provides that the vertices on either end are in the same community should be proportional to the product of the weights of its edges. As such, we should consider a path as having a weight equal to the product of the weights of its edges times some function of the path's length. The obvious choice there would be a function that is exponential in the length of the path. If it grows too slowly with length, we would only assign significant weights to short paths, in which case powering the graph would have little effect on it, and the problems it was added to fix might prevent the algorithm from working. If the function grows too quickly with length, then the algorithm would essentially ignore short paths, which would be suboptimal, although it could provide meaningful reconstructions. It seems like finding $c$ such that the typical vertex in the graph has a couple of edges of weight $c$ or more and then setting a path's length equal to the product of its edges' weights divided by $c$ to the path length would generally be relevant. However, it is safer to use a smaller value of $c$ if it is unclear what an appropriate value is.

 Finally, the normalization phase would work as usual. 

\subsection{Conjectures}
We showed in this paper that raising a graph to a power proportional to its diameter and then taking the second eigenvector of its adjacency matrix solves weak recovery on the SBM. We also believe that it solves weak recovery on the GBM as stated by Conjecture \ref{conj_gbm} and heuristically justified.  

However, since the power used is inconveniently large, we would prefer to use the spectral meta-algorithm because it allows us to use a power that is polylogarithmic in the graph's diameter. We believe that the spectral meta-algorithm also succeeds at weak recovery on these models as stated below.

\begin{conjecture}
Choose $a$ and $b$ such that there is an efficient algorithm that solves weak recovery on $\sbm(n,a,b)$. Then the spectral meta-algorithm solves weak recovery on $\sbm(n,a,b)$.
\end{conjecture}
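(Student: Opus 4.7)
The overall strategy is to reduce the meta-algorithm back to Theorem~\ref{ks2}, which already gives weak recovery for the bare powered matrix $A^{(r)}$ with $r=\varepsilon\log n$. There are three discrepancies to close: (i) the cleaning step removes some vertices; (ii) the power is only $r=c\ln^3(\diam(G))=\Theta((\log\log n)^3)$ rather than $\varepsilon\log n$; (iii) we then pass to the random-walk normalization $D_{\widetilde G}^{-1}A_{\widetilde G}$. I would treat the three modifications in this order and glue them together with a perturbation argument on the second eigenvector.

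First I would analyze the cleaning step on $G\sim\sbm(n,a,b)$. With high probability the largest degree is $O(\log n/\log\log n)$, so choosing $\tau$ slightly above this threshold causes no vertices to be deleted by the degree cut for typical $G$ (and only $o(n)$ vertices are lost if $\tau$ is tighter). Standard branching-process computations show that the total number of vertices removed by iteratively peeling leaves and $\sqrt{r}$-segments is $o(n)$, and that every deleted vertex lies in a subtree carrying $o(\log n)$ information about its community. In particular, after cleaning, the remaining largest component still contains a $1-o(1)$ fraction of each community, and the spectral characterization of Theorem~\ref{ks2} applied to the restricted graph $\widetilde G$ still gives a second-eigenvector signal proportional to the indicator of the communities.

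Second I would argue that the smaller power $r=\Theta(\log^3\log n)$ remains above the relevant thresholds. The ``noise'' eigenvalues of $A^{(r)}$ split into two sources: bulk randomness and vertex-localized modes. The bulk is of order $((a+b)/2)^{r/2}\cdot\poly(\log n)$ by moment/trace bounds of the type used in Theorem~\ref{ks2}. The localized modes, after pruning vertices of degree $>\tau$ and leaves/segments, are controlled by the remaining local tree excess $\deg_{\widetilde G}(v)\le\tau$, giving a contribution of order $\sqrt{\tau}\cdot((a+b)/2)^{r/2}$. The community signal is of order $((a-b)/2)^{r}$, so under the KS condition $(a-b)^2>2(a+b)$ the ratio (signal)/(noise) is at least $((a-b)^2/(a+b))^{r/2}/\poly(\log n)$. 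Because $r=\omega(\log\log n)$, this ratio tends to infinity, which is exactly the regime required by the proof scheme of Theorem~\ref{ks2}. Then I would invoke the cleaning guarantees from Section~\ref{implementations}: the removal of $\sqrt r$-segments ensures that after taking the $r$-th power every remaining vertex sees $\omega(\log n)$ vertices within distance $r$, so degrees in $\widetilde G^{(r)}$ concentrate sharply. Consequently $D_{\widetilde G^{(r)}}^{-1}A_{\widetilde G^{(r)}}$ differs from a constant multiple of $A_{\widetilde G^{(r)}}$ by an operator of norm $o(1)$ relative to the spectral gap, and a Davis--Kahan perturbation bound transfers the correlation of the second eigenvector with the community indicator from $A_{\widetilde G^{(r)}}$ to the random-walk matrix $\psi(G)$. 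The sign-thresholding step of the meta-algorithm then produces a labeling with nontrivial agreement.

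The hardest step, I expect, is controlling the interaction between cleaning and powering at the smaller scale $r=\Theta(\log^3\log n)$. Specifically, one must rule out the possibility that cleaning creates new ``almost-isolated'' regions that were previously anchored by a deleted high-degree vertex, producing a localized eigenvector of $\psi(G)$ with eigenvalue rivalling the community eigenvalue $((a-b)/2)^r$. This is where the lower bound $r=\omega(\log^2(\diam(G)))$ in the choice of the cleaning thresholds becomes essential, but matching the combinatorics of isolated-region sizes against walk-count estimates for $A^{(r)}$ uniformly over all vertices of the cleaned graph requires careful bookkeeping, and would likely consume the bulk of the technical work.
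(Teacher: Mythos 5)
First, note that the paper does not prove this statement: it is stated as an open conjecture in the ``Conjectures'' subsection, with no proof or even a heuristic justification attached, so there is no argument of the authors' to compare yours against. Your plan correctly identifies the three discrepancies between the meta-algorithm and Theorem~\ref{ks2} (cleaning, the reduced power $r=\Theta(\ln^3\diam(G))$, and normalization), and the signal-to-noise accounting for the reduced power is plausible since $((a-b)^2/2(a+b))^{r/2}=e^{\Theta((\log\log n)^3)}$ does dominate any $\poly(\log n)$ factor. However, the plan as written is a sketch with at least one step that would fail as stated.

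The genuine gap is the normalization step. You claim that after cleaning, ``degrees in $\widetilde G^{(r)}$ concentrate sharply,'' so that $D_{\widetilde G^{(r)}}^{-1}A_{\widetilde G^{(r)}}$ is an $o(1)$-relative-norm perturbation of a constant multiple of $A_{\widetilde G^{(r)}}$, and Davis--Kahan transfers the eigenvector. This is false for the sparse SBM: the size of the $r$-neighborhood of a vertex $v$ is asymptotically $W_v\cdot((a+b)/2)^r$, where $W_v$ is the martingale limit of the associated Galton--Watson process, a genuinely random variable of constant order with non-vanishing variance (and this is not removed by deleting leaves, segments, or high-degree vertices --- it is driven by the bulk of the local tree, not by its pathologies). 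Hence the powered degrees fluctuate by constant multiplicative factors across vertices, $D^{-1}A^{(r)}$ is a constant-relative-size (not $o(1)$) perturbation of $cA^{(r)}$, and the Davis--Kahan argument does not close. One would instead have to analyze the degree-weighted operator directly, e.g.\ by redoing the weighted walk-count/quadratic-form estimates of the Theorem~\ref{ks2} proof for $D^{-1/2}A^{(r)}D^{-1/2}$. A second, related weakness is that you repeatedly ``apply Theorem~\ref{ks2} to the cleaned graph $\widetilde G$'': the cleaned graph is not distributed as an SBM, and the cleaning is a function of the realized graph, so the theorem cannot be invoked as a black box; the underlying self-avoiding-walk and neighborhood-growth estimates (Lemmas of \cite{massoulie-STOC}) would need to be re-established conditionally on the cleaning, and likewise re-established for $r=\Theta((\log\log n)^3)$ rather than $r=\e\log n$, for which the stated $\log(n)^{O(1)}$ noise bounds are not automatic. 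These are the places where the real technical work of turning the conjecture into a theorem would lie.
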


\begin{conjecture}
For any $s$ and $r$ such that $\gbm(n,s,t)$ has a giant component, there exists $\delta>1/2$ such that the following holds. The spectral meta-algorithm recovers communities with accuracy $\delta$ on $\gbm(n,s,t)$ and there is no algorithm that recovers communities with accuracy $\delta+\Omega(1)$ on $\gbm(n,s,t)$.
\end{conjecture}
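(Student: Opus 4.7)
The plan is to establish both halves of the conjecture simultaneously by pinning down, up to $o(1)$, a value $\delta=\delta(s,t)$ that the meta-algorithm attains and that no algorithm can exceed. For the information-theoretic upper bound, observe that $G\sim\gbm(n,s,t)$ is a deterministic function of the pair $(X,U)$, where $U=(U_v)_{v\in V}$ are the i.i.d.\ vertex locations, so no $G$-measurable estimator can outperform the Bayes-optimal estimator that observes $U$ directly. Given $U_v$, the posterior on $X_v$ is determined by the ratio of the two Gaussian densities centered at $(\pm s/2,0)$, which is monotone in $(U_v)_x$; hence the Bayes estimator rounds by $\mathrm{sign}((U_v)_x)$ and attains accuracy $\delta^{\star}(s)=\Phi(s/2)$, where $\Phi$ is the standard normal CDF. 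Take $\delta=\delta^{\star}(s)$: the upper half of the conjecture is then immediate, since any $G$-measurable estimator is a coarsening of one that sees $U$.

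For the algorithmic lower bound, let $\widetilde G$ be the graph output by the cleaning and powering phases and $M:=D_{\widetilde G}^{-1}A_{\widetilde G}$ its random walk matrix, and let $w$ denote the eigenvector of $M$ with second-largest eigenvalue. The goal is to show that $w_v$ is, up to an $\ell^2$-error of $o(\sqrt{n})$, a monotone function of $(U_v)_x$; sign-rounding then matches the geometric Bayes classifier on $1-o(1)$ of the giant component and attains accuracy $\delta^{\star}(s)-o(1)$. I would organize this in four steps. First (cleaning): show that the deletions remove only $o(n)$ vertices of the giant, using Poisson tail bounds on ball counts to control the number of degree-$>\tau$ vertices, and a subcritical branching-process calculation to bound the total size of $\sqrt{r}$-isolated segments. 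Second (powering concentration): for $r=c\ln^3(\mathrm{diam}(G))$, show that the powered degree $D_{vv}$ concentrates around a smooth function $F(U_v)$ of the location with fluctuations $o(F(U_v))$ uniformly over the giant, by coupling the graph ball $B_{\widetilde G}(v,r)$ to a Euclidean disk of radius proportional to $r/\sqrt{n}$ times the local density (this is where the random-geometric-graph distance concentration of \cite{yao2011large} enters). Third (continuous limit): identify $M$ as the discretization of a reversible Markov operator on $\mathbb{R}^2$ whose stationary density is the Gaussian mixture, and argue by symmetry and separation of variables that its second eigenfunction is strictly $x$-monotone. Fourth (spectral comparison): use a Davis--Kahan / resolvent perturbation estimate, driven by the uniform concentration from step two, to transfer the spectral gap and the shape of the second eigenfunction from the continuous limit back to $M$.

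The hard part is the fourth step, specifically ruling out spurious high-lying eigenvalues of $M$ produced by random tangles in $\widetilde G$. In the GBM, Poisson outliers of the local vertex density create rare dense cliques whose restricted adjacency matrix has a near-$1$ eigenvalue in $M$ that would otherwise crowd the target $x$-monotone eigenfunction out of the top of the spectrum; this is precisely what the degree-$\tau$ trimming and leaf/isolated-segment deletion are designed to prevent, but proving that a single choice $\tau=\mathrm{polylog}(n)$ simultaneously kills every such tangle while preserving the bulk geometry requires quantitative percolation estimates on the random geometric graph beyond the qualitative threshold of \cite{penrose}, and these are the estimates the authors themselves flag as outside the current scope. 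A secondary subtlety is that the $D^{-1}$ normalization can amplify low-density Gaussian tails, so $\tau$ and $r$ must be jointly calibrated so that, after cleaning, $D_{vv}$ admits uniform two-sided bounds in terms of $F(U_v)$; once these calibrations are in place, the comparison to the continuous spectral problem goes through and both halves of the conjecture follow with $\delta=\Phi(s/2)$.
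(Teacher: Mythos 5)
This statement is left as a conjecture in the paper --- no proof or even heuristic justification is given for it --- so there is nothing to compare your proposal against; what matters is whether your plan would actually close the conjecture, and it would not as written. The most concrete problem is your identification $\delta=\Phi(s/2)$. That quantity is the accuracy of the Bayes classifier that observes the \emph{locations} $U$, and by data processing it is indeed an upper bound for any $G$-measurable estimator; but the conjecture requires a single $\delta$ that the meta-algorithm \emph{attains} and that no algorithm exceeds by $\Omega(1)$, i.e.\ $\delta$ must be the optimal accuracy given $G$, not given $U$. These differ by a constant in general: the meta-algorithm's cleaning phase deletes every component except the largest one and (like any graph-based algorithm) can do no better than coin-flipping on isolated vertices and small components, which constitute a constant fraction of $[n]$ whenever $t$ is not far above the percolation threshold. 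Even inside the giant, vertices in low-density fringes need not have their $x$-sign determined by $G$ with probability $1-o(1)$. So the correct $\delta$ is some intrinsic quantity such as $\E\bigl[\max_i \pp(X_v=i\mid G)\bigr]$, and your argument establishes neither that the meta-algorithm reaches it nor that it equals $\Phi(s/2)$; the upper and lower halves of your plan do not meet at the same constant.

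Beyond that, the core of your achievability argument (steps two through four) consists precisely of the quantitative percolation and distance-concentration estimates for the inhomogeneous random geometric graph that the paper explicitly declines to carry out, and you acknowledge as much. Two further soft spots: the second eigenfunction of the limiting operator cannot be obtained by ``separation of variables,'' since the Gaussian-mixture kernel does not factor over the two coordinates (a symmetry/antisymmetry decomposition under $x\mapsto -x$ is the right tool, but monotonicity in $x$ still needs an argument); and a raw Davis--Kahan comparison between $M$ and its continuum discretization fails because the perturbation is not small in operator norm --- it carries all the local Poisson fluctuations --- so one needs a restricted-subspace or trace-method argument to rule out spurious top eigenvalues from residual tangles. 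As it stands your proposal is a reasonable research program, consistent in spirit with the heuristics the paper gives for its other GBM conjectures, but it is not a proof and it targets the wrong value of $\delta$.
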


\section{Proofs}

\subsection{Proof of Theorem \ref{ks1}}
Our proof of Theorem \ref{ks1} is based on the following theorem of Massouli\'e \cite{massoulie-STOC}, which is analogous to Theorems \ref{ks1} and \ref{ks2}, but for the matrix $A^{\{r\}}$ counting self-avoiding walks (i.e., paths) of length $r$:

\begin{theorem}[Spectral separation for the self-avoiding-walk matrix; proved in \cite{massoulie-STOC}]\label{ks3}
Let $(X,G)$ be drawn from $\sbm(n,a,b)$ with $(a+b)/2 > 1$.
Let $A^{\{r\}}$ be the length-$r$-self-avoiding-walk matrix of $G$ ($A^{\{r\}}_{ij}$ equals the number of self-avoiding walks of length $r$ between $i$ and $j$), and $r= \e \log(n)$ such that $\e > 0$, $\e \log (a+b)/2 <1/4$. 
Then, with high probability, for all $k \in \{r/2,\ldots,r\}$,

\begin{multicols}{2}
\begin{enumerate}
\item[A.] If $\left(\frac{a+b}{2}\right) < \left(\frac{a-b}{2}\right)^2$, then
\begin{enumerate}
\item[1.]  $\lambda_1(A^{\{k\}}) \asymp \left(\frac{a+b}{2} \right)^{k}$,
\item[2.] $\lambda_2(A^{\{k\}}) \asymp \left(\frac{a-b}{2} \right)^{k}$,
\item[3.] $|\lambda_3(A^{\{k\}})| \leq \left(\frac{a+b}{2} \right)^{k/2} \log(n)^{O(1)}$.
\end{enumerate}
\item[B.]If $\left(\frac{a+b}{2}\right) > \left(\frac{a-b}{2}\right)^2$, then
\begin{enumerate}
    \item[1.] $\lambda_1(A^{\{k\}}) \asymp \left(\frac{a+b}{2} \right)^{k}$,
    \item[2.] $|\lambda_2(A^{\{k\}})| \leq \left(\frac{a+b}{2} \right)^{k/2} \log(n)^{O(1)}$.
    \item[~]
\end{enumerate}
\end{enumerate}
\end{multicols}

Furthermore, for all $k \in \{r/2,\ldots,r\}$, $\phi_2(A^{\{k\}})$ with the rounding procedure of \cite{massoulie-STOC} achieves weak recovery whenever $\left(\frac{a+b}{2}\right) < \left(\frac{a-b}{2}\right)^2$, i.e., down to the KS threshold.
\end{theorem}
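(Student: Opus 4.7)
The plan is to follow the moment/trace method of Massoulié, decomposing $A^{\{k\}}$ into signal and noise and analysing them separately. Concretely, I would write $A^{\{k\}} = M + Z$, where $M := \mathbb{E}[A^{\{k\}} \mid X]$ carries the information about the community labels and $Z$ is mean-zero conditionally on $X$. First I would compute $M$ by enumerating self-avoiding walks on the SBM graph. Using the fact that the depth-$k$ neighborhood of a typical vertex is locally tree-like (because $k = \varepsilon \log n$ is well below the injectivity radius for small $\varepsilon$), walk counts concentrate around their branching-process expectations, giving
\begin{equation*}
M \;\approx\; \alpha_k \cdot \frac{\mathbf{1}\mathbf{1}^{T}}{n} \;+\; \beta_k \cdot \frac{\sigma\sigma^{T}}{n},
\end{equation*}
where $\sigma \in \{-1,+1\}^n$ encodes the community labels, $\alpha_k \asymp ((a+b)/2)^k$, and $\beta_k \asymp ((a-b)/2)^k$. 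Diagonalising this rank-two matrix yields the stated asymptotics for $\lambda_1(A^{\{k\}})$ and, in regime A, for $\lambda_2(A^{\{k\}})$, together with eigenvectors $\phi_1 \approx \mathbf{1}/\sqrt{n}$ and $\phi_2 \approx \sigma/\sqrt{n}$.

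Next I would control the noise $Z$ via the trace method on a high even power: for $\ell \asymp \log n / k$, use $\|Z\|^{\ell} \le \mathrm{tr}(Z^{\ell})$ and estimate $\mathbb{E}\,\mathrm{tr}(Z^{\ell})$ by a combinatorial sum over $\ell$-tuples of self-avoiding walks forming a closed cycle. Because $Z$ is centred, any edge visited an odd number of times in the combined multigraph produces an expectation factor of $0$, so only ``doubled'' structures contribute. A standard encoding of such multigraphs by spanning trees plus excess edges, together with the sparsity $d = O(1)$, bounds the contribution of each topology by $n \cdot ((a+b)/2)^{k\ell/2} \cdot \mathrm{polylog}(n)^{\ell}$. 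Markov's inequality and a union bound over $k \in \{r/2,\ldots,r\}$ then give $\|Z\| \le ((a+b)/2)^{k/2}\,\mathrm{polylog}(n)$ with high probability, which yields the ``noise floor'' bounds A.3 and B.2.

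The third step is a Weyl/Davis--Kahan perturbation argument: in regime A, the gap $\beta_k \gg \|Z\|$ forces the top two eigenvalues and eigenvectors of $A^{\{k\}}$ to be $o(1)$-close (after appropriate normalisation) to those of $M$; in regime B, $\beta_k$ sinks into the noise and only $\lambda_1$ survives as an outlier. In regime A this implies $\phi_2(A^{\{k\}})$ has $\Omega(1)$ correlation with $\sigma$, and the rounding procedure of \cite{massoulie-STOC} (sign rounding after an explicit centring) then produces a community estimate with non-trivial overlap, solving weak recovery whenever the KS condition $((a-b)/2)^2 > (a+b)/2$ holds.

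The main obstacle is the combinatorial estimate in the second step: self-avoidance is a non-local constraint along a walk, so the naive moment expansion couples together edges across the whole walk. The standard remedy, which I would import from Massoulié, is to first reduce to the tangle-free setting (w.h.p.\ in the sparse SBM the vertices lying on a short cycle form a negligible set), on which self-avoiding and nonbacktracking walks essentially coincide; one then performs the trace-method combinatorics for the nonbacktracking operator and separately bounds the discrepancy coming from tangled regions. Making this reduction uniform over all $k$ in the window $\{r/2,\ldots,r\}$, and tracking the exact polylogarithmic factors in the noise bound, is the technically heaviest part of the argument.
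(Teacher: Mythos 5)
First, a point of comparison: the paper does not actually reprove this theorem. It is imported wholesale from \cite{massoulie-STOC}, and the paper's only original content is the subsequent remark explaining the adjustments needed to reach the stated form (case B follows by the same, simpler, argument; the $n^{\varepsilon}$ slack on $\lambda_3$ can be sharpened to $(\log n)^{O(1)}$ along the lines of \cite{bordenave}; and uniformity over $k \in \{r/2,\ldots,r\}$ follows from a union bound, since each individual bound holds with probability $\geq 1 - C(\log n)^{-2}$). Your proposal instead sketches a from-scratch proof, and the sketch contains a genuine gap at its central step. The decomposition $A^{\{k\}} = M + Z$ with $M = \mathbb{E}[A^{\{k\}}\mid X]$ does not place $Z$ at the noise floor. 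In the sparse regime $(A^{\{k\}}\mathbf{1})_v$ is essentially the size of the distance-$k$ sphere around $v$, which is $\alpha^k W_v(1+o(1))$ where $W_v$ is the (non-degenerate) martingale limit of the local Galton--Watson tree; the $W_v$ fluctuate by $\Theta(1)$ factors across vertices. Hence $\|Z\mathbf{1}\|_2/\|\mathbf{1}\|_2 = \Theta(\alpha^k)$, so $\|Z\|_2 = \Omega(\alpha^k)$ and not $O(\alpha^{k/2}\,\mathrm{polylog}\,n)$. With that $Z$, Weyl and Davis--Kahan cannot resolve the second eigenvalue $\beta^k \ll \alpha^k$, so A.2, A.3 and the eigenvector alignment do not follow. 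Relatedly, your claim that odd-multiplicity edges annihilate terms of $\mathbb{E}\,\mathrm{tr}(Z^{\ell})$ is not valid here: $Z_{ij}$ is a sum of centered \emph{products} over walks, not a polynomial in independent centered edge variables, so cross-terms in which some edge appears an odd number of times survive --- and these are exactly the terms producing the $\Theta(\alpha^k)$ contribution above. This is precisely why Massouli\'e's weak Ramanujan property (the form the paper actually invokes in Lemma \ref{lem:weakramanujan}) is stated for vectors orthogonal to the \emph{empirical} vectors $A^{\{r\}}\mathbf{1}$ and $A^{\{r\}}X$ rather than to $\mathbf{1}$ and $X$, and why the top two eigenvectors are shown to align with $A^{\{k\}}\mathbf{1}$ and $A^{\{k\}}X$, which have $\Omega(1)$ correlation with, but are not $\ell_2$-close to, $\mathbf{1}$ and $X$.

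A secondary, quantitative issue: with $k = \Theta(\log n)$ your trace power $\ell \asymp \log n / k$ is a constant, so the trace method leaves a multiplicative slack of $n^{\Theta(1/\ell)} = n^{\Theta(\varepsilon)}$, which reproduces Massouli\'e's original $n^{\varepsilon}$ bound but not the $(\log n)^{O(1)}$ bound claimed in A.3 and B.2; for the latter one needs $\ell = \Omega(\log n/\log\log n)$ (or the refined nonbacktracking analysis of \cite{bordenave}), as the paper's remark acknowledges. Your tangle-free reduction and the union bound over $k$ are the right ingredients, but the proof plan needs to be reorganized around the empirical vectors $A^{\{k\}}\mathbf{1}, A^{\{k\}}X$ (showing $A^{\{k\}}A^{\{k\}}\mathbf{1} \approx \Theta(\alpha^k)A^{\{k\}}\mathbf{1}$ and $A^{\{k\}}A^{\{k\}}X \approx \Theta(\beta^k)A^{\{k\}}X$, plus the weak Ramanujan bound on their orthogonal complement) rather than around $\mathbb{E}[A^{\{k\}}\mid X]$.
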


\paragraph{Remark on Theorem \ref{ks3}} Theorem \ref{ks3} does not appear in the above form in \cite{massoulie-STOC}. In particular, case $B$ of Theorem \ref{ks3} is not addressed by \cite{massoulie-STOC}, but it can be proved with the same techniques as case $A$ (in fact, it is simpler). The polylogarithmic factors in the bounds of \cite{massoulie-STOC} on $\lambda_1$ and $\lambda_2$ can removed with a more careful analysis, along the lines of the later work \cite{bordenave} on the spectrum of the non-backtracking operator. Similarly, the $n^{\epsilon}$ factor in the bound of \cite{massoulie-STOC} on $\lambda_3$ can be seen to be in fact a $(\log n)^{O(1)}$ factor. Finally, the bounds in \cite{massoulie-STOC} are stated for $A^{\{r\}}$ alone, not for $A^{\{k\}}$ for all $k \in \{r/2,\ldots,r\}$. However, the proof of \cite{massoulie-STOC} shows that there is a constant $C > 0$ such that for each individual $k \in \{r/2,\ldots,r\}$ the bounds on the top eigenvalues of $A^{\{k\}}$ hold with probability $\geq 1 - C(\log n)^{-2}$. A union bound over $k \in \{r/2,\ldots,r\}$ then gives Theorem \ref{ks3} as stated. (This union bound is needed for the proof of Theorem \ref{ks2} in the next section, but is not needed for the proof of Theorem \ref{ks1}.)

We will only prove case $A$ of Theorem \ref{ks1}, since the argument for case $B$ is similar and simpler. Writing $\alpha := (a+b)/2$ and $\beta := (a-b)/2$, the proof can be broken down into two steps:
\begin{enumerate}
\item We show that the distance-$r$ matrix is a small perturbation of the length-$r$-self-avoiding-walk matrix. In particular, we show that with high probability, the difference $B = B(r) := A^{[r]} - A^{\{r\}}$ has small spectral norm $\|B\|_2 = O(\alpha^{r/2} \log^3 n)$.
\item We use matrix perturbation theory to prove that the top eigenvalues and eigenvectors of $A^{[r]}$ behave like the top eigenvalues and eigenvectors of $A^{\{r\}}$. On the event that the bounds in Theorem \ref{ks3} and Step 1 hold, \begin{enumerate}
\item $\|B\|_2 = o(\lambda_2(A^{\{r\}})),$ so Weyl's inequality \cite{weyl1912asymptotische} gives $$\lambda_1(A^{[r]}) \asymp \lambda_1(A^{\{r\}}) \asymp \alpha^r,$$ $$\lambda_2(A^{[r]}) \asymp \lambda_2(A^{\{r\}}) \asymp \beta^r,$$ $$|\lambda_3(A^{[r]})| \leq |\lambda_3(A^{\{r\}})| + \|B\|_2 \leq \alpha^{r/2} (\log n)^{O(1)}$$.
\item $\|B\|_2 = o(\max\{\lambda_1(A^{\{r\}}) - \lambda_2(A^{\{r\}}), \lambda_2(A^{\{r\}}) - \lambda_3(A^{\{r\}})\}),$ so by the Davis-Kahan Theorem \cite{daviskahan}, $\phi_1(A^{[r]})$ and $\phi_2(A^{[r]})$ are asymptotically aligned with $\phi_1(A^{\{r\}})$ and $\phi_2(A^{\{r\}}),$ respectively, which is enough for the rounding procedure of \cite{massoulie-STOC} to achieve weak recovery.
As a reminder, the Davis-Kahan theorem states:
\begin{theorem}[Davis-Kahan Theorem]\label{thm:daviskahan}
Suppose that $\bar{H} = \sum_{j=1}^n \bar{\mu}_j\bar{u}_j\bar{u}_j^T$ and $H = \bar{H} + E$, where $\bar{\mu}_1 \geq \dots \geq \bar{\mu}_n$, $\|\bar{u}_j\|_2 = 1$ and $E$ is symmetric. Let $u_j$ be a unit eigenvector of $H$ corresponding to its $j$-th largest eigenvalue, and $\Delta_j = \min \{\bar{\mu}_{j-1} - \bar{\mu}_j, \bar{\mu}_j - \bar{\mu}_{j+1}\}$, where we define $\bar{\mu}_0 = + \infty$ and $\bar{\mu}_{n+1} = -\infty$. We have \begin{equation}\min_{s = \pm 1} \|su_j - \bar{u}_j\|_2 \lesssim \frac{\|E\|_2}{\Delta_j},\end{equation} where $\lesssim$ only hides an absolute constant.
\end{theorem}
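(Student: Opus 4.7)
The plan is to use the eigenbasis $\{\bar{u}_i\}$ of $\bar{H}$ to expand $u_j$, exploit the two eigenvalue equations $Hu_j = \mu_j u_j$ and $\bar{H}\bar{u}_i = \bar{\mu}_i \bar{u}_i$, and convert the spectral gap $\Delta_j$ into a quantitative bound on the Fourier coefficients of $u_j$ away from the index $j$. Throughout, let $\mu_j$ denote the $j$-th largest eigenvalue of $H$ and write $u_j = \sum_{i=1}^n c_i \bar{u}_i$ with $c_i = \langle u_j, \bar{u}_i\rangle$, so that $\sum_i c_i^2 = 1$.

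The first step is to derive the key identity relating the coefficients $c_i$ to $E$. From $Eu_j = (H-\bar{H})u_j = \mu_j \sum_i c_i \bar{u}_i - \sum_i \bar{\mu}_i c_i \bar{u}_i$, taking the inner product against $\bar{u}_i$ yields
\begin{equation}
c_i (\bar{\mu}_i - \mu_j) = -\langle E u_j, \bar{u}_i \rangle \qquad \text{for each } i.
\end{equation}
Squaring and summing over $i \neq j$, and using that $\{\bar{u}_i\}$ is orthonormal with $\|u_j\|_2 = 1$, gives the Parseval-type bound
\begin{equation}
\sum_{i\neq j} c_i^2 (\bar{\mu}_i - \mu_j)^2 \;\leq\; \|E u_j\|_2^2 \;\leq\; \|E\|_2^2 .
\end{equation}

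The second step is to turn $(\bar{\mu}_i - \mu_j)^2$ into the spectral gap $\Delta_j^2$. Since $\bar{\mu}_1 \geq \cdots \geq \bar{\mu}_n$, the definition of $\Delta_j$ gives $|\bar{\mu}_i - \bar{\mu}_j| \geq \Delta_j$ for every $i \neq j$. Weyl's inequality applied to $H = \bar{H} + E$ gives $|\mu_j - \bar{\mu}_j| \leq \|E\|_2$, whence
\begin{equation}
|\bar{\mu}_i - \mu_j| \;\geq\; \Delta_j - \|E\|_2 \qquad \text{for every } i \neq j.
\end{equation}
I would then split into cases. In the trivial regime $\|E\|_2 > \Delta_j/2$, any two unit vectors satisfy $\min_{s=\pm 1}\|su_j - \bar{u}_j\|_2 \leq 2 \leq 4 \|E\|_2/\Delta_j$, and we are done. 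In the regime $\|E\|_2 \leq \Delta_j/2$, the previous two displays combine to give
\begin{equation}
1 - c_j^2 \;=\; \sum_{i\neq j} c_i^2 \;\leq\; \frac{\|E\|_2^2}{(\Delta_j - \|E\|_2)^2} \;\leq\; \frac{4\|E\|_2^2}{\Delta_j^2}.
\end{equation}

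The final step is to convert this coefficient bound into an eigenvector bound by choosing the sign. Picking $s = \mathrm{sgn}(c_j) \in \{\pm 1\}$, we have $\|s u_j - \bar{u}_j\|_2^2 = 2(1 - s c_j) = 2(1 - |c_j|) \leq 2(1 - c_j^2) \leq 8\|E\|_2^2/\Delta_j^2$, which gives $\min_{s=\pm 1}\|s u_j - \bar{u}_j\|_2 \leq 2\sqrt{2}\,\|E\|_2/\Delta_j$, matching the stated $\lesssim$ bound with an absolute constant. The main conceptual step is the eigenvalue-equation identity in the first display; everything else is a routine unrolling via Parseval, Weyl, and the bound $1 - |c_j| \leq 1 - c_j^2$. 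There is no real obstacle beyond the cosmetic case split on the size of $\|E\|_2$ relative to $\Delta_j$, needed to keep the denominator $\Delta_j - \|E\|_2$ bounded away from zero.
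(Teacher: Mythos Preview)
Your proof is correct and is the standard modern route to this version of Davis--Kahan: expand $u_j$ in the eigenbasis of $\bar H$, use the identity $(\bar\mu_i-\mu_j)c_i=-\langle Eu_j,\bar u_i\rangle$, control $|\bar\mu_i-\mu_j|$ from below via Weyl and the gap $\Delta_j$, and finish with $1-|c_j|\le 1-c_j^2$ and a case split on $\|E\|_2$ versus $\Delta_j/2$. Each step checks out.

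There is nothing to compare against, however: the paper does not prove this theorem. It is quoted as a classical result with a citation to Davis and Kahan (1970) and invoked as a black box inside the proof of Theorem~\ref{ks1}. So your write-up is not an alternative to the paper's argument but rather a self-contained justification of a tool the paper takes for granted.
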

\end{enumerate}
as desired.
\end{enumerate}

It remains to work out the details for the first step: to prove that $\|B\|_2 = O(\alpha^{r/2} \log^3 n)$ with high probability.

To understand the intuition behind our argument, it helps to imagine what would happen if the underlying graph were a tree instead of an SBM. In the tree case, there would be exactly one self-avoiding walk between every pair of vertices, and the length of this walk would be equal to the distance between the two vertices. In other words, in the tree case, the matrices $A^{\{r\}}$ and $A^{[r]}$ would be equal.  

While the SBM is (with high probability) not a tree, it is with high probability locally tree-like. This means that for small $r$, most vertices don't have cycles in their $r$-neighborhoods. Therefore, most vertices' $r$-neighborhoods are trees, and hence $A^{\{r\}} \approx A^{[r]}$.

The observation that the SBM is locally tree-like can be formalized:

\begin{lemma}[Lemma 4.2 of \cite{massoulie-STOC}]
Let $E_1$ be the event that no vertex has more than one cycle in its $r$-neighborhood. For $r = \e \log n$ and $\e \log \alpha < 1/4$,  $E_1$ occurs with high probability.
\end{lemma}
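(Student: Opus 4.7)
The plan is a first-moment union bound. Let $B_r(v) = \{u : d_G(v,u) \le r\}$ denote the $r$-neighborhood of $v$. Having ``more than one cycle'' in $B_r(v)$ means that $B_r(v)$ has at least $|B_r(v)|+1$ edges, which is equivalent to the BFS exploration from $v$ encountering at least two \emph{excess edges} (edges from the frontier to already-discovered vertices). So, writing $X_v$ for the number of excess edges encountered during the BFS of $B_r(v)$, we want to bound $\pp(E_1^c) \le \sum_v \pp(X_v \ge 2)$ and show this tends to $0$.

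I would first control the typical size of $r$-neighborhoods. In $\sbm(n,a,b)$ the BFS from any fixed vertex is stochastically dominated by a Galton-Watson branching process with $\mathrm{Bin}(n, \max(a,b)/n)$ offspring, whose expectation is $(1+o(1))\alpha$ where $\alpha := (a+b)/2$. Standard branching-process tail estimates (for example an exponential martingale bound, or Azuma on the level sizes) give that with probability $1 - o(n^{-1})$, $|B_r(v)| \le \alpha^r (\log n)^{C}$ for some constant $C$. A union bound over $v$ then yields $|B_r(v)| \le \alpha^r (\log n)^C$ simultaneously for all $v$ with probability $1-o(1)$; call this event $\mathcal{G}$.

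Conditionally on the first $r$ BFS levels, each potential cycle-closing edge is present independently with probability at most $\max(a,b)/n = O(1/n)$, and there are at most $\binom{|B_r(v)|}{2}$ such potential edges (pairs of vertices in $B_r(v)$). Hence $X_v$ is stochastically dominated by a $\mathrm{Bin}\bigl(\binom{|B_r(v)|}{2}, C'/n\bigr)$ random variable, and
\[
\pp(X_v \ge 2 \mid \mathcal{G}) \;\le\; \binom{|B_r(v)|}{2}^2 \left(\tfrac{C'}{n}\right)^{2} \;\le\; C'' \cdot \frac{\alpha^{4r} (\log n)^{4C}}{n^2}.
\]
Summing over the $n$ vertices gives
\[
\pp(E_1^c) \;\le\; \pp(\mathcal{G}^c) + C'' \cdot \frac{\alpha^{4r} (\log n)^{4C}}{n}.
\]
Since $r = \e \log n$ with $\e \log \alpha < 1/4$, we have $\alpha^{4r} = n^{4\e \log \alpha} = n^{1-\delta}$ for some $\delta > 0$, so the right-hand side is $o(1)$ and $E_1$ holds with high probability.

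The main technical obstacle is the uniform size bound $|B_r(v)| \le \alpha^r (\log n)^C$ for all $v$: the BFS processes from different roots are not independent, and one needs a tail estimate strong enough to survive the union bound. The cleanest route is to couple the BFS from each $v$ (in the subcritical-excess edge sense, on the first $r$ levels) with a Galton-Watson process having $\mathrm{Poisson}(\alpha(1+o(1)))$ offspring, and then use that a supercritical Galton-Watson process with bounded offspring moments has generation sizes concentrated within $(\log n)^C$ of their means with probability $1-o(n^{-1})$. Once the size bound is in place, the remainder of the argument is a direct calculation as above.
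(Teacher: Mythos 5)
The paper does not actually prove this lemma --- it imports it verbatim as Lemma 4.2 of \cite{massoulie-STOC} --- so there is no internal proof to compare against; your argument is the standard exploration/first-moment proof that underlies the cited result, and its structure is sound. Two independent cycles in $B_r(v)$ force at least two excess edges in the breadth-first exploration of the ball, the number of excess edges is dominated by a binomial on at most $\binom{|B_r(v)|}{2}$ pairs with success probability $O(1/n)$, hence $\pp(X_v \ge 2) \lesssim |B_r(v)|^4/n^2$, and the union bound over $n$ vertices closes precisely when $\alpha^{4r} = n^{4\e\log\alpha} = o(n)$, i.e.\ under the stated hypothesis $\e\log\alpha < 1/4$. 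One slip to fix: the single-type dominating process you first invoke, with offspring law $\mathrm{Bin}(n,\max(a,b)/n)$, has mean $\max(a,b)$, not $(1+o(1))\alpha$; in the disassortative-free case $b=0$ this is $2\alpha$, and using it would force the stronger hypothesis $\e\log\max(a,b)<1/4$. You genuinely need the mean-$\alpha$ growth rate, which comes from the two-type branching process (the Perron eigenvalue of $\tfrac{1}{2}Q$ is $\alpha$) or, as you correctly note at the end, from the collapsed offspring law $\mathrm{Bin}(n/2,a/n)+\mathrm{Bin}(n/2,b/n)\approx\mathrm{Poisson}(\alpha)$, valid because in the symmetric model the total offspring distribution is the same for both types. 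With that correction, and with the uniform neighborhood-growth bound you identify as the main technical point (which is exactly Theorem 2.3 of \cite{massoulie-STOC}, quoted in this paper as Lemma \ref{lem:neighborhoodgrowthbound}), the argument is complete.
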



Conditioning on $E_1$, we can define the equivalence relation $\sim$ so that $v \sim w$ if and only if there is a cycle in the intersection of the $r$-neighborhoods of $v$ and $w$. This is a well-defined equivalence relation because every vertex has at most one cycle in its $r$-neighborhood. The relation $\sim$ is useful because of item (i) of the following proposition (proof postponed) connecting $\sim$ to the structure of $B$:
\begin{proposition}\label{claim:Misblockdiagonalandbounded}
Condition on $E_1$. Then for all $i,j \in V(G)$:
\begin{enumerate}
    \item[(i)] $B_{i,j} \neq 0 \implies i \sim j$.
    \item[(ii)] $B_{i,j} \neq 0 \implies$ there are at least two length-$(\leq r)$ paths from $i$ to $j$.
    \item[(iii)] $|B_{i,j}| \leq 1$.
    \item[(iv)] There are at most two length-$(\leq r)$ paths from $i$ to $j$.
\end{enumerate}
\end{proposition}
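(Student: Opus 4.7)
The four assertions are coupled, and I would deduce them in the order (iv) $\Rightarrow$ (ii) $\Rightarrow$ (iii) $\Rightarrow$ (i), all resting on a single structural fact about paths.

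The linchpin is (iv). I would prove it from the elementary algebraic-graph lemma that if $P_1,\ldots,P_k$ are distinct paths joining the same two vertices, then the first Betti number of $H := P_1 \cup \cdots \cup P_k$ is at least $k-1$: the closed walks $P_1 \oplus P_m$ for $m=2,\ldots,k$ are linearly independent in the cycle space of $H$ since each contains edges of $P_m$ not present in $P_1$. Now specialize to paths of length $\le r$ from $i$ to $j$: every vertex on such a path is within $r$ edges of $i$, so $H$ is contained in the $r$-neighborhood $N_r(i)$. Under $E_1$, the induced subgraph on $N_r(i)$ has cyclomatic number at most one, forcing $k \le 2$.

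Next I would establish (ii) by a short case split on $d := d_G(i,j)$. If $d > r$, or $i = j$, then both $A^{[r]}_{ij}$ and $A^{\{r\}}_{ij}$ vanish (no self-avoiding walk of positive length returns to its start), so $B_{ij} = 0$. If $d < r$ then $A^{[r]}_{ij} = 0$, so $B_{ij} \ne 0$ forces $A^{\{r\}}_{ij} \ge 1$, and any length-$r$ self-avoiding walk together with a length-$d$ shortest path gives two distinct length-$(\le r)$ paths. If $d = r$ then $A^{[r]}_{ij} = 1$ and a shortest path contributes $A^{\{r\}}_{ij} \ge 1$, so $B_{ij} \ne 0$ forces $A^{\{r\}}_{ij} \ge 2$, again producing two distinct length-$(\le r)$ paths. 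Claim (iii) then drops out of the same case split combined with (iv): for $d < r$, the bound of two paths together with the existence of a length-$d$ shortest path gives $A^{\{r\}}_{ij} \le 1$ and hence $B_{ij} \in \{-1,0\}$; for $d = r$, (iv) yields $A^{\{r\}}_{ij} \in \{1,2\}$ and hence $B_{ij} \in \{-1,0\}$; all other cases give $B_{ij}=0$.

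Finally, (i). If $B_{ij} \ne 0$ then (ii) supplies distinct paths $P_1, P_2$ of length $\le r$ from $i$ to $j$; their edge-symmetric difference is a nonempty even subgraph of $P_1 \cup P_2$ and therefore contains a cycle $C$. Every vertex of $C$ lies on $P_1 \cup P_2$, hence at graph-distance $\le r$ from both $i$ and $j$, so $C$ is a cycle in $N_r(i) \cap N_r(j)$, which is precisely the condition $i \sim j$. The only non-routine step is the cyclomatic estimate used in (iv); everything else is bookkeeping. I would present that estimate as a one-line invocation of ear decomposition, or by directly verifying the $\mathbb{F}_2$-independence of $\{P_1 \oplus P_m\}_{m=2}^k$ in the cycle space.
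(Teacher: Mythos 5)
Your argument is essentially the paper's: deduce (iv) from $E_1$, run a case analysis on $d_G(i,j)$ versus the entries of $A^{[r]}$ and $A^{\{r\}}$ to get (ii) and (iii), and extract a cycle from the symmetric difference of the two paths, lying in $N_r(i)\cap N_r(j)$, to get (i). That all checks out; the paper organizes the case split by the values $(A^{[r]}_{ij},A^{\{r\}}_{ij})$ rather than by $d_G(i,j)$, which is cosmetic. One caveat on your linchpin: the general lemma ``$k$ distinct $i$--$j$ paths force first Betti number at least $k-1$'' is false, and the justification offered (each $P_1\oplus P_m$ has an edge of $P_m$ outside $P_1$) does not establish linear independence over $\mathbb{F}_2$ --- in $K_4$ there are $5$ paths between two vertices but the Betti number is only $3$, and the corresponding vectors $P_1\oplus P_m$ are dependent. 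Fortunately you only need the instance you actually use: three distinct paths yield two \emph{distinct nonzero} elements $P_1\oplus P_2$ and $P_1\oplus P_3$ of the cycle space, which are automatically independent over $\mathbb{F}_2$, giving Betti number at least $2$ and hence $k\le 2$ under $E_1$. State it in that special form (or as the paper does, as an immediate consequence of ``at most one cycle in the $r$-neighborhood'') and the proof is complete.
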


We condition on $E_1$ in the rest of the proof, since it holds with high probability. By item (i) of Proposition \ref{claim:Misblockdiagonalandbounded}, $B$ is a block-diagonal matrix, where each block $B_{S \times S}$ corresponds to an equivalence class $S \subseteq [n]$ of $\sim$. Therefore, it suffices to separately bound the spectral norm of each block $B_{S \times S}$. To do this, we introduce the following event:

\begin{lemma}[Theorem 2.3 of \cite{massoulie-STOC}]\label{lem:neighborhoodgrowthbound} Let $E_2 = E_2(C)$ be the event that for all vertices $i \in V(G)$, for all $t \in [r]$, the following holds:
\begin{equation} \label{ineq:neighborhoodgrowthbound} |\{j : d_{G}(i,j) \leq t\}| \leq C (\log n)^2 \alpha^t.\end{equation} There is large enough $C$ that $E_2(C)$ holds with high probability.
\end{lemma}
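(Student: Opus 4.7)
The plan is to reduce the SBM neighborhood exploration to a Galton-Watson branching process, establish sharp tail bounds for the branching process, then take a union bound over $i$ and $t$.

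First, I would couple the BFS exploration of the $t$-neighborhood of a vertex $i$ in $\sbm(n,a,b)$ with a branching process. Reveal edges in BFS order: at each step, from the current frontier we reveal the edges to the unexplored vertices, lazily revealing community labels together with edges. Conditional on the exploration so far, each unexplored vertex has community label that is marginally uniform on $\{1,2\}$, so the expected number of edges from a given frontier vertex to unexplored vertices is at most $\alpha(1+o(1))$ as long as the explored region has size $o(n)$. Since the number of new children per frontier vertex is Binomial with parameters $(\leq n, O(1/n))$, it is stochastically dominated by a $\mathrm{Poisson}(\alpha + o(1))$ random variable. This lets us couple the level-$t$ size $N_t(i)$ of the exploration with the size $Z_t$ of a Galton-Watson process with $\mathrm{Poisson}(\alpha + o(1))$ offspring. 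The coupling is valid as long as $\sum_{s\leq t} N_s(i) = o(n)$, which we will verify self-consistently at the end.

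Next, I would establish an exponential tail bound for $Z_t$. Using the MGF recursion $\phi_t(s) := \E[e^{s Z_t}] = \phi_{t-1}(\alpha(e^s-1))$, and iterating from $s_0 = 1/(C\alpha^t)$ for a suitable constant $C$, one can show $\phi_t(s_0) = O(1)$, using that the map $s \mapsto \alpha(e^s-1)$ stays in the near-linear regime whenever $\alpha^t s \ll 1$, which is exactly what our choice of $s_0$ ensures over $t$ iterations. Markov's inequality applied to $e^{Z_t/(C\alpha^t)}$ then yields
\begin{equation}
\pp[Z_t \geq \lambda \alpha^t] \leq O(1) \cdot e^{-\lambda/C}.
\end{equation}
Setting $\lambda = C'(\log n)^2$ for a sufficiently large constant $C'$ makes this bound of order $n^{-\omega(1)}$.

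A union bound over all $n$ vertices $i$ and all $t \in [r]$ then shows that $N_t(i) \leq C'(\log n)^2 \alpha^t$ holds simultaneously for all such $i$ and $t$ with high probability. Summing $\sum_{s \leq t} N_s(i)$ absorbs a factor $t = O(\log n)$ into the $(\log n)^2$ slack, yielding the desired bound $|\{j : d_G(i,j) \leq t\}| \leq C(\log n)^2 \alpha^t$. The range assumption $\epsilon \log \alpha < 1/4$ guarantees $\alpha^t \leq n^{1/4}$, so $\sum_s N_s(i) \leq n^{1/4}(\log n)^{O(1)} = o(n)$, self-consistently validating the coupling. The main technical obstacle is the stochastic domination in the first step: a naive upper bound using a $\mathrm{Binomial}(n, \max(a,b)/n)$ offspring would give the weaker base $\max(a,b)^t$ rather than $\alpha^t$. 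The key point is that, since unrevealed community labels remain uniformly distributed conditional on the BFS exploration, the community-averaged effective branching rate is exactly $\alpha$, and the lazy exploration argument preserves this while maintaining clean stochastic domination.
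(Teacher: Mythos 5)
The paper does not actually prove this lemma: it is imported verbatim as Theorem~2.3 of \cite{massoulie-STOC}, so there is no in-paper argument to compare against. Your sketch is a correct, self-contained derivation along the standard lines that underlie the cited result: stochastic domination of the BFS exploration by a Galton--Watson tree with $\mathrm{Poisson}(\alpha+o(1))$ offspring (the key observation that unrevealed labels stay essentially uniform, so the community-averaged branching rate is $\alpha=(a+b)/2$ rather than $\max(a,b)$, is exactly the right one), an MGF/Chernoff bound at scale $s_0\asymp \alpha^{-t}$ for the generation sizes, and a union bound over the $O(n\log n)$ pairs $(i,t)$ with a self-consistency (stopping-time) argument validating the coupling while the explored set stays of size $O(n^{1/4+o(1)})=o(n)$.

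Two small points of hygiene. First, your final step is justified slightly incorrectly as written: if each sphere already costs $C'(\log n)^2\alpha^s$, then ``absorbing a factor $t$'' would give $(\log n)^3$; what actually saves you is that $\alpha>1$ is a constant, so $\sum_{s\le t}\alpha^s\le \frac{\alpha}{\alpha-1}\alpha^t=O(\alpha^t)$ (alternatively, take $\lambda=C'\log n$ per level---still superpolynomially sufficient for the union bound---and then genuinely pay the factor $t$ to land at $(\log n)^2$). Second, for this \emph{upper} bound the $o(n)$ condition is not needed for the binomial domination itself (Bin$(n-|\text{explored}|,\cdot)\preceq$ Bin$(n,\cdot)$ always); it is needed only to control the $O(|\text{explored}|/n)$ bias that revealed non-edges induce on the labels of unexplored vertices, which keeps the per-vertex edge probability at $(1+O(n^{-3/4}))\alpha/n$ and hence keeps $(\alpha+o(1))^t=(1+o(1))\alpha^t$ over $t=O(\log n)$ generations. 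Neither point affects the validity of the approach.
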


Informally, $E_2$ is the event that for all $t \in [r]$, each vertex's $t$-neighborhood is not much larger than $\alpha^t$.

From now on, also condition on $E_2$, since it holds with high probability. Suppose $S$ is the set of vertices in some equivalence class of $\sim$. Let $H \subseteq G$ be the cycle that is shared by the $r$-neighborhoods of the vertices in $S$. Let $e$ be an edge of $H$. Then for every $i,j \in S$ such that there are two length-$(\leq r)$ self-avoiding walks from $i$ to $j$, at least one of the paths must contain $e$. Otherwise, the cycle $H$ is not the only cycle in the $r$-neighborhood of $i$. So by item (ii) of Proposition \ref{claim:Misblockdiagonalandbounded}, $$|\{\{i,j\} \in S \mid B_{i,j} \neq 0\}| \leq |\{(\leq r)\mbox{-length paths } P \subseteq G \mid e \in E(P)\}|.$$
For any $t \in [r]$, $$|\{t\mbox{-length paths } P \subseteq G \mid e = (u,v) \in E(P)\}| \leq $$ $$\sum_{l=0}^r |\{l\mbox{-length paths } P \subseteq G \mid u \in V(P)\}| \cdot |\{(t-l-1)\mbox{-length paths } P \subseteq G \mid v \in V(P)\}|.$$

By item (iv) of Proposition \ref{claim:Misblockdiagonalandbounded} and by $E_2$, for any $u \in V(G)$, $l \in [r]$, $$|\{l\mbox{-length paths } P \subseteq G \mid u \in V(P)\}| \leq 2C(\log n)^2 \alpha^l,$$ so $$|\{\{i,j\} \in S \mid B_{i,j} \neq 0\}| \leq r (2C (\log n)^2)^2 \alpha^{r-1} = O(\alpha^r \log^5 n).$$ Therefore, by item (iii) of Proposition \ref{claim:Misblockdiagonalandbounded} $$\|B_{S \times S}\|_2 \leq \|B_{S \times S}\|_F = O(\alpha^{r/2} \log^{5/2} n),$$ as desired. ($\|\cdot \|_F$ denotes the Frobenius norm.) \qed

\begin{proof}[Proof of Proposition \ref{claim:Misblockdiagonalandbounded}]

Suppose $B_{i,j} \neq 0$. Since every vertex has at most one cycle in its $r$-neighborhood, there are at most 2 length-$(\leq r)$ paths between every pair of vertices (item (iv)), so $0 \leq A_{i,j}^{\{r\}}$. Also, since $A^{[r]}_{i,j} \in \{0,1\}$, the possible cases are:
\begin{enumerate}
        \item $A_{i,j}^{[r]} = 0$:
        \begin{enumerate}
            \item $A_{i,j}^{\{r\}} = 1$. There is a path of length $< r$ between $i$ and $j$, because otherwise $d_{G}(i,j) = r$. So there are two paths of length $\leq r$ between $i$ and $j$.
            \item $A_{i,j}^{\{r\}} = 2$. This case is impossible. There is no path of length $< r$ between $i$ and $j$, because there are at most two paths  of length $\leq r$ between $i$ and $j$, and $A_{i,j}^{\{r\}} = 2$ tells us that there are two paths of length $r$ between $i$ and $j$. Therefore, $d_{G}(i,j) = r$, so $A_{i,j}^{[r]} = 1$.
        \end{enumerate}
        
        \item $A_{i,j}^{[r]} = 1$:
        \begin{enumerate}
            \item $A_{i,j}^{\{r\}} = 0$. This case is impossible. The distance between $i$ and $j$ is $r$, so there should be a path of length $r$ between them.
            \item $A_{i,j}^{\{r\}} = 2$. There are two paths of length $r$ between $i$ and $j$.
        \end{enumerate}
\end{enumerate}
So if $B_{i,j} \neq 0$, then $|B_{i,j}| = 1$, and there are exactly two $(\leq r)$-length paths between $i$ and $j$. This case analysis proves items (ii) and (iii) of the claim.

The union of the two paths from $i$ to $j$ contains a simple cycle which is contained in the $r$-neighborhoods of both $i$ and $j$. Therefore $i \sim j$, proving item (i) of the claim.
\end{proof}

\subsection{Proof of Theorem \ref{ks2}}

The key to proving Theorem \ref{ks2} is the identity
$$A^{(r)} = \sum_{k=0}^{r} A^{[k]},$$ which is just another way to write $\1(d_{G}(i,j) \leq r) = \sum_{k=0}^r \1(d_{G}(i,j) = k)$. 

Informally, we know from Theorem \ref{ks1} that the top eigenvalues of the $A^{[k]}$ matrices have the desired separation properties. So if we can prove that the top eigenvectors of the $A^{[k]}$ matrices are all roughly equal, then they will be roughly equal to the top eigenvectors of $A^{(r)}$, essentially proving Theorem \ref{ks2}.

Keeping this intuition in mind, our first step is to reduce the problem of analyzing $A^{(r)}$ to the problem of analyzing a slightly simpler matrix: $$D = D(r) := \sum_{k=r/2}^r A^{\{k\}}.$$ We write $$A^{(r)} = D + A^{(r/2 - 1)} - \sum_{k=r/2}^r (A^{\{k\}} - A^{[k]}),$$
By the proof of Theorem \ref{ks2}, we know that conditioned on $E_1 \cap E_2$ (so with high probability, $$\|\sum_{k=r/2}^r A^{[k]} - A^{\{k\}}\|_2 \leq \sum_{k=0}^r \|A^{[k]} - A^{\{k\}}\|_2 = O(\alpha^{r/2} (\log n)^4).$$
And since by Lemma \ref{lem:neighborhoodgrowthbound} the neighborhoods of vertices do not grow too quickly, with high probability, $A^{(r/2 - 1)}$ is the adjacency matrix of a graph with maximum degree $O(\alpha^{r/2} (\log n)^2)$. Under this event, we also get the following bound, $$\|A^{(r/2 - 1)}\|_2 = O(\alpha^{r/2} (\log n)^2).$$

We can conclude by triangle inequality that $\|A^{(r)} - D\|_2 = O(\alpha^{r/2} (\log n)^4)$. Therefore, by the matrix perturbation arguments (Weyl's inequality and Davis-Kahan inequality) used to prove Theorem \ref{ks1}, it suffices to prove that the matrix $D$ has the spectral properties that we desire for $A^{(r)}$. Theorem \ref{ks2} will follow.

%

Our proof will now follow the proof of Theorem 2.1 in \cite{massoulie-STOC}, with some ``under the hood'' details modified. We will show that $D$ has a ``weak Ramanujan property'', similar to Theorem 2.4 of \cite{massoulie-STOC}:

\begin{lemma}\label{lem:weakramanujan} With high probability, $D$ satisfies the following weak Ramanujan property:
$$\sup_{\|u\|_2 = 1, u^T A^{\{r\}} 1 = u^T A^{\{r\}} X = 0} \|D u\|_2 = \alpha^{r/2} (\log n)^{O(1)},$$ where $1$ is the all-ones vector and $X \in \{-1,+1\}^n$ is the community label vector.
\end{lemma}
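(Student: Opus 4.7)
The plan is to reduce Lemma~\ref{lem:weakramanujan} to Theorem~\ref{ks3} (applied individually for each $k \in \{r/2,\dots,r\}$) together with a Davis--Kahan style alignment argument showing that the top eigenvectors of the matrices $A^{\{k\}}$ for different $k$'s all live near the same two-dimensional subspace, spanned essentially by $\mathbf{1}$ and $X$. Concretely, let $V := \mathrm{span}(A^{\{r\}}\mathbf{1},\,A^{\{r\}}X)$ and let $\Pi$ be the orthogonal projection onto $V^\perp$, so the desired supremum equals $\|D\Pi\|_2$. By Theorem~\ref{ks3}, in case~$A$ each $A^{\{k\}}$ admits a spectral decomposition
\begin{equation*}
A^{\{k\}} = \lambda_1^{(k)}\phi_1^{(k)}(\phi_1^{(k)})^T + \lambda_2^{(k)}\phi_2^{(k)}(\phi_2^{(k)})^T + R^{(k)},
\end{equation*}
with $\lambda_1^{(k)} \asymp \alpha^k$, $\lambda_2^{(k)} \asymp \beta^k$, and $\|R^{(k)}\|_2 \le \alpha^{k/2}(\log n)^{O(1)}$ (case~$B$ is analogous, with only one signal direction). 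Triangle inequality then gives
\begin{equation*}
\|D\Pi\|_2 \le \sum_{k=r/2}^{r}\left(\lambda_1^{(k)}\|\Pi\phi_1^{(k)}\|_2 + \lambda_2^{(k)}\|\Pi\phi_2^{(k)}\|_2 + \|R^{(k)}\|_2\right),
\end{equation*}
and the bulk contribution $\sum_k \|R^{(k)}\|_2$ already sums geometrically to $\alpha^{r/2}(\log n)^{O(1)}$, since $\alpha>1$.

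The core of the argument is therefore to show that $\phi_1^{(k)}$ and $\phi_2^{(k)}$ lie close to $V$ for every $k \in \{r/2,\dots,r\}$. I would do this by comparing both $\phi_i^{(k)}$ and the basis vectors of $V$ against the ``planted'' unit vectors $\mathbf{1}/\sqrt{n}$ and $X/\sqrt{n}$, which are joint approximate eigenvectors of every $A^{\{k\}}$ with eigenvalues of order $\alpha^k$ and $\beta^k$ respectively. One step of power iteration combined with Davis--Kahan (Theorem~\ref{thm:daviskahan}) then yields bounds of the form
\begin{equation*}
\|\phi_1^{(k)} - \mathbf{1}/\sqrt{n}\|_2 \le \alpha^{-k/2}(\log n)^{O(1)}, \qquad \|\phi_2^{(k)} - \sigma_k X/\sqrt{n}\|_2 \le \alpha^{k/2}\beta^{-k}(\log n)^{O(1)},
\end{equation*}
for an appropriate sign $\sigma_k \in \{\pm 1\}$; here the first gap is $\lambda_1^{(k)} - \lambda_2^{(k)} \asymp \alpha^k$ and the second gap is $\lambda_2^{(k)} - |\lambda_3^{(k)}| \asymp \beta^k$, the latter relying crucially on the KS condition $\beta^2 > \alpha$. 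The same estimates applied at $k=r$ show that $A^{\{r\}}\mathbf{1}/\|A^{\{r\}}\mathbf{1}\|_2$ and $A^{\{r\}}X/\|A^{\{r\}}X\|_2$ (both of which lie in $V$) are within the same precision of $\mathbf{1}/\sqrt{n}$ and $X/\sqrt{n}$, so by the triangle inequality
\begin{equation*}
\|\Pi\phi_1^{(k)}\|_2 \le \alpha^{-k/2}(\log n)^{O(1)}, \qquad \|\Pi\phi_2^{(k)}\|_2 \le \alpha^{k/2}\beta^{-k}(\log n)^{O(1)}.
\end{equation*}

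Plugging these back into the triangle-inequality decomposition of $\|D\Pi\|_2$, each signal term contributes at most $\alpha^{k/2}(\log n)^{O(1)}$, and summing the resulting geometric series over $k=r/2,\dots,r$ yields the desired bound $\alpha^{r/2}(\log n)^{O(1)}$. The main obstacle I anticipate is establishing, with the polylogarithmic precision required above, that $\mathbf{1}/\sqrt{n}$ and $X/\sqrt{n}$ really are approximate eigenvectors of every $A^{\{k\}}$ and that $\mathbf{1}$ and $X$ have the necessary $\Theta(\sqrt{n})$ overlap with $\phi_1^{(k)}$ and $\phi_2^{(k)}$ (so that a single power-iteration step already produces an accurate approximation). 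These two estimates are precisely controlled by the combinatorial moment computations underlying the proof of Theorem~\ref{ks3} in \cite{massoulie-STOC}, and I would extract them as by-products rather than redoing the full counting, so the reduction should be essentially free up to tracking polylogarithmic factors.
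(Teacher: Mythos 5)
Your overall architecture (decompose each $A^{\{k\}}$ into two signal directions plus a bulk controlled by $|\lambda_3|$, bound $\|D\Pi\|_2$ term by term, sum the geometric series) is a legitimate repackaging of what the paper does; the paper itself simply applies the triangle inequality over $k\in\{r/2,\dots,r\}$ and invokes Massouli\'e's Theorem~2.4 (the weak Ramanujan property for each $A^{\{k\}}$, already stated with orthogonality against $A^{\{r\}}1$ and $A^{\{r\}}X$) together with a union bound over $k$. The difficulty is entirely concentrated in your alignment step, and there it breaks.

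The gap is the choice of anchor. In the sparse regime ($a,b$ constant), the bounds $\|\phi_1^{(k)}-\mathbf{1}/\sqrt{n}\|_2\le \alpha^{-k/2}(\log n)^{O(1)}$ and $\|\phi_2^{(k)}-\sigma_k X/\sqrt{n}\|_2 = o(1)$ are false: the entries of $A^{\{k\}}\mathbf{1}$ are, up to the normalization $\alpha^k$, the martingale limits of the local branching processes rooted at each vertex, and these fluctuate by constant factors from vertex to vertex. Hence $A^{\{k\}}\mathbf{1}/\|A^{\{k\}}\mathbf{1}\|_2$ — and therefore $\phi_1^{(k)}$ — sits at distance $\Theta(1)$ from $\mathbf{1}/\sqrt{n}$, and a triangle inequality through $\mathbf{1}/\sqrt{n}$ cannot show that $\phi_1^{(k)}$ is close to $V=\mathrm{span}(A^{\{r\}}\mathbf{1},A^{\{r\}}X)$. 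This is exactly why the weak Ramanujan property is formulated with orthogonality against $A^{\{r\}}\mathbf{1}$ and $A^{\{r\}}X$ rather than against $\mathbf{1}$ and $X$. The repair is to anchor on the one-step-smoothed vectors: $\phi_1^{(k)}$ is asymptotically parallel to $A^{\{k\}}\mathbf{1}/\|A^{\{k\}}\mathbf{1}\|_2$ (this is the content of Lemma~\ref{lem:unionboundmassoulie}(ii)--(iii), i.e.\ Massouli\'e's Theorem~4.1), and then one must show that $A^{\{k\}}\mathbf{1}/\|A^{\{k\}}\mathbf{1}\|_2$ is within $O(n^{-\delta})$ of $A^{\{r\}}\mathbf{1}/\|A^{\{r\}}\mathbf{1}\|_2$ for every $k$ in the range — this cross-$k$ alignment is Lemma~\ref{lem:BlBmalignmentlemma}, proved via the entrywise expansion $(A^{\{k\}}\mathbf{1})_v=\alpha^{k-r}(A^{\{r\}}\mathbf{1})_v+O(\log n)+O(\sqrt{\alpha^k\log n})$ for non-cyclic vertices (Massouli\'e's Lemma~4.3) plus a crude bound on the few cyclic vertices. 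Your closing remark that the needed estimates are "by-products" of Massouli\'e's moment computations is correct in spirit, but the by-products you need concern the pair $(A^{\{k\}}\mathbf{1},A^{\{r\}}\mathbf{1})$, not the pair $(\mathbf{1},\phi_1^{(k)})$; as written, the claimed intermediate inequalities are not just unproved but untrue.
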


We will also need the following two lemmas. The first lemma tells us that the top eigenvectors of the $A^{\{k\}}$ matrices are pretty well aligned with the top eigenvectors of $A^{\{r\}}$:
\begin{lemma}\label{lem:BlBmalignmentlemma}
There are $c_0, \delta > 0$ such that with high probability, for all $k \in \{r/2,\ldots,r\}$, $$\left\|\frac{A^{\{r\}}1}{\|A^{\{r\}} 1\|_2} - \frac{A^{\{k\}}1}{\|A^{\{k\}} 1\|_2}\right\|_2 \leq c_0 n^{-\delta},$$ $$\left\|\frac{A^{\{r\}}X}{\|A^{\{r\}} X\|_2} - \frac{A^{\{k\}}X}{\|A^{\{k\}} X\|_2}\right\|_2 \leq c_0 n^{-\delta}.$$
\end{lemma}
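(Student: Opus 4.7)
The plan is to exhibit a fixed reference vector for each family of normalized vectors, namely $1/\sqrt{n}$ for $A^{\{k\}}1/\|A^{\{k\}}1\|_2$ and $X/\sqrt{n}$ for $A^{\{k\}}X/\|A^{\{k\}}X\|_2$, and to show that each normalized vector lies within $O(n^{-\delta})$ of its reference uniformly for $k \in \{r/2,\ldots,r\}$. The lemma then follows by the triangle inequality. Throughout write $\alpha := (a+b)/2$ and $\beta := (a-b)/2$.

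The core step is the pair of $\ell_2$ concentration bounds, holding with high probability uniformly over $k$:
\[
\|A^{\{k\}}1 - \alpha^k \cdot 1\|_2 \le \alpha^{k/2}\log(n)^{O(1)}\sqrt{n},
\qquad
\|A^{\{k\}}X - \beta^k \cdot X\|_2 \le \alpha^{k/2}\log(n)^{O(1)}\sqrt{n}.
\]
These follow from the same trace/path-counting moment methods already used in \cite{massoulie-STOC} to bound $|\lambda_3(A^{\{k\}})|$. Intuitively, on the high-probability event $E_1 \cap E_2$ from the proof of Theorem \ref{ks1}, the $r$-neighborhood of a typical vertex $i$ is tree-like, so the number of length-$k$ self-avoiding walks starting at $i$ concentrates around its mean $\alpha^k$ (and the signed count around $\beta^k X_i$), with fluctuations of order $\alpha^{k/2}$ up to polylog factors. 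Exceptional vertices (near a cycle or of atypically high degree) form a set of size $o(n)$ and contribute only a lower-order term in $\ell_2$ norm via the neighborhood-growth bound of Lemma \ref{lem:neighborhoodgrowthbound}. A union bound over the $O(\log n)$ values of $k$ then renders the estimates simultaneous.

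Granted these bounds, the rest is routine. Since $\alpha^{k/2}\log(n)^{O(1)} = n^{-\Omega(\e)}\cdot \alpha^k$ for $k = \Theta(\log n)$, we obtain $\|A^{\{k\}}1\|_2 = \alpha^k\sqrt{n}\,(1+O(n^{-\delta}))$, and hence
\[
\left\|\frac{A^{\{k\}}1}{\|A^{\{k\}}1\|_2} - \frac{1}{\sqrt{n}}\right\|_2 = O(n^{-\delta})
\]
uniformly in $k$. For the $X$ family the relative error is governed by $\alpha^{k/2}/\beta^k = (\alpha/\beta^2)^{k/2}$; under the KS hypothesis $\beta^2 > \alpha$ and for $\e$ small enough (as already required by $\e \log \alpha < 1/4$), this is also $n^{-\Omega(1)}$. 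Triangle inequality between $k$ and $r$ through the common reference closes the argument.

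The main technical obstacle is the concentration step: the path-counting pattern mirrors the $\lambda_3$ analysis in \cite{massoulie-STOC}, but the moment calculation must be specialized to the particular test vectors $1$ and $X$ rather than to generic unit vectors orthogonal to the top eigenspace. Small corrections such as $|\langle 1, X\rangle|=O(\sqrt{n})$ produce mild cross-mixing between the $1$ and $X$ components of $A^{\{k\}}1$ and $A^{\{k\}}X$, but these are of strictly lower order than the concentration error and absorb into the final $n^{-\delta}$ bound.
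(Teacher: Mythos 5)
There is a genuine gap, and it sits exactly at your ``core step.'' The claimed concentration bounds
$\|A^{\{k\}}1 - \alpha^k 1\|_2 \le \alpha^{k/2}\log(n)^{O(1)}\sqrt{n}$ and
$\|A^{\{k\}}X - \beta^k X\|_2 \le \alpha^{k/2}\log(n)^{O(1)}\sqrt{n}$ are false. The entry $(A^{\{k\}}1)_v$ is (on the tree-like event) the size of the $k$-th generation of a branching process rooted at $v$ with offspring mean $\alpha$, and this is $\alpha^k W_v^{(k)}$ where $W_v^{(k)}$ converges to a \emph{nondegenerate} martingale limit $W_v$ with $\E W_v = 1$ and $\Var(W_v)=\Theta(1)$. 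The fluctuations of $(A^{\{k\}}1)_v$ about $\alpha^k$ are therefore of order $\alpha^k$, not $\alpha^{k/2}$, and
\begin{equation*}
\bigl\|A^{\{k\}}1 - \alpha^k 1\bigr\|_2^2 \;=\; \sum_v \alpha^{2k}\bigl(W_v^{(k)}-1\bigr)^2 \;=\; \Theta\bigl(\alpha^{2k} n\bigr).
\end{equation*}
Consequently $A^{\{k\}}1/\|A^{\{k\}}1\|_2$ stays at distance $\Theta(1)$ from the reference $1/\sqrt{n}$ (the squared distance is roughly $2-2/\sqrt{1+\Var(W)}$), so routing the triangle inequality through the fixed vectors $1/\sqrt n$ and $X/\sqrt n$ cannot produce an $n^{-\delta}$ bound; the same objection applies to the $X$ family. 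This is not a technicality that ``absorbs into the final bound'': it is the dominant term.

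The reason the lemma is nonetheless true is that the order-one random prefactor $W_v$ is (essentially) \emph{common to all $k$}, so it cancels when you compare two values of $k$ rather than comparing either to a deterministic reference. This is the route the paper takes: by Lemma 4.3 of \cite{massoulie-STOC}, for every vertex $v$ outside the small exceptional set $\mathcal{B}$ one has $(A^{\{k\}}1)_v = \alpha^{k-r}(A^{\{r\}}1)_v + O(\log n) + O(\sqrt{\alpha^k \log n})$, and analogously with $\beta^{k-r}$ for $X$; i.e., $A^{\{k\}}1$ is entrywise proportional to $A^{\{r\}}1$ up to errors of order $\alpha^{k/2}$ per entry. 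One then expands $\langle A^{\{r\}}X, A^{\{k\}}X\rangle$ and $\|A^{\{k\}}X\|_2$, uses $|\mathcal{B}|=O(\alpha^{2r}\log^4 n)$ to control the exceptional vertices, and reads off $\|\,\cdot\,\|_2^2 = 2 - 2(1+o(n^{-\delta'}))$. If you want to salvage your write-up, replace the reference vectors $1/\sqrt n$ and $X/\sqrt n$ by $A^{\{r\}}1/\|A^{\{r\}}1\|_2$ and $A^{\{r\}}X/\|A^{\{r\}}X\|_2$ themselves and prove the entrywise proportionality statement; the rest of your normalization arithmetic (including the $(\alpha/\beta^2)^{k/2}$ bookkeeping under the KS condition) then goes through.
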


\begin{lemma}[From \cite{massoulie-STOC}, with minor modification as in the Theorem \ref{ks3} remark]\label{lem:unionboundmassoulie}
There are $c_1, c_2 > 0$, and $g = o(1)$ such that with high probability, for all $k \in \{r/2,\ldots,r\}$:
\begin{enumerate}
\item[(i)] $c_1 \alpha^k < \|A^{\{k\}}\|_2 < c_2 \alpha^k$.
\item[(ii)] $A^{\{k\}} A^{\{k\}} 1 = \|A^{\{k\}}A^{\{k\}} 1\|_2\left(\frac{A^{\{k\}} 1}{\|A^{\{k\}} 1\|_2} + h_k\right)$ for a vector $h_k$ s.t. $\|h_k\|_2 < g = o(1)$.
\item[(iii)] $A^{\{k\}} A^{\{k\}} X = \|A^{\{k\}}A^{\{k\}} X\|_2\left(\frac{A^{\{k\}} X}{\|A^{\{k\}} X\|_2} + h'_k\right)$ for a vector $h'_k$ s.t. $\|h'_k\|_2 < g = o(1)$.
\end{enumerate}
\end{lemma}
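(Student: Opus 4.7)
The plan is to deduce all three assertions from corresponding single-$k$ statements implicit in Massouli\'e's moment-method analysis, and then upgrade to a uniform statement over $k\in\{r/2,\ldots,r\}$ by a union bound. Part (i) is essentially a restatement of Theorem~\ref{ks3}: since the spectral norm of a symmetric matrix equals the largest $|\lambda_i|$, and $\lambda_1(A^{\{k\}})\asymp\alpha^k$ dominates the other eigenvalues (which are bounded by $\beta^k$ or $\alpha^{k/2}(\log n)^{O(1)}$), we get $\|A^{\{k\}}\|_2\asymp\alpha^k$. So the real work lies in (ii) and (iii).

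For (ii) and (iii), the key fact extractable from Massouli\'e's analysis is that $A^{\{k\}}1$ and $A^{\{k\}}X$ are nearly aligned with the top two unit eigenvectors $\phi_1,\phi_2$ of $A^{\{k\}}$, respectively. Concretely, writing the spectral decomposition $A^{\{k\}}=\sum_i\lambda_i\phi_i\phi_i^T$ and decomposing $A^{\{k\}}1=\sum_i\gamma_i\phi_i$, I would argue that $\gamma_1^2/\|A^{\{k\}}1\|_2^2=1-o(1)$. This comes from comparing $\langle 1,(A^{\{k\}})^2 1\rangle$ (which, by the same pair-counting estimates underlying Theorem~\ref{ks3}, is dominated by $\lambda_1^2\langle 1,\phi_1\rangle^2$) to the contributions from lower eigenvectors. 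Once this alignment is established, applying $A^{\{k\}}$ once more gives
\begin{align*}
A^{\{k\}}(A^{\{k\}}1)=\sum_i\lambda_i\gamma_i\phi_i,
\end{align*}
whose $i=1$ term dominates: the remainder is bounded by $|\lambda_2\gamma_2|+\|A^{\{k\}}\|_2\bigl(\sum_{i\ge 3}\gamma_i^2\bigr)^{1/2}=o\bigl(\lambda_1\|A^{\{k\}}1\|_2\bigr)$, using part (i) and the alignment bound. Normalizing by $\|A^{\{k\}}A^{\{k\}}1\|_2$ produces the claimed decomposition with $\|h_k\|_2=o(1)$. Part (iii) follows by the identical argument with the roles of $(1,\phi_1)$ replaced by $(X,\phi_2)$.

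Finally, to get uniformity over $k\in\{r/2,\ldots,r\}$, I would invoke the ``minor modification'' described in the remark after Theorem~\ref{ks3}: by tracking constants through the Markov inequalities applied to the trace-moment bounds in Massouli\'e's proof, one obtains the single-$k$ version of each assertion with probability at least $1-C(\log n)^{-2}$. Since $|\{r/2,\ldots,r\}|=O(\log n)$, a union bound yields probability $1-O(1/\log n)=1-o(1)$, as required. The main obstacle in this plan is the alignment claim $\gamma_1^2=(1-o(1))\|A^{\{k\}}1\|_2^2$: this is not merely a consequence of the eigenvalue separation given by Theorem~\ref{ks3} but lies at the heart of Massouli\'e's argument, and extracting it with the sharper $(\log n)^{-2}$ concentration rate needed to survive the union bound is where the bulk of the technical effort concentrates.
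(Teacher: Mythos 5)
Your proposal is correct and takes essentially the same route as the paper: the paper's proof is simply a citation of Theorem 4.1 of \cite{massoulie-STOC} (which gives the single-$k$ versions of (ii) and (iii), i.e., that $A^{\{k\}}A^{\{k\}}1$ and $A^{\{k\}}A^{\{k\}}X$ are asymptotically parallel to $A^{\{k\}}1$ and $A^{\{k\}}X$), together with the observation that (i) is just Theorem \ref{ks3} with uniform constants, followed by a union bound over the $O(\log n)$ values of $k$ exactly as you describe. Your additional spectral-decomposition scaffolding for (ii) and (iii) is a valid way to unpack that citation, and you correctly identify that the alignment of $A^{\{k\}}1$ with the top eigenvector is the content being imported from Massouli\'e rather than something deducible from the eigenvalue separation alone.
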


We now show that Lemmas \ref{lem:weakramanujan}, \ref{lem:BlBmalignmentlemma}, and \ref{lem:unionboundmassoulie} imply Theorem \ref{ks2}.

Following the argument of Theorem 4.1 of \cite{massoulie-STOC}, it suffices to show that with high probability,
\begin{align}
\label{eq:DB1bounds} & \|D A^{\{r\}} 1\|_2 = \Theta(\alpha^r \|A^{\{r\}} 1\|_2)\qquad\mbox{and} \\ \label{eq:DBXbounds} &\|D A^{\{r\}} X\|_2 = \Theta(\beta^r \|A^{\{r\}} X\|_2).
\end{align} Since $A^{\{r\}} 1$ and $A^{\{r\}} X$ are asymptotically orthogonal (by Lemma 4.4 of \cite{massoulie-STOC}), and since $D$ has the weak Ramanujan property of Lemma \ref{lem:weakramanujan}, the variational definition of eigenvalues yields that the top two eigenvectors of $D$ will be asymptotically in the span of $A^{\{r\}} 1$ and $A^{\{r\}} X$.\footnote{Here we assume we are in the case $\beta^2 > \alpha$ of Theorem \ref{ks2}, since the other case is similar.} By the lower bound of \eqref{eq:DB1bounds} and the upper bound of \eqref{eq:DBXbounds}, the top eigenvalue of $D$ will be $\Theta(\alpha^{r})$, with eigenvector asymptotically parallel to $A^{\{r\}} 1$. Since $A^{\{r\}} X$ is asymptotically orthogonal to $A^{\{r\}} 1$, the second eigenvalue of $D$ will be $\Theta(\beta^r)$, with eigenvector asymptotically parallel to $A^{\{r\}} X$. This proves the theorem, since by Massouli\'e \cite{massoulie-STOC} $A^{\{r\}} 1$ and $A^{\{r\}} X$ are in fact asymptotically parallel to the top two eigenvectors of $A^{\{r\}}$.

The inequalities in \eqref{eq:DB1bounds} hold because with high probability, \begin{align}\frac{\|D A^{\{r\}} 1\|_2}{\|A^{\{r\}} 1\|_2} &=  \label{eq:spectralnormline} \left\|\sum_{k=r/2}^r A^{\{r\}} \frac{A^{\{r\}} 1}{\|A^{\{r\}} 1\|_2}\right\|_2
\\ &= \label{eq:triangleBlBe1} \left\|\sum_{k=r/2}^r A^{\{k\}} \frac{A^{\{k\}} 1}{\|A^{\{k\}} 1\|_2}\right\|_2 + O\left(\sum_{k=r/2}^r \|A^{\{k\}}\|_2 \left\|\frac{A^{\{r\}}1}{\|A^{\{r\}} 1\|_2} - \frac{A^{\{k\}}1}{\|A^{\{k\}} 1\|_2}\right\|_2\right)
\\ &= \label{eq:asymptoticparallel1} \left\|\sum_{k=r/2}^r A^{\{k\}} \frac{A^{\{k\}} 1}{\|A^{\{k\}} 1\|_2}\right\|_2 + O\left(\sum_{k=0}^r \alpha^{k} n^{-\delta}\right)
\\ &= \left\|\sum_{k=r/2}^r A^{\{k\}} \frac{A^{\{k\}} 1}{\|A^{\{k\}} 1\|_2}\right\|_2 + O(\alpha^{r} n^{-\delta})
\\ &= \label{eq:mastheorem} \left\|\sum_{k=r/2}^r (\frac{A^{\{k\}} 1}{\|A^{\{k\}} 1\|_2} + h_k) \frac{\|A^{\{k\}}A^{\{k\}} 1\|_2}{\|A^{\{k\}} 1\|_2}\right\|_2 + O(\alpha^{r}n^{-\delta})\quad\mbox{for }\|h_k\|_2 = o(1)
\\ &= \label{eq:triangleBlBe2}\left\|\sum_{k=r/2}^r (\frac{A^{\{r\}} 1}{\|A^{\{r\}} 1\|_2} + h_k) \frac{\|A^{\{k\}}A^{\{k\}} 1\|_2}{\|A^{\{k\}} 1\|_2}\right\|_2 \\ \nonumber &+ O\left(\sum_{k=r/2}^r \frac{\|A^{\{k\}}A^{\{k\}} 1\|_2}{\|A^{\{k\}} 1\|_2} \left\|\frac{A^{\{r\}} 1}{\|A^{\{r\}} 1\|_2} - \frac{A^{\{k\}} 1}{\|A^{\{k\}} 1\|_2}\right\|_2\right) + O(\alpha^{r} n^{-\delta})
\\ &= \label{eq:asymptoticparallel2}\left\|\sum_{k=r/2}^r (\frac{A^{\{r\}} 1}{\|A^{\{r\}} 1\|_2} + h_k) \frac{\|A^{\{k\}}A^{\{k\}} 1\|_2}{\|A^{\{k\}} 1\|_2}\right\|_2 + O(\alpha^{r}n^{-\delta})
\\ &= \left(\sum_{k=r/2}^r\frac{\|A^{\{k\}}A^{\{k\}} 1\|_2}{\|A^{\{k\}} 1\|_2}\right)(1 + o(1)) + O(\alpha^{r}n^{-\delta}).
\end{align}
\eqref{eq:triangleBlBe1} and \eqref{eq:triangleBlBe2} are derived by triangle inequality. \eqref{eq:asymptoticparallel1} and \eqref{eq:asymptoticparallel2} are consequences of Lemma \ref{lem:BlBmalignmentlemma}. \eqref{eq:mastheorem} follows by Lemma \ref{lem:unionboundmassoulie}.
Plugging in the bound on $\frac{\|A^{\{k\}} A^{\{k\}} 1\|_2}{\|A^{\{k\}} 1\|_2}$ from item (i) of Lemma \ref{lem:unionboundmassoulie} gives $$\|D A^{\{r\}} 1\|_2 = \Theta(\|A^{\{r\}} 1\|_2),$$ proving \eqref{eq:DB1bounds}. A similar argument proves \eqref{eq:DBXbounds}.
\qed

\subsubsection{Proof of Lemma \ref{lem:weakramanujan}}
Let $\delta > 0$. By triangle inequality and the definition $D = \sum_{k=r/2}^r A^{\{k\}}$, it suffices to prove that for all $k \in \{r/2,\ldots,r\}$, $$\sup_{\|u\|_2 = 1, u^T A^{\{r\}} 1 = u^T A^{\{r\}} X = 0} \|A^{\{k\}} u\|_2 \leq (\log n)^{O(1)} \alpha^{r/2},$$ where the bound is uniform over $k$. This follows from a union bound over Theorem 2.4 in \cite{massoulie-STOC} (the weak Ramanujan property for $A^{\{r\}}$).

\subsubsection{Proof of Lemma \ref{lem:BlBmalignmentlemma}}

Let $\mathcal{B} \subset V(G)$ denote the set of vertices $v$ such that there is a cycle in the $r$-neighborhood of $v$. Lemma 4.3 of \cite{massoulie-STOC} gives us the following bounds on entries of $A^{\{k\}}1$, $A^{\{k\}}X$ for all $k \in [r]$:
\begin{equation}\label{eq:B1notincBbound} v \not\in \mathcal{B} \implies (A^{\{k\}} 1)_v = \alpha^{k-r} (A^{\{r\}} 1)_v + O(\log n) + O(\sqrt{\log(n) \alpha^k})
\end{equation}
\begin{equation}\label{eq:BxnotincBbound} v \not\in \mathcal{B} \implies (A^{\{k\}} X)_v = \beta^{k-r} (A^{\{r\}} X)_v + O(\log n) + O(\sqrt{\log(n) \alpha^k})
\end{equation}
\begin{equation}\label{eq:B1incBbound}
v \in \mathcal{B} \implies \|(A^{\{k\}} 1)_v\|_2 = O(\alpha^k \log(n))
\end{equation}
\begin{equation}\label{eq:BxincBbound}
v \in \mathcal{B} \implies \|(A^{\{k\}} X)_v\|_2 = O(\beta^k \log(n))
\end{equation}

By Lemma 4.2 of \cite{massoulie-STOC}, $|\mathcal{B}| = O(\alpha^{2r}\log^4 n)$ with high probability, so by \eqref{eq:BxnotincBbound} and \eqref{eq:BxincBbound},

\begin{align*}\langle A^{\{r\}} X, A^{\{k\}} X \rangle &= \left(\sum_{v \not\in \mathcal{B}} (A^{\{r\}} X)_v^2 \beta^{k-r} + (A^{\{r\}} X)_v (O(\log n + \sqrt{\alpha^k \log n}))\right) + \sum_{v \in \mathcal{B}} O(\beta^{k+r} \log n) \\ \label{eq:end1alignBlsigmaBesigma} &=  \beta^{k-r}\|A^{\{r\}} X\|_2^2 + O(n \beta^r \sqrt{\alpha^k} \log^2 n) + O(\alpha^{2r}\beta^{k+r} \log^4 n).
\end{align*}
Noting that $\beta^2 > \alpha$, and  that $\|A^{\{r\}} X\|_2^2 = \Theta(n\beta^{2r})$ up to a factor of $\log^2 n$, there is $\delta > 0$ such that:
\begin{align}
\langle A^{\{r\}} X, A^{\{k\}} X \rangle &= \beta^{k-r}\|A^{\{r\}} X\|_2^2(1+o(n^{-\delta})).	
\end{align}
Similarly, we can show that $\|A^{\{k\}} X\|_2 = \beta^{k-r} \|A^{\{r\}} X\|_2(1 + o(n^{-\delta})).$
This implies that $$\left\|\frac{A^{\{r\}}X}{\|A^{\{r\}} X\|_2} - \frac{A^{\{k\}}X}{\|A^{\{k\}} X\|_2}\right\|^2_2 = 2 - 2\frac{\langle A^{\{r\}} X, A^{\{k\}} X \rangle}{\|A^{\{k\}} X\|_2\|A^{\{r\}} X\|_2} = 2 - 2(1 + o(n^{-\delta'})) = o(n^{-\delta'})$$ for some $\delta' > 0$. Similar arguments, using \eqref{eq:B1notincBbound} and \eqref{eq:B1incBbound} prove the analogous result for $\langle A^{\{k\}} 1, A^{\{r\}} 1\rangle$. \qed

\subsubsection{Proof of Lemma \ref{lem:unionboundmassoulie}} Item (i) is the statement from Theorem \ref{ks3} that $\lambda_1(A^{\{k\}}) = \Theta(\alpha^k)$, with the additonal subtletly that we can choose uniform constants in the $\Theta$ notation for $k \in \{r/2,\ldots,r\}$. Items (ii) and (iii) are equivalent to stating that for all $k \in \{l/2,\ldots,l\}$, $A^{\{k\}}A^{\{k\}} 1$ is asymptotically in the same direction as $A^{\{k\}} 1$, and $A^{\{k\}}A^{\{k\}} X$ is asymptotically in the same direction as $A^{\{k\}} X$. A union bound over Theorem 4.1 of \cite{massoulie-STOC} implies this is true for all $k \in \{r/2,\ldots,r\}$. \qed

\subsection{Proof of Theorem \ref{abp}}\label{proof_abp}

We will prove two sub-theorems, and the result of Theorem \ref{abp} is a consequence.  The proof is based on  counting closed walks, as in the approach of J. Friedman for the Alon-Boppana Theorem \cite{friedman}.

\begin{definition}$t_{2k}^{(r)}$ is the minimum, taken over all vertices $x\in V(G)$, of the number of closed walks of length $2k$ in $G^{(r)}$ terminating at $x$.\end{definition}

First, we will bound $\lambda_2(G^{(r)})$ in terms of $t_{2k}^{(r)}$.

\begin{theorem}\label{thm_lambda}
Let $G$ be a graph and $r\geq 1$.  Let $D$ be the diameter of $G$ and let $k$ satisfy $2k < \lceil D/r\rceil$.  Then $$\lambda_2(G^{(r)})^{2k}\geq t_{2k}^{(r)}.$$
\end{theorem}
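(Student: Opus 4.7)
The strategy is a standard Alon--Boppana style lower bound via closed-walk counting: exhibit a two-dimensional test subspace on which the quadratic form of $(A^{(r)})^{2k}$ is uniformly large, then invoke the Courant--Fischer min--max characterization. The hypothesis $2k < \lceil D/r \rceil$ is equivalent to $D > 2kr$, so one can choose vertices $x, y \in V(G)$ with $d_G(x,y) = D > 2kr$. Since every edge of $G^{(r)}$ lifts to a walk of length at most $r$ in $G$, any walk of length $\ell$ in $G^{(r)}$ lifts to a walk of length at most $\ell r$ in $G$, giving $d_{G^{(r)}}(x,y) \geq d_G(x,y)/r > 2k$. In particular, there is no walk of length exactly $2k$ in $G^{(r)}$ from $x$ to $y$, so $((A^{(r)})^{2k})_{x,y} = 0$.

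With the test subspace $W = \mathrm{span}(\mathbf{1}_x, \mathbf{1}_y) \subseteq \mathbb{R}^{V(G)}$, where $\mathbf{1}_v$ is the indicator vector of $v$, any $w = \alpha \mathbf{1}_x + \beta \mathbf{1}_y \in W$ satisfies
\begin{align*}
w^{T}(A^{(r)})^{2k} w
&= \alpha^{2}\, ((A^{(r)})^{2k})_{x,x} + \beta^{2}\, ((A^{(r)})^{2k})_{y,y} + 2\alpha\beta\, ((A^{(r)})^{2k})_{x,y} \\
&\geq t_{2k}^{(r)} (\alpha^{2} + \beta^{2}) = t_{2k}^{(r)}\, \|w\|_{2}^{2},
\end{align*}
since each diagonal entry counts closed walks of length $2k$ based at the corresponding vertex---hence is at least $t_{2k}^{(r)}$ by the definition of $t_{2k}^{(r)}$---while the cross term vanishes by the previous paragraph. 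By Courant--Fischer applied to the symmetric matrix $(A^{(r)})^{2k}$ with the two-dimensional subspace $W$, the second-largest eigenvalue of $(A^{(r)})^{2k}$ is therefore at least $t_{2k}^{(r)}$. Since the spectrum of $(A^{(r)})^{2k}$ equals $\{\lambda_i(A^{(r)})^{2k}\}_i$, with its largest element realized by the Perron--Frobenius eigenvalue $\lambda_1(A^{(r)})$, one reads off $\lambda_2(G^{(r)})^{2k} \geq t_{2k}^{(r)}$, interpreting $\lambda_2$ as the eigenvalue of second-largest magnitude in the standard Alon--Boppana convention.

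The main subtle point is precisely this convention at the final step: what the argument directly controls is $\max_{i \geq 2} |\lambda_i(A^{(r)})|^{2k}$, which could in principle be realized by a very negative bottom eigenvalue rather than by the ordered $\lambda_2$. For the intended use (combining with Lemma \ref{lemma_reg} to compare against Corollary \ref{corol_ks2}) the second-in-magnitude reading is the natural one and no extra work is needed; if one insists on the ordered second eigenvalue, a supplementary lower bound on $\lambda_n(A^{(r)})$ would be required. Apart from this conventional nuance, the entire proof rests on the two ingredients already in place: (i) the diameter-to-power-distance conversion yielding two vertices unreachable in $2k$ steps of $G^{(r)}$, and (ii) the pointwise diagonal lower bound built into the definition of $t_{2k}^{(r)}$.
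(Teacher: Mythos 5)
Your proof is correct and follows essentially the same route as the paper's: both pick two vertices at graph distance $D>2kr$ so the $(x,y)$ entry of $(A^{(r)})^{2k}$ vanishes, lower-bound the diagonal entries by $t_{2k}^{(r)}$, and conclude variationally (the paper uses the single test vector $f_{xy}=f_1(y)\mathbf{1}_x-f_1(x)\mathbf{1}_y$ orthogonal to the Perron eigenvector, which is just the one-vector version of your Courant--Fischer argument on $\mathrm{span}(\mathbf{1}_x,\mathbf{1}_y)$). The ``second-largest in magnitude'' caveat you flag is equally present in the paper's own Rayleigh-quotient step, so it is a shared convention rather than a gap in your argument.
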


\begin{proof}

Let $A^{(r)}$ be the adjacency matrix for $G^{(r)}$.
Let $f_1$ be an eigenfunction satisfying $A^{(r)}f_1 = \lambda_1(G^{(r)})f_1$.  By the Perron-Frobenius theorem, we can choose $f_1$ so that $f_1(z) > 0$ for all $z\in V$.

Because $A^{(r)}$ is symmetric, we can express $\lambda_2(G^{(r)})$ by the Rayleigh quotient \begin{align*}\lambda_2(G^{(r)})^{2k} = \sup_{f\perp f_1}\frac{\langle f, \parens{A^{(r)}}^{2k} f\rangle}{\langle f, f\rangle},\end{align*} where $k$ is any non-negative integer.  To obtain a lower bound on $\lambda_2(G^{(r)})$ we will set a test-function $f$ for this quotient.

For vertices $x,y\in V(G)$, set $f_{xy}(x) = f_1(y)$, $f_{xy}(y) = -f_1(x)$ and $f_{xy} \equiv 0$ otherwise.  Clearly $f_{xy}\perp f_1$.  It is well-known that $[\parens{A^{(r)}}^{2k}]_{ij}$ counts the number of walks in $G^{(r)}$ of length $2k$ that start at $i$ and end at $j$.  If $z\in V$, \begin{align*}\left(\parens{A^{(r)}}^{2k}f_{xy}\right)(z) = f_1(y)\left[\parens{A^{(r)}}^{2k}\right]_{xz}-f_1(x)\left[\parens{A^{(r)}}^{2k}\right]_{yz}.
\end{align*}

Let $D$ be the diameter of $G$, and choose $x$ and $y$ to be vertices with $d_G(x,y) = D$.  It follows that $d_{G^{(r)}}(x,y) = \lceil D/r \rceil$.  Choose $k$ so that $2k < \lceil D/r \rceil$.  There are no $2k$-walks in $G^{(r)}$ from $x$ to $y$ (or vice-versa), so that $\left[\parens{A^{(r)}}^{2k}\right]_{xy} = \left[\parens{A^{(r)}}^{2k}\right]_{yx} = 0$.  Now, our expression for $\lambda_2$ simplifies to \begin{align*}\lambda_2^{2k}\geq \frac{\langle f_{xy},\parens{A^{(r)}}^{2k}f_{xy}\rangle }{\langle f_{xy},f_{xy}\rangle} = \frac{f_1(y)^2\left[\parens{A^{(r)}}^{2k}\right]_{xx} + f_1(x)^2\left[\parens{A^{(r)}}^{2k}\right]_{yy}}{f_1(y)^2+f_1(x)^2} \geq t_{2k}^{(r)},\end{align*} where the last inequality holds because $t_{2k}^{(r)}\leq \left[\parens{A^{(r)}}^{2k}\right]_{zz}$ for all $z\in V$.
\end{proof}

Second, we will derive a lower bound on $t_{2k}^{(r)}$ in terms of the modified minimum degrees $\delta^{(i)}:0\leq i\leq r$.

\begin{theorem}\label{thm_tree}
Let $r$ and $k$ be positive integers. \begin{align*}\parens{t^{(r)}_{2k}}^{1/(2k)}\geq \parens{1-o(1)} \sum_{j = 0}^{r}\sqrt{\delta^{(j)}\delta^{(r-j)}}. \end{align*}
\end{theorem}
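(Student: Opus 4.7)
The plan is to adapt the Alon--Boppana closed-walk strategy (in the spirit of Friedman~\cite{friedman}) to the powered graph $G^{(r)}$. The general idea is to lower-bound $t_{2k}^{(r)}$ by exhibiting a rooted infinite tree $T$ together with an injection from closed walks at the root of $T$ into closed walks at an arbitrary vertex $x_0$ of $G^{(r)}$, and then to show that the asymptotic growth rate of closed walks in $T$ is $\sum_{j=0}^r \sqrt{\delta^{(j)}\delta^{(r-j)}}$.

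First I would pass to the universal cover. Fix any $x_0\in V(G)$, let $\tilde G$ be the universal cover of $G$ rooted at a lift $\tilde x_0$ of $x_0$, and define $\tilde G^{(r)}$ by connecting vertices of $\tilde G$ at tree-distance $\leq r$. The covering projection $\pi:\tilde G\to G$ extends to a covering $\pi:\tilde G^{(r)}\to G^{(r)}$, so walks in $\tilde G^{(r)}$ at $\tilde x_0$ project to walks at $x_0$ in $G^{(r)}$ preserving length, and closed walks at $\tilde x_0$ inject into closed walks at $x_0$. Hence $t_{2k}^{(r)}\geq [(A_{\tilde G^{(r)}})^{2k}]_{\tilde x_0,\tilde x_0}$, uniformly in the choice of $x_0$.

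Next I would carve out a subtree $T\subseteq \tilde G^{(r)}$ rooted at $\tilde x_0$ whose branching at every vertex is controlled by the $\delta^{(j)}$'s. The key combinatorial input is that for any edge $(x,y)\in E(G)$ and lifted edge $(\tilde x,\tilde y)$, the set of vertices at tree-distance exactly $j$ from $\tilde x$ whose shortest path to $\tilde x$ avoids $(\tilde x,\tilde y)$ lifts the set $\{v\in V(G):d_G(x,v)=j,\ d_G(y,v)\geq j\}$, and therefore has cardinality at least $\delta^{(j)}$; each of these is within tree-distance $j\leq r$ of $\tilde x$ and hence adjacent to $\tilde x$ in $\tilde G^{(r)}$. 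Iterating the construction, at each newly placed vertex $u\in T$ I select a reference edge of $G$ containing $\pi(u)$ chosen to point back toward the parent of $u$, and attach $\delta^{(j)}$ type-$j$ children of $u$ ``on the far side'' of that edge for each $j\in\{0,1,\ldots,r\}$. No-collision is automatic since $\tilde G$ is a tree.

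Finally, I would compute the asymptotic growth rate of closed walks at the root of $T$ using a transfer-matrix / generating-function argument. Decomposing any closed walk into excursions and using the self-similarity of $T$ one obtains a fixed-point system whose smallest positive singularity corresponds to the spectral radius $\rho(T) = \sum_{j=0}^{r}\sqrt{\delta^{(j)}\delta^{(r-j)}}$. The geometric mean $\sqrt{\delta^{(j)}\delta^{(r-j)}}$ arises from a bipartite-style pairing: a type-$j$ edge contributes branching $\delta^{(j)}$ on the down-step and, via the complementary structure of the children placed on the opposite side, effectively $\delta^{(r-j)}$ on the matching up-step. Standard tree-spectrum asymptotics ($[(A_T)^{2k}]_{\mathrm{root},\mathrm{root}}=\rho(T)^{2k(1-o_k(1))}$) then yield the claimed $(t_{2k}^{(r)})^{1/(2k)}\geq (1-o(1))\rho(T)$.

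The hardest step is step (2)--(3): one must make the reference-edge choice at each $u\in T$ self-consistent (so that the $\delta^{(j)}$ bound is applicable along every chain of edges used by the recursion), and, more importantly, one must set up the bipartite-style pairing carefully enough that the growth rate is the refined $\sum_j\sqrt{\delta^{(j)}\delta^{(r-j)}}$ rather than the coarser $2\sqrt{\sum_j\delta^{(j)}}$ one would get from a naive tree with all children merged into a single class. Resolving this requires distinguishing edges of $T$ by type and writing a two-class transfer recursion in which the products $\delta^{(j)}\delta^{(r-j)}$ appear as off-diagonal cross-terms, whose dominant eigenvalue is then exactly $\sum_{j=0}^r\sqrt{\delta^{(j)}\delta^{(r-j)}}$.
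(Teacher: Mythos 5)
Your overall strategy (lower-bounding $t_{2k}^{(r)}$ by counting a restricted, tree-like family of closed walks, with the geometric means $\sqrt{\delta^{(j)}\delta^{(r-j)}}$ arising from pairing a ``type $j$'' step with a ``type $r-j$'' step) is the same as the paper's, and you correctly locate where the difficulty lies. But two of your steps have genuine gaps. First, the map $\pi:\tilde{G}^{(r)}\to G^{(r)}$ is \emph{not} a covering map, and the projection of closed walks based at $\tilde{x}_0$ to closed walks based at $x_0$ is \emph{not} injective: two distinct $\tilde{G}^{(r)}$-neighbors of $\tilde{x}_0$ can project to the same vertex of $G$ whenever there are two distinct $(\le r)$-paths in $G$ between the projections (take $G$ the $4$-cycle $x\sim y\sim z\sim w\sim x$ and $r=2$: the single $G^{(2)}$-edge $\{x,z\}$ lifts to two distinct $\tilde{G}^{(2)}$-edges at $\tilde{x}$, one through the lift of $y$ and one through the lift of $w$, so closed walks upstairs strictly outnumber closed walks downstairs and your inequality $t_{2k}^{(r)}\ge[(A_{\tilde{G}^{(r)}})^{2k}]_{\tilde{x}_0,\tilde{x}_0}$ fails). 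This multiplicity is exactly the obstruction the paper flags, and it is resolved there by fixing a \emph{canonical} length-$i$ path for every ordered pair at distance $i$ and proving a bijection (Theorem \ref{thm_treeseq}) between the counted walk sequences and their endpoint walks in $G^{(r)}$. Your proposal has no deduplication device, so the count you produce does not lower-bound $t_{2k}^{(r)}$.

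Second, the refined rate $\sum_{j}\sqrt{\delta^{(j)}\delta^{(r-j)}}$ cannot be realized by closed walks on a subtree $T\subseteq\tilde{G}^{(r)}$ as you construct it. In a tree, the step that undoes a descent along a type-$j$ edge is forced to return to the unique parent, so the ``up-step'' carries multiplicity $1$, not $\delta^{(r-j)}$; literally counting closed walks on your $T$ gives only the coarse $2\sqrt{\sum_j\delta^{(j)}}$ that you yourself want to beat. To make the product $\delta^{(j)}\delta^{(r-j)}$ appear, the balancing type-$(r-j)$ move must itself branch into $\delta^{(r-j)}$ \emph{new} endpoints; this forces the state of the recursion to be the entire underlying $G$-walk (which gets truncated by $r-m$ edges and regrown by $m$ edges at each move of type $m$), not the current vertex of a tree. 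That is precisely what the paper's ``sequences of $r$-canonically constructed walks'' encode, after which the Dyck-word restriction on move types, the Catalan asymptotics, and Lemma \ref{lemma_binom} yield the stated bound. Your ``two-class transfer recursion with cross-terms'' is asserted but never set up, and as long as the walks being counted are closed walks on a tree it cannot be set up; this is the missing idea rather than a routine detail.
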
  Here, the $o$ notation is in terms of $k$, treating $d$ and $r$ as constants.  

Fix $x$ to be vertex with the minimum number of closed walks of length $2k$ in $G^{(r)}$ terminating at $x$.  By definition, that number is $t_{2k}^{(r)}$.

For exposition, we will start by outlining Friedman's proof of this result for simple graphs.  

\begin{theorem}[Friedman,\cite{friedman}]

Let $G$ be a $d$-regular graph and $x\in V(G)$.  Consider $t_{2k}$, the number of closed walks on $G$ of length $2k$ terminating at $x$.  Then $$\parens{t_{2k}}^{1/(2k)}\geq \parens{1-o(1)}2\sqrt{d-1}.$$

\end{theorem}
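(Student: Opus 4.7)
The plan is to count closed walks by lifting to the universal cover of $G$, which for a $d$-regular graph is the infinite $d$-regular tree $T_d$.

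First I would fix a vertex $\tilde{x}\in T_d$ lying above $x$ under the covering map $\pi\colon T_d\to G$. Because $\pi$ is a covering map, walks lift uniquely: every closed walk of length $2k$ at $\tilde{x}$ in $T_d$ projects to a closed walk of length $2k$ at $x$ in $G$, and distinct lifts give distinct projections. Consequently $t_{2k}\ge W_{2k}$, where $W_{2k}$ denotes the number of closed walks of length $2k$ at a fixed (hence any) vertex of $T_d$. Thus the problem reduces to the purely combinatorial task of bounding $W_{2k}$ from below.

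Next I would count walks in $T_d$ by recording, for each step, the height function $h(i)=d_{T_d}(\tilde{x},\text{position at time }i)$. This sequence $(h(0),\dots,h(2k))$ is a Dyck path: it starts and ends at $0$, stays nonnegative, and changes by $\pm 1$ at each step. Restricting to \emph{irreducible} Dyck paths (those that touch $0$ only at the endpoints) yields $C_{k-1}$ height patterns, where $C_{k-1}$ is the Catalan number. Given such a height pattern, the walk is determined by choosing, at each of the $k$ ``up'' steps, which neighbor to move to: the first up step (from $\tilde{x}$) allows $d$ choices, and each subsequent up step allows $d-1$ choices because the unique parent must be avoided. The down steps are forced. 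Hence
\[
W_{2k}\ \ge\ C_{k-1}\cdot d\,(d-1)^{k-1}.
\]

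Finally I would take $(2k)$-th roots and pass to the limit. Using the asymptotic $C_{k-1}\sim 4^{k-1}/((k-1)^{3/2}\sqrt{\pi})$, one has $C_{k-1}^{1/(2k)}\to 2$, while $(d(d-1)^{k-1})^{1/(2k)}\to\sqrt{d-1}$. Combining,
\[
t_{2k}^{1/(2k)}\ \ge\ W_{2k}^{1/(2k)}\ \ge\ (1-o(1))\cdot 2\sqrt{d-1},
\]
with the $o(1)$ tending to $0$ as $k\to\infty$ (with $d$ fixed), which is exactly the claim. The mildly delicate point is the combinatorics of labeling a Dyck path by a walk in $T_d$: one must be careful that restricting to \emph{irreducible} Dyck paths simplifies the count to the clean factor $d(d-1)^{k-1}$ (rather than the more complicated factor $d^m(d-1)^{k-m}$ that arises for general Dyck paths with $m$ returns to $0$). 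Beyond that, the argument is self-contained and relies only on the uniqueness of walk-lifting and the Catalan asymptotics.
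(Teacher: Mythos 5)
Your proposal is correct and follows essentially the same route as the paper: both arguments count closed walks that lift to the universal cover (the $d$-regular tree), parametrize them by Dyck paths with $d-1$ (or $d$ at the root) choices per up-step, and conclude via the Catalan asymptotics $C_k^{1/(2k)}\to 2$. Your restriction to irreducible Dyck paths, giving $C_{k-1}\,d\,(d-1)^{k-1}$ in place of the paper's $C_k (d-1)^k$, is a harmless cosmetic variation with the same asymptotics.
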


The following is the outline of Friedman's proof:

\begin{itemize}
\item We wish to obtain a lower bound on the number of closed walks of length $2k$ from $x$ to $x$ in $G$.
\item Consider only those walks that are "tree-like", i.e., walks with the following description.   We start a walk at the root $x$.  We construct a tree labelled with vertices of $G$: if at some step we are at $y$, we can either move to the parent of $y$ or generate a child of $y$ by moving to some other neighbor.  We want to count only those walks that trace all edges of the tree twice and so terminate at the original root $x$; for example:
\begin{itemize}
\item $x,y,z,y,x$ is tree-like.
\item $x,y,z,y,z,y,x$ is also tree-like : in this case $z$ is the label for two children of the same vertex.
\item $x,y,z,x$ is closed but not tree-like, because it ends on a node of depth $3$ rather than the root.
\end{itemize}
(Another characterization is that tree-like walks correspond to closed walks on the cover graph of $G$.)
\item In a tree-like walk of length $2k$, the sequence of moving to a child / moving to a parent makes a Dyck word of length $2k$, of which there are the $k$-th Catalan number $C_k$.
\item For a Dyck word, there are at least $d-1$ choices of which child to generate at each of those $k$ steps, and exactly $1$ choice at each step that we return to the parent node.  Note here that $d-1 = \delta^{(1)}$ for a $d$-regular graph.
\item In total we find that $$t_{2k}\geq (d-1)^kC_k\gtrapprox \frac{\parens{2\sqrt{d-1}}^{2k}}{k^{3/2}},$$ the bound follows.
\end{itemize}

We will to use a similar process to bound $t_{2k}^{(r)}$.  A simple suggestion is to consider walks in $G^{(r)}$ that correspond to closed walks on the cover graph of $G^{(r)}$.  The argument outlined above works (as it will for the cover graph of \textit{any} graph) and we see that $$t_{2k}^{(r)}\geq (d^{(r)}-1)C_k\approx \parens{2\sqrt{d^{(r)}-1}}^{2k},$$ where $d^{(r)}$ is the minimum degree in $G^{(r)}$.

The problem is that this is not tight: if we approximate $\delta^{(i)} \approx d^{(i)} \approx d^i$ (as is the case in a $d$-regular graph with girth larger than $2i$), this result gives $$\parens{t_{2k}^{(r)}}^{1/(2k)} \gtrapprox 2d^{r/2},$$  while our theorem has the result
$$\parens{t_{2k}^{(r)}}^{1/(2k)} \gtrapprox (r+1)d^{r/2}.$$  In order to prove the theorem we must improve this method.  Consider a class of walks we are not counting:

Let $r = 2$, suppose $x,y,z$ are vertices with a common neighbor $w$.  $x,y,z,x$ is a closed walk in $G^{(2)}$ but it does not correspond to a closed walk in the cover graph of $G^{(2)}$.  However, the underlying $G$-walk $x,w,y,w,z,w,x$ \textit{does} correspond to a closed walk on the cover graph of $G$.  This is the type of walk that we want to count, but failed to do so in our first attempt.

So instead we can try to count the number of walks in the cover graph of $G$ that are the underlying walk for some $G^{(r)}$-walk.  But this suggestion introduces a new issue: a given $G^{(r)}$-walk may correspond to several different underlying walks on the cover graph of $G$.  For instance (again in the setting $r=2$), if $x\sim y\sim z\sim w\sim x$, then the $G^{(2)}$-walk $x,z,x$ corresponds to both $x,y,z,y,x$ and $x,w,z,w,x$.  If we want to count the underlying walks as a lower bound on the number of $G^{(2)}$-walks, we must disregard all but at most $1$ of the underlying walks that correspond to a given $G^{(2)}$-walk.  For this reason we introduce a system of canonical paths between neighbors in $G^{(r)}$ that we follow when generating an underlying walk on $G$.

\begin{definition}[Canonical $i$-path]For all ordered pairs of distinct vertices $(v,w)$ with $d_G(v,w) = i$, we arbitrarily choose a canonical path of length $i$ from $v$ to $w$.\end{definition}

We now define the set of tree-like walks in $G^{(r)}$, which we will later count.  The tree-like walks are analogous to walks on the cover graph of $G$.  We first define the related concept of a sequence of canonically constructed walks.

\begin{definition}[Sequence of $r$-canonically constructed walks]
Let $k>0$.  A sequence of $r$-canonically constructed walks is a sequence $W_0,\dots,W_{2k}$ of walks that is constructed according to the following process:

Assume $W_i$ has length at least $r$, so it can be expressed as $W_i = \dots, v_0,v_1,\dots, v_r$.  We can then make a move of type $m$ to obtain $W_{i+1}$, where $m$ is an integer satisfying $0\leq m\leq r$.

In a move of type $m$, we start by removing the last $r-m$ vertices from $W_i$, leaving $\dots v_0,v_1,\dots, v_m$.  Let $y$ be a vertex with $d_G(v_m,y) = m$ and $d_G(v_{m-1},y)\geq m$.  We can append to our sequence the canonical walk from $v_m$ to $y$ of length $m$ to obtain $W_{i+1}$.  Observe that $d_G(v_r,y)\leq d_G(v_r,v_m) + d_G(v_m,y) = (r-m)+m = r$, so that $v_r$ and $y$ are neighbors in $G^{(r)}$.

We require $W_0 = W_{2k} = x$.  To find $W_1$, we are allowed to make a move of type $r$ (removing no vertices from $W_0$).  When $0 < i < 2k$, we require that $W_i$ has length at least $r$.
\end{definition}

The motivation for this definition is that we are taking a $G^{(r)}$-walk on the endpoints of $W_0,\dots,W_{2k}$: observe that in such a sequence of walks, the endpoints of $W_i$ and $W_{i+1}$ have graph distance at most $r$.

Note that the final move from $W_{2k-1}$ to $W_{2k}$ must be of type $0$, where $W_{2k-1}$ has length $r$ and $W_{2k}$ has length $0$.

Note that if there is a move of type $m$ from $W_i$ to $W_{i+1}$ where $W_{i+1}$ ends in $y$, no move of type $<m$ from $W_i$ can end in $y$.  Suppose on the contrary there is $j<m$ so that some move of type $j$ ends in $y$.  Then $d_G(v_{m-1},y)\leq d_G(v_{m-1},v_j) + d_G(v_j,y) \leq (m-1-j)+j = m-1$.  This contradicts that $d_G(v_{m-1},y) \geq m$.

Also note that, given $W_i$ is long enough that moves of type $m$ are allowed, the number of possible moves of type $m$ that can generate distinct possibilities for $W_{i+1}$ is at least $\delta^{(m)}$.

\begin{definition}[Tree-like walk in $G^{(r)}$]
Let $k>0$.  A tree-like walk of length $2k$ in $G^{(r)}$ is the closed walk made by the endpoints of an sequence of $r$-canonically constructed walks starting at $x$ and having length $2k$.
\end{definition}

We will count the number of tree-like walks in $G^{(r)}$.  First, it is useful to prove an equivalence between the tree-like walk $x_0,\dots, x_{2k}$ and underlying sequence of $r$-canonically constructed walks $W_0,\dots, W_{2k}$.

\begin{theorem}\label{thm_treeseq}
Let $k > 0$.  Let $\mathcal{W}$ be the set of sequences of $r$-canonically constructed walks.  Let $\mathcal{X}$ be the set of tree-like walks in $G^{(r)}$.  Then the function that takes an sequence of $r$-canonically constructed walks and outputs the tree-like walk consisting of its final vertices is a bijection between $\mathcal{W}$ and $\mathcal{X}$.
\end{theorem}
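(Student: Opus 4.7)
The plan is to exhibit the inverse map by reconstructing the sequence $W_0, W_1, \ldots, W_{2k}$ from its endpoint sequence $x_0, x_1, \ldots, x_{2k}$. Surjectivity onto $\mathcal{X}$ is immediate from the definition of a tree-like walk in $G^{(r)}$, so the substance of the claim is injectivity. I would proceed by induction on $i$: the initial walk $W_0 = x_0 = x$ is forced, and assuming $W_i$ has been uniquely determined from the prefix $x_0, \ldots, x_i$, I will show that $W_{i+1}$ is uniquely determined by $W_i$ together with $x_{i+1}$.

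First observe that once the type $m$ of the move from $W_i$ to $W_{i+1}$ is fixed, $W_{i+1}$ is completely determined: it is obtained by erasing the final $r-m$ vertices of $W_i$ and then appending the (unique) canonical $m$-path from $v_m$ to $x_{i+1}$, where $v_0, v_1, \ldots, v_r$ denote the final $r+1$ vertices of $W_i$. Hence it suffices to show that $m$ is uniquely determined by the pair $(W_i, x_{i+1})$. For the base step $W_0 \to W_1$ this is trivial because only a move of type $r$ is allowed, so $W_1$ must be the canonical $r$-path from $x$ to $x_1$.

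For the inductive step, the key lemma is that given $W_i$ with terminal segment $v_0, \ldots, v_r$ and a target endpoint $y$, there is at most one type $m \in \{0,1,\ldots,r\}$ such that some move of type $m$ ends at $y$. The excerpt already rules out smaller types $j < m$: the triangle inequality via $v_j$ gives $d_G(v_{m-1}, y) \leq (m-1-j) + j = m-1$, contradicting the required $d_G(v_{m-1}, y) \geq m$. A symmetric argument disposes of larger types $m' > m$: such a move would need $d_G(v_{m'-1}, y) \geq m'$, but passing through $v_m$ yields $d_G(v_{m'-1}, y) \leq (m'-1-m) + m = m' - 1$, again a contradiction. Combined, these show that $m$ is a function of $y = x_{i+1}$ and the terminal segment of $W_i$, completing the induction and hence injectivity.

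The main obstacle I anticipate is bookkeeping around the canonical paths rather than any substantive combinatorial difficulty: one must be careful to verify that the canonical $m$-path appended during reconstruction coincides exactly with the one used in the original construction. This is automatic because the canonical path between any ordered pair of vertices at a given distance was fixed once and for all at the outset, so specifying the type $m$, the starting vertex $v_m$ (determined by the inductively known $W_i$), and the endpoint $y = x_{i+1}$ identifies a unique appended segment. A minor additional verification is that the reconstructed sequence satisfies the length requirement that $W_i$ has length at least $r$ for all $0 < i < 2k$; but this is inherited from the original sequence, since the reconstructed $W_i$ agrees with it by the injectivity argument.
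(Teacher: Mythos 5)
Your proposal is correct and follows essentially the same route as the paper: both reconstruct $W_{i+1}$ from $W_i$ and $x_{i+1}$ by showing the move type $m$ is uniquely determined, using the same two triangle-inequality arguments (through $v_j$ for smaller types and through $v_m$ for larger types). The paper phrases this by characterizing $m$ as the least integer with $d_G(x_{i+1},v_m)\le m$ rather than directly proving at-most-one-type, but the computations are identical.
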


\begin{proof}
It is clear from the definition of tree-like walks that this function is surjective.  It remains to show that it is injective.  We will argue that it is possible to recover the sequence of $r$-canonically constructed walks $W_0,\dots, W_{2k}$ given the tree-like walk $x_0,\dots, x_{2k}$.

Start with $W_0 = x = x_0$.  Given $x_0,\dots x_k$ and $W_i$ we wish to find $W_{i+1}$.  We first determine the type of the move from $W_i$ to $W_{i+1}$:

Write $W_i = \dots, v_0,v_1,\dots v_r$.  Let $m$ be the least integer so that $d_G(x_{i+1},v_m) \leq m$.  Then the move from $W_i$ to $W_{i+1}$ must be of type $m$, for the following reasons:  assume for contradiction that the type is not $m$.

If the move is in fact of type $j < m$, then $d_G(x_{i+1},v_j) = j$, and so $m$ is not the least integer satisfying $d_G(x_{i+1},v_m) \leq m$, this is a contradiction.

Instead the move may be of type $j > m$.  But then, $d_G(v_{j-1}, x_{i+1})\leq d_G(v_{j-1},v_m) + d_G(v_m, x_{i+1})\leq (j-1-m)+m = j-1$.  This contradicts the assumption that $d_G(v_{j-1},x_{i+1})\geq j$ in a move of type $j$.

So the move must be of type $m$.  We can determine $W_{i+1}$ by removing all vertices after $v_m$ from $W_i$ and then appending the canonical walk from $v_m$ to $x_{i+1}$.
\end{proof}

Because every tree-like walk is a walk, $t_{2k}^{(r)}\geq \abs{\mathcal{X}}$.  It follows from Theorem \ref{thm_treeseq} that $t_{2k}^{(r)}\geq \abs{\mathcal{W}}$.  In order to prove Theorem \ref{thm_tree}, we will look for a lower bound on $|\mathcal{W}|$.

First, given $W_0,\dots, W_i$, we count the number of ways that we can make a move of type $m$ to generate a walk $W_{i+1}$.  Assuming $W_i = \dots, v_0,v_1,\dots v_r$, the last vertex of $W_{i+1}$ can be any vertex $y$ so that $d_G(v_m,y) = m$ and $d_G(v_{m-1},y) \geq m$.  Because $(v_m, v_{m-1})\in E$, the number of such vertices is at least $\delta^{(m)}$.

Now, suppose $m_1,\dots, m_{2k}$ is a legal sequence of move types that generate a closed walk. 
Here, $m_i$ represents the move type between $W_{i-1}$ and $W_i$.  The number of tree-like walks with such a sequence is at least $$\prod_{i=1}^{2k} \delta^{(m_i)}.$$

The remaining difficulty is to describe the allowed sequences of move types.  It is straightforward to see that $m_1 = r$ and $m_{2k} = 0$ for any allowed sequence.  It is also true that, summing over all moves, we add $\sum_i m_i$ edges and remove $\sum_i (r-m_i)$ edges.  Because we start at $W_0 = x$ and end at $W_{2k} = x$ with exactly $0$ edges, we see that these quantities must be equal: in other terms, $\sum_i m_i = \tfrac{1}{2}\sum_i r = kr$.  But apart from such observations, we have the difficult problem of making sure that $W_i$ has length at least $r$ after every move (besides $i = 0$ and $i = 2k$).  For example, if $r = 6$ we cannot take the move type sequence $6,1,3,3,5,0$: in this case, $W_2$ has length $2$.  Then to obtain $W_3$ we should start by removing the last $3$ edges of $W_2$, but this is impossible!  For this reason we require that $W_2$ has length at least $6$.

These restrictions make counting $\mathcal{W}$ difficult.  Instead we will count a subset of $\mathcal{W}$.  We will allow only those sequences of move types in which:

\begin{itemize}
\item $m_1 = r$ and $m_{2k} = 0$.
\item For every $j > r/2$, the subsequence that consists of only moves of types $j$ and $r-j$ is a Dyck word starting with type $j$.
\item The subsequence of moves of types $r$ and $0$ is still a Dyck word if the entries $m_1$ and $m_{2k}$ are removed.
\item If $r$ is even, there may also be moves of type $r/2$.
\end{itemize}

In any tree-like walk with such a sequence of move types, $W_i$ has length $\geq r$ whenever $0 < i < {2k}$, so that all moves are in fact legal.  In addition, because the number of moves of types $j$ and $r-j$ are always equal, the walk is necessarily closed; i.e, in total over all moves, we will add and remove the same number of edges.  

In order to count the walks with such move sequences, we need to prove a lemma.  The well-known binomial identity reveals that \begin{align*}
\sum_{i_1 + \dots i_k = n}\binom{n}{i_1,\dots, i_k}x_1^{i_1}\dots x_k^{i_k} = \parens{\sum_{i=1}^k x_i}^n.\end{align*}  In our results, we will use similar sums, except, instead of summing over all partitions of $n$, we sum over only even partitions of $n$.  (We require that $n$ be even so that such a partition is possible.)  In this lemma we bound the sum over even partitions in terms of the binomial identity.

\begin{lemma} \label{lemma_binom}
Suppose $x_1,\dots,x_k \geq 0$ and $2n$ is a non-negative even integer.  Then

\begin{align*}\sum_{2m_1+\dots 2m_k = 2n} \binom{2n}{2m_1,\dots, 2m_k}x_1^{2m_1}\dots x_k^{2m_k}\geq \frac{1}{2^{k-1}} \parens{\sum_{i=1}^k x_i}^{2n},\end{align*}

where the first sum is over all $k$-tuples of non-negative even integers that sum to $2n$.

\end{lemma}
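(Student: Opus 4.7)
My plan is to prove the inequality via a standard character-sum (sign-flipping) trick that isolates even-exponent terms in a multinomial expansion.

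First I would observe the identity
\[
\sum_{\epsilon \in \{-1,+1\}^k} \left(\sum_{i=1}^k \epsilon_i x_i\right)^{2n} = \sum_{m_1+\cdots+m_k = 2n} \binom{2n}{m_1,\ldots,m_k} x_1^{m_1}\cdots x_k^{m_k} \sum_{\epsilon \in \{-1,+1\}^k} \epsilon_1^{m_1}\cdots \epsilon_k^{m_k}.
\]
The inner sum over $\epsilon$ equals $2^k$ when every $m_i$ is even and vanishes otherwise, so this reduces to
\[
\sum_{\epsilon \in \{-1,+1\}^k} \left(\sum_{i=1}^k \epsilon_i x_i\right)^{2n} = 2^k \sum_{2m_1+\cdots+2m_k = 2n} \binom{2n}{2m_1,\ldots,2m_k} x_1^{2m_1}\cdots x_k^{2m_k}.
\]
Thus the left-hand side of the lemma equals $2^{-k}\sum_{\epsilon} \bigl(\sum_i \epsilon_i x_i\bigr)^{2n}$.

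Next I would single out the two sign patterns $\epsilon = (+1,\ldots,+1)$ and $\epsilon = (-1,\ldots,-1)$. Each contributes $\bigl(\sum_i x_i\bigr)^{2n}$, the second because $2n$ is even. Together these two terms contribute $2\bigl(\sum_i x_i\bigr)^{2n}$.

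Finally, for every other sign pattern $\epsilon$, the term $\bigl(\sum_i \epsilon_i x_i\bigr)^{2n}$ is nonnegative because the exponent $2n$ is even, so dropping these terms only decreases the sum. This yields
\[
\sum_{\epsilon \in \{-1,+1\}^k} \left(\sum_{i=1}^k \epsilon_i x_i\right)^{2n} \ge 2\left(\sum_{i=1}^k x_i\right)^{2n},
\]
and dividing by $2^k$ gives the claimed bound $\tfrac{1}{2^{k-1}}\bigl(\sum_i x_i\bigr)^{2n}$. There is no real obstacle here; the only thing to be careful about is that the assumption $x_i \ge 0$ is not actually needed for the inequality (only that $2n$ is even), but it is consistent with the intended use of the lemma in counting tree-like walks.
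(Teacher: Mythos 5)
Your proof is correct and is essentially identical to the paper's argument: the paper also sums $\bigl(\sum_i j_i x_i\bigr)^{2n}$ over all sign vectors $j \in \{-1,1\}^k$ to isolate the even-exponent terms, then lower-bounds that sum by the two contributions from $j \equiv 1$ and $j \equiv -1$, using nonnegativity of the remaining terms. Your side remark that $x_i \ge 0$ is not actually needed (only evenness of the exponent) is a correct observation, but otherwise there is nothing to add.
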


\begin{proof}

Consider vectors $j\in \{-1,1\}^k$.

\begin{align*}\sum_j \parens{\sum_{i=1}^k j_ix_i}^{2n} = 2^k\sum_{2m_1+\dots 2m_k = 2n} \binom{2n}{2m_1,\dots, 2m_k}x_1^{2m_1}\dots x_k^{2m_k}.\end{align*} Here, each term in the right-hand sum is counted once for each of the $2^k$ choices of $j$.  Any term of the binomial expansion that does not correspond to an even partition of $2n$ is positive for exactly $2^{k-1}$ values of $j$ and negative for the other $2^{k-1}$.  In the sum over $j$ such a term cancels.

Also, \begin{align*}\sum_j \parens{\sum_{i=1}^k j_ix_i}^{2n} \geq 2\parens{\sum_{i=1}^k x_i}^{2n},\end{align*} because the left-hand side is a sum of non-negative quantities that contains the right hand side twice: when $j \equiv 1$ and $j\equiv -1$.  Combining we find \begin{align*}2^k\sum_{2m_1+\dots 2m_k = 2n} \binom{2n}{2m_1,\dots, 2m_k}x_1^{2m_1}\dots x_k^{2m_k} \geq 2\parens{\sum_{i=1}^k x_i}^{2n},\end{align*} the result immediately follows.

\end{proof}

Finally we are able to give a lower bound on $t_{2k}^{(r)}$.  We will prove the bound separately in the cases that $r$ is odd and $r$ is even: the difference is that when $r$ is even we must consider moves of type exactly $r/2$.

\begin{proof}[Proof of Theorem \ref{thm_tree}]

First, suppose $r$ is odd.  In this case, the number of accepted sequences of move types is \begin{align*}
\sum_{n_0+\dots +n_{(r-1)/2} = k-1}\binom{2k-2}{2n_1,\dots, 2n_{(r-1/2)}}\prod_{j=0}^{(r-1)/2}C_{n_j},\end{align*} where $n_j$ is the number of times types $j$ (and $r-j$) appear in the sequence.  The total number of tree-like walks on $G^{(r)}$ that correspond to such sequences is
\begin{align*}
t_{2k}^{(r)}&\geq \sum_{n_0+\dots +n_{(r-1)/2}\atop = k-1}\binom{2k-2}{2n_0,\dots, 2n_{(r-1/2)}}\prod_{j=0}^{(r-1)/2}C_{n_j}\parens{\delta^{(j)}}^{n_j}\parens{\delta^{(r-j)}}^{n_j}\\
&=\parens{1+O\parens{\tfrac{1}{k}}} \sum_{n_01+\dots +n_{(r-1)/2} \atop = k-1}\binom{2k-2}{2n_0,\dots, 2n_{(r-1/2)}}\prod_{j=0}^{(r-1)/2}\frac{\parens{4\delta^{(j)}\delta^{(r-j)}}^{n_j}}{n_j^{3/2}\sqrt{\pi}}\\
& \geq \parens{1+O\parens{\tfrac{1}{k}}}\parens{\frac{r^3}{8k^3\sqrt{\pi}}}^{r/2}\sum_{n_0+\dots +n_{(r-1)/2}\atop = k-1}\binom{2k-2}{2n_0,\dots, 2n_{(r-1/2)}}\prod_{j=0}^{(r-1)/2}\parens{2\sqrt{\delta^{(j)}\delta^{(r-j)}}}^{2n_j}\\
& \geq \parens{1+O\parens{\tfrac{1}{k}}}\parens{\frac{r^3}{16k^3\sqrt{\pi}}}^{r/2}\parens{\sum_{j = 0}^{(r-1)/2}2\sqrt{\delta^{(j)}\delta^{(r-j)}}}^{2k-2}\\
& = \parens{1+O\parens{\tfrac{1}{k}}}\parens{\frac{r^3}{16k^3\sqrt{\pi}}}^{r/2}\parens{\sum_{j = 0}^{r}\sqrt{\delta^{(j)}\delta^{(r-j)}}}^{2k-2}.
\end{align*}

Here, the first equality is a standard approximation for $C_n$ and the last inequality is the first statement of Lemma \ref{lemma_binom}.  The result of Theorem \ref{thm_tree} follows (though it remains to prove the theorem when $r$ is even): \begin{align*}\parens{t_{2k}^{(r)}}^{1/(2k)}\geq \parens{1-o(1)}\sum_{j = 0}^{r}\sqrt{\delta^{(j)}\delta^{(r-j)}}.\end{align*}

If instead $r$ is even, the computations are complicated by the existence of moves of type $r/2$.  The number of accepted move type sequences is \begin{align*}
\sum_{2n_0+\dots +2n_{r/2-1} = 2k-2-n_{r/2}}\binom{2k-2}{2n_1,\dots, 2n_{r/2-1},n_{r/2}}\prod_{j=0}^{r/2-1}C_{n_j},\end{align*} where $n_j$ is the number of times types $j$ (and $r-j$) appear in the sequence.  Using almost the same argument as before, the number of tree-like walks on $G^{(r)}$ that correspond to such sequences is
\begin{align*}
t_{2k}^{(r)}&\geq \sum_{n_0+\dots +n_{r/2-1}\atop = k-1-(n_{r/2})/2}\binom{2k-2}{2n_0,\dots, 2n_{r/2-1},n_{r/2}}\parens{\prod_{j=0}^{r/2-1}C_{n_j}\parens{\delta^{(j)}}^{n_j}\parens{\delta^{(r-j)}}^{n_j}}\parens{\delta^{(r/2)}}^{n_{r/2}}\\
&=\parens{1+O\parens{\tfrac{1}{k}}} \sum_{n_0+\dots +n_{r/2-1} \atop = k-1-(n_{r/2})/2}\binom{2k-2}{2n_0,\dots, 2n_{r/2-1},n_{r/2}}\parens{\prod_{j=0}^{r/2-1}\frac{\parens{4\delta^{(j)}\delta^{(r-j)}}^{n_j}}{n_j^{3/2}\sqrt{\pi}}}\parens{\delta^{(r/2)}}^{n_{r/2}}\\
& \geq \parens{1+O\parens{\tfrac{1}{k}}}\parens{\frac{r^3}{8k^3\sqrt{\pi}}}^{r/2}\sum_{n_0+\dots +n_{r/2-1}\atop = k-1-(n_{r/2})/2}\binom{2k-2}{2n_0,\dots, 2n_{r/2-1},n_{r/2}}\\ &\parens{\prod_{j=0}^{r/2-1}\parens{2\sqrt{\delta^{(j)}\delta^{(r-j)}}}^{2n_j}}\parens{\delta^{(r/2)}}^{n_{r/2}}\\
& \geq \parens{1+O\parens{\tfrac{1}{k}}}\parens{\frac{r^3}{16k^3\sqrt{\pi}}}^{r/2}\parens{\delta^{(r/2)}+\sum_{j = 0}^{r/2-1}2\sqrt{\delta^{(j)}\delta^{(r-j)}}}^{2k-2}\\
& = \parens{1+O\parens{\tfrac{1}{k}}}\parens{\frac{r^3}{16k^3\sqrt{\pi}}}^{r/2}\parens{\sum_{j = 0}^{r}\sqrt{\delta^{(j)}\delta^{(r-j)}}}^{2k-2}.
\end{align*}

As in the case where $r$ is odd, the method is to use the standard approximation on $C_n$ and then to cite Lemma \ref{lemma_binom} in order to express the sum as a binomial expansion.  The result of Theorem \ref{thm_tree} follows: \begin{align*}\parens{t_{2k}^{(r)}}^{1/(2k)}\geq \parens{1-o(1)}\sum_{j = 0}^{r}\sqrt{\delta^{(j)}\delta^{(r-j)}}.\end{align*}

\end{proof}

Now, combining Theorems \ref{thm_lambda} and \ref{thm_tree}, we see that \begin{align*} \lambda_2(G^{(r)})\geq \parens{1-o(1)}\sum_{j = 0}^{r}\sqrt{\delta^{(j)}\delta^{(r-j)}} = \parens{1-o(1)}(r+1)\hat{d}_r\,^{r/2}(G),\end{align*} where the $o$ notation is in terms of $k = \tfrac{1}{2}\lceil \text{diam}(G)/r \rceil$.  Under the hypotheses of Theorem \ref{abp}, we have $\text{diam}(G_n) = \omega(1)$ and $r_n = \epsilon\cdot\text{diam}(G_n)$, so that $k_n = \tfrac{1}{2}\lceil 1/\epsilon \rceil$.  Letting $\epsilon \to 0$, we have $k_n = \omega(1)$.  The result of Theorem \ref{abp} follows.

\subsection{Proof of Lemma \ref{lemma_reg} and Lemma \ref{lemma_girth}}

\begin{proof}[Proof of Lemma \ref{lemma_reg}]

Let $G$ be a random $d$-regular graph on $n$ vertices.  Let $r>0$ and let $x\in V$.  We want to bound the probabilty of the induced subgraph $B_r(x)$ containing two different cycles.

Consider the set $W$ of nonbacktracking walks in $G$ that start at $x$ and have length at most $r$.  There are $|W| = O(d^r)$ such walks.  A cycle exists in $B_r(x)$ if there are two such walks - neither of which is an extension of the other by one step - whose endpoints are adjacent.

Take $w_1,\dots, w_4\in W$.  We approximate the probability of the event $E_{w_1,w_2,w_3,w_4}$: the event that $B_r(x)$ has two cycles: one cycle because $w_1$ and $w_2$ have adjacent endpoints and a different cycle because $w_3$ and $w_4$ have adjacent endpoints.  Given that $E_{w_1,w_2,w_3,w_4}$ requires two pairs of vertices to be adjacent, with the the endpoints of $w_3$ and $w_4$ making a different cycle than that formed from the union of $w_1$ and $w_2$, $P(E_{w_1,w_2,w_3,w_4}) = O(d^2/n^2)$.  Let $E_x$ be the union of $E_{w_1,w_2,w_3,w_4}$ over all quadruples in $W$, i.e., the event that two cycles exist in the $r$-neighborhood of $x$ in $G$.  As $|W| = O(d^r)$, $P(E_x) = O(d^{4r+2}/n^2)$.  Let $E$ be the probability of any vertex in $G$ having $2$ cycles in its $r$-neighborhood, i.e., $E = \bigcup_x E_x$.  It is clear that $P(E) = O(d^{4r+2}/n)$.  Choose $r$ so that $r < \tfrac{\log n - 2}{4\log d }$.  Following this choice, $P(E) = o_n(1)$, so with high probability no ball $B_r(x)$ around any vertex $x$ contains two cycles.
 

 In the high probability case $\overline{E}$, we can tightly bound $\delta^{(r)}$.  The assumption that $B_r(x)$ contains at most one cycle means that it closely resembles a tree - this makes the following computation of $\delta^{(r)}$ straightforward.
 
 Let $x\in V$.  There are $d(d-1)^{r-1}$ nonbacktracking walks of length $r$ starting at $x$.  For any $y\sim x$, exactly $(d-1)^{r-1}$ of those walks start $x,y,\dots$.  Because a vertex at distance $r$ from $x$ and $r-1$ from $y$ must be the endpoint of such a walk, there are at most $(d-1)^{r-1}$ of those vertices.  A cycle can occur in $B_r(x)$ in one of two ways:
 
 First, an even cycle means that there is some vertex $z$ so that there are $2$ nonbacktracking walks of length $d_G(x,z)$ from $x$ to $z$.  Write $i = d_G(x,z);$ there are $(d-2)(d-1)^{r-i-1}$ ways to extend those paths to length $r$ so that the endpoint is in $S_r(x)$.  For any of the $d(d-1)^{i-1}-2$ other nonbacktracking walks of length $i$, there are $(d-1)^{r-i}$ ways to extend them to distance $r$, each corresponding to a unique vertex in $S_r(x)$.  In total we find $|S_r(x)| = d(d-1)^{r-1}-d(d-1)^{r-i-1}$.  Because $i\geq 2$, $|S_r(x)| \geq d(d-1)^{r-1}-d(d-1)^{r-3}$ if there is an even cycle.
 
 Second, an odd cycle means that there are two adjacent vertices $z_1,z_2$ with $j = d_G(x,z_1) = d_G(x,z_2)$.  Each of those vertices can be extended to a nonbacktracking walk ending in a unique vertex of $S_r(x)$ in $(d-2)(d-1)^{r-j-1}$ different ways.  For any of the $d(d-1)^{j-1}-2$ other nonbacktracking walks of length $j$, there are $(d-1)^{r-j}$ ways to extend them to distance $r$, each corresponding to a unique vertex in $S_r(x)$.  In total we have $|S_r(x)| = d(d-1)^{r-1}-2(d-1)^{r-j-1}$. Because $j\geq 1$, it follows that $|S_r(x)|\geq d(d-1)^{r-1}-2(d-1)^{r-2}$.
 
 As there is at most one cycle which is either odd or even, we find the overall bound $|S_r(x)|\geq d(d-1)^{r-1}-2(d-1)^{r-2}$.  Because there are at most $(d-1)^{r-1}$ vertices in $S_r(x)$ at distance $r-1$ from $y$, we are left with $\delta^{(r)}\geq (d-1)^{r}-2(d-1)^{r-2}$.  It is clearly the case that $\delta^{(r)}\leq (d-1)^{r}$, so with high probability $\delta^{(r)} = \parens{1+o_d(1)}(d-1)^r$.

The result $\hat{d}_r(G) = \parens{1+o_d(1)}d$ follows.

\end{proof}

\begin{proof}[Proof of Lemma \ref{lemma_girth}]

Let $G$ be a $d$-regular Ramanujan graph with girth $g$.  Assume that $2r < g$.

It is straightforward to compute that $\delta^{(i)} = (d-1)^i$ for all values $0\leq i \leq r$.  Our results show that  $\lambda_2(A^{(r)})\geq (1-o(1))(r+1)(d-1)^{r/2}$.  We will show that this is tight up to a factor $1+o_d(1)$.  To do so, we will find an upper bound for $\lambda_2(A^{(r)})$.
Because $G$ is isomorphic to a $d$-regular tree in any $r$-neighborhood, there is a recursive formula for $A^{(r)}$:

$$A^{(r)} = AA^{(r-1)}-(d-1)A^{r-2}$$ if $2\leq r$.  The base cases are $A^{(0)} = I$ and $A^{(1)} = A+I$.

We briefly justify this recursion:  $[AA^{(r-1)}]_{ij}$ counts the number of neighbors $v$ of $j$ satsifying $d_G(i,v)\leq r-1$.  Because of the girth bound, this number is $1$ if $d_G(i,j) = r$ or $r-1$.  It is $0$ if $d_G(i,j) > r$ and $d$ if $d_G(i,j) \leq r-2$.  Because $[A^{(r)}]_{ij} = 1$ iff $d_G(i,j)\leq r$, we subtract $(d-1)A^{(r-2)}$ from the previous term.

It is easy to see that there is a sequence of polynomials $p^{(r)}$ so that $A^{(r)} = p^{(r)}(A)$ (though it requires some effort to compute $p^{(r)}$.)  If $\lambda_1\geq \lambda_2\geq\dots$ are the eigenvalues of $A$, then $p^{(r)}(\lambda_1), p^{(r)}(\lambda_2),\dots$ are the eigenvalues of $A^{(r)}$ (but note that these values are not necessarily ordered.)  The largest eigenvalue of $A^{(r)}$ is $p^{(r)}(d)$, achieved by eigenvector $v\equiv 1$.  Because all other eigenvalues of $A^{(r)}$ are $p^{(r)}(\lambda)$ for some $|\lambda |< 2\sqrt{d-1}$, $$\lambda_2(A^{(r)}) \leq \max_{|x|< 2\sqrt{d-1}} |p^{(r)}(x)|.$$

We must now compute $p^{(r)}$.  Given the recursive formula $p^{(r)}(x) = xp^{(r-1)}(x) - (d-1)p^{(r-2)}(x)$ with base cases $p^{(0)}(x) = 1, p^{(1)}(x) = x+1$, it is straightforward to write the solution:  if $\abs{x}< 2\sqrt{d-1}$, \begin{align*}p^{(r)}(x) &= \parens{\frac{1}{2} - \frac{i}{2}\frac{x/2 + 1}{\sqrt{(d-1)-x^2/4}}}\parens{\frac{x}{2} + i\sqrt{(d-1)-x^2/4}}^r\\ &+\parens{\frac{1}{2} + \frac{i}{2}\frac{x/2 + 1}{\sqrt{(d-1)-x^2/4}}}\parens{\frac{x}{2} - i\sqrt{(d-1)-x^2/4}}^r\end{align*}

Define $\theta = arccos\parens{\frac{x/2}{\sqrt{d-1}}}$.

\begin{align*}
p^{(r)}(x) &= \frac{1}{2}\parens{1-i\frac{1}{\sqrt{d-1}\sin\theta}-i\frac{\cos\theta}{\sin\theta}}\parens{d-1}^{r/2}\parens{\cos\theta + i\sin\theta}^r\\ &+ \frac{1}{2}\parens{1+i\frac{1}{\sqrt{d-1}\sin\theta}+i\frac{\cos\theta}{\sin\theta}}\parens{d-1}^{r/2}\parens{\cos\theta - i\sin\theta}^r.\\
&= \frac{(d-1)^{r/2}}{2}\parens{1-i\frac{1}{\sqrt{d-1}\sin\theta}-i\frac{\cos\theta}{\sin\theta}}\parens{\cos(r\theta)+ i\sin(r\theta)}\\&+\frac{(d-1)^{r/2}}{2}\parens{1-i\frac{1}{\sqrt{d-1}\sin(-\theta)}-i\frac{\cos(-\theta)}{\sin(-\theta)}}\parens{\cos(-r\theta)+ i\sin(-r\theta)}\\
& = \parens{d-1}^{r/2}\parens{\cos(r\theta)+\frac{\sin(r\theta)}{\sqrt{d-1}\sin\theta}+ \frac{\cos\theta\sin(r\theta)}{\sin\theta}}.
\end{align*}

It follows that \begin{align*}\abs{p^{(r)}(x)}\leq \parens{d-1}^{r/2}\parens{1+ \frac{r}{\sqrt{d-1}} + r} = \parens{1+o_d(1)}(r+1)(d-1)^{r/2}.\end{align*}  Holding $r$ as a constant, our lower bound $\lambda_2(A^{(r)})= (1+o_d(1)) (r+1)d^{r/2}$ is tight.  This completes the proof.
\end{proof}


\newcommand{\etalchar}[1]{$^{#1}$}
\providecommand{\bysame}{\leavevmode\hbox to3em{\hrulefill}\thinspace}
\providecommand{\MR}{\relax\ifhmode\unskip\space\fi MR }
\providecommand{\MRhref}[2]{%
  \href{http://www.ams.org/mathscinet-getitem?mr=#1}{#2}
}
\providecommand{\href}[2]{#2}

\end{document}